\newcolumntype{C}[1]{>{\centering\arraybackslash}p{#1}}
\declaretheorem[numberwithin=section]{theorem}
\declaretheorem[sibling=theorem, style=definition]{definition}
\declaretheorem[sibling=theorem]{lemma}
\declaretheorem[sibling=theorem]{claim}
\declaretheorem[sibling=theorem]{corollary}
\declaretheorem[sibling=theorem, style=definition]{remark}
\declaretheorem[sibling=theorem]{proposition}
\declaretheorem[sibling=theorem, name={(Failed) Proposition}]{false-proposition}
\patchcmd{\ALG@step}{\addtocounter{ALG@line}{1}}{\refstepcounter{ALG@line}}{}{}
\newcommand{\ALG@lineautorefname}{Line}
\newcommand{\R}{\mathbb{R}}
\newcommand{\poly}{\operatorname{poly}}
\newcommand{\E}[1]{\mathds{E}\left[{#1}\right]}
\newcommand\Ps@textstyle[2]{\mathbb{P}_{#1}\left[{#2}\right]}
\newcommand\Es@textstyle[2]{\mathbb{E}_{#1}\left[{#2}\right]}
\newcommand\Ps[2]{%
  \mathchoice %
  {\underset{{#1}}{\mathbb{P}}\left[{#2}\right]}
  {\Ps@textstyle{#1}{#2}}
  {\Ps@textstyle{#1}{#2}}
  {\Ps@textstyle{#1}{#2}}
}
\newcommand\Es[2]{%
  \mathchoice %
  {\underset{{#1}}{\mathbb{E}}\left[{#2}\right]}
  {\Es@textstyle{#1}{#2}}{\Es@textstyle{#1}{#2}}{\Es@textstyle{#1}{#2}}
}
\newcommand{\notewithcolorinitials}[3]{\colorbox{#1}{\color{white}#2:}{\textcolor{#1}{\ #3}}}
\newcommand{\Menu}{\mathsf{Menu}}
\newcommand{\Alloc}{\Sigma}
\newcommand{\V}{{\mathcal{V}}}
\newcommand{\upr}{{\mathrm{upper}}}
\newcommand{\lwr}{{\mathrm{lower}}}
\newcommand{\tax}{{\mathrm{tax}}}
\newcommand{\price}{{\mathrm{price}}}
\newcommand{\up}{{\mathrm{up}}}
\newcommand{\low}{{\mathrm{low}}}
\newcommand{\vb}{\boldsymbol}
\newcommand{\SubAdd}{\mathsf{SubAdd}}
\newcommand{\XOS}{\mathsf{XOS}}
\newcommand{\SingleM}{\mathsf{SingleM}}
\newcommand{\cc}{\operatorname{cc}}
\newcommand{\vectr}[1]{\mathbf{{#1}}}
\newcommand{\mattnote}[1]{\notewithcolorinitials{blue}{MW}{}}
\newcommand{\ctnote}[1]{\notewithcolorinitials{orange}{CT}{}}
\newcommand{\srnote}[1]{\notewithcolorinitials{purple}{SR}{}}
\newcommand{\todonote}
[1]{\notewithcolorinitials{gray}{TODO}{}}
\newcommand{\qznote}[1]{\notewithcolorinitials{teal}{QZ}{}}
\colorlet{purple}{purple!120!}
\newcommand{\email}[1]{\href{mailto:#1}{#1}}
\title{Communication Separations for Truthful Auctions: Breaking the Two-Player Barrier}
\date{}
\author{
  Shiri Ron\thanks{Weizmann Institute of Science. Email: \email{shiriron@weizmann.ac.il}.}
  \and
  Clayton Thomas\thanks{Microsoft Research. E-mail: \email{clathomas@microsoft.com}.}
  \and
  S. Matthew Weinberg\thanks{Princeton University. E-mail: \email{smweinberg@princeton.edu}. Supported by NSF CCF-1955205.}
  \and
  Qianfan Zhang\thanks{Princeton University. E-mail: \email{qianfan@princeton.edu}. Supported by NSF CCF-1955205.}
}
\begin{document}

\maketitle

\begin{abstract}

We study the communication complexity of truthful combinatorial auctions, and in particular the case where valuations are either subadditive or single-minded, which we denote with $\SubAdd\cup\SingleM$.
We show that for three bidders with valuations in $\SubAdd\cup\SingleM$, any deterministic truthful mechanism that achieves at least a $0.366$-approximation requires $\exp(m)$ communication. In contrast, a natural extension of~\cite{Fei09} yields a non-truthful $\poly(m)$-communication protocol that achieves a $\frac{1}{2}$-approximation, demonstrating a gap between the power of truthful mechanisms and non-truthful protocols for this problem.

Our approach follows the taxation complexity framework laid out in \cite{D16b}, but applies this framework in a setting not encompassed by the techniques used in past work. In particular, the only successful prior application of this framework uses a reduction to simultaneous protocols which only applies for \emph{two} bidders \cite{AKSW20}, whereas our three-player lower bounds are stronger than what can possibly arise from a two-player construction (since a trivial truthful auction guarantees a $\frac{1}{2}$-approximation for two players).

\end{abstract}

\section{Introduction}
\label{sec:intro}
\paragraph{Background: Combinatorial Auctions.} 
Combinatorial Auctions are a paradigmatic problem at the intersection of Economics and Computation; see e.g. ~\cite{LehmannOS02,LLN01-journal,NS06,MirrokniSV08,DNS05-journal,DSS15,AssadiKS21} and many others. Here, an auctioneer has $m$ items to allocate among $n$ bidders, where each bidder $i$ has a combinatorial valuation $v_i$ over subsets of items (that is, $v_i: 2^{[m]}\rightarrow \mathbb{R}_{\geq 0}$), and seeks to do so in a way that maximizes the \emph{welfare}. That is, the auctioneer seeks to partition the items into $n$ subsets, $A_1,\ldots, A_n$, so as to maximize $\sum_i v_i(A_i)$. The challenge is that only Bidder $i$ knows her valuation $v_i(\cdot)$, and therefore: (a) the auctioneer must communicate with the bidders to learn sufficient information about $v_i(\cdot)$ to find a high-welfare allocation, and (b) the bidders may strategically manipulate the protocol, and therefore the auctioneer must further design a \emph{truthful} communication protocol.\footnote{See~\autoref{sec:prelims} for a formal definition -- informally, a protocol is truthful if every player is incentivized to follow the protocol.}

Subject only to constraint (a), the problem is fairly well understood for several canonical valuation classes.
For example, $\poly(m)$-communication (not necessarily truthful) protocols are known to achieve:
an asymptotically tight $\Theta(\nicefrac{1}{\sqrt{m}})$-approximation for arbitrary monotone valuations \cite{LS05,NS06},
a tight $\nicefrac{1}{2}$-approximation for subadditive valuations~\cite{Fei09}, a tight $(1-\nicefrac{1}{e})$-approximation is known for XOS valuations~\cite{Fei09, DNS05-journal}, and the tight constant for submodular valuations is known to lie in $[1-\nicefrac{1}{e}+\nicefrac{1}{10^{5}}, 1-\nicefrac{1}{2e}]$ \cite{DV13,FeigeV10,vondrak-thesis}.
Thus, for all of these problems, the optimal approximation guarantee of poly-communication non-truthful protocols are well-understood.

Subject only to constraint (b), the problem can be solved optimally by the classical Vickrey-Clark-Groves (VCG) mechanism~\cite{Vickrey61, Clarke71, Groves73}; however, this solution requires exponential communication for any of the above-mentioned valuation classes.\footnote{The VCG mechanism reduces non-truthful protocols to truthful ones, but only for protocols that \emph{exactly} optimize welfare. 
Thus, the VCG mechanism makes the problem easy in any setting where describing a valuation function only takes polynomially-many bits. 
Richer classes, like the four mentioned, require exponentially-many bits to describe, and require high communication for exact welfare maximization, thus motivating approximations.}
Subject to \emph{both} constraints (a) and (b), the problem is still quite poorly understood despite significant effort over the past two decades.
For example, the state-of-the-art deterministic truthful mechanisms with $\poly(m)$-communication achieve approximation guarantees of $\Omega(\nicefrac{\ln(m)}{m})$ for arbitrary monotone valuations, and just $\Omega(\sqrt{\nicefrac{\ln(m)}{m}})$ for each of subadditive/XOS/submodular valuations~\cite{QiuW24}.
Thus, these canonical settings each have a $\widetilde{\Theta}(\sqrt{m})$ gap in approximation guarantees for state-of-the-art truthful vs.~non-truthful protocols. 
Despite these massive gaps, corresponding impossibility results remain starkly rare; indeed, for each of the above examples, no lower bounds for truthful mechanisms are known beyond those that also hold for non-truthful protocols.
Thus, for each of these canonical settings, \emph{it remains unknown whether there should be any gap at all}! 

This question has received significant attention since~\cite{NS06} introduced the study of communication complexity for combinatorial auctions, and it is generally conjectured that the more significant missing piece is stronger impossibility results. Developing such impossibility results necessarily requires leveraging truthfulness, which imposes significant technical barriers. For example, a natural first attempt would be to characterize all truthful auctions in a given setting, perhaps a l\`{a} Roberts' Theorem~\cite{Rob79}, and then prove impossibility results for communication-efficient auctions using this classification.
\cite{LMN03,DN11-journal} push the classification approach roughly to its limit, but bizarre deterministic truthful mechanisms exist and it appears that any classification attempt may be intractable (see  the discussion in \cite{D11}, for example).

Interest in the question revitalized when~\cite{D16b} proposed an alternative framework termed \emph{Taxation Complexity}. 
This framework does not attempt to classify truthful mechanisms, 
but merely identifies one key complexity measure that bounds the communication complexity of a truthful auction: namely,
the logarithm of the maximum number of menus presented to a player, also called the taxation complexity. 
The Taxation Complexity framework further implies the following corollary: communication lower bounds on \emph{two-player simultaneous (non-truthful) protocols} imply communication lower bounds on \emph{two-player interactive truthful mechanisms} (whereas simultaneous lower bounds certainly do not generally imply interactive lower bounds for non-truthful protocols~\cite{Y79}). 
Even this corollary proved challenging to leverage, but~\cite{AKSW20} later established an impossibility of $\nicefrac{3}{4}-\nicefrac{1}{240}$ for two XOS bidders, which exhibits the first separation of truthful and non-truthful protocols, since there is a non-truthful protocol that gives a 
 $\nicefrac{3}{4}$-approximation for two XOS bidders~\cite{Fei09, DNS05-journal}.

\cite{AKSW20} remains to-date the only known separation between what is achievable by $\poly(m)$-communication truthful mechanisms and $\poly(m)$-communication (non-truthful) protocols. Yet, this separation holds only for two bidders. Indeed, while a $\poly(m)$-communication non-truthful protocol can guarantee a $\nicefrac{3}{4}$-approximation for two XOS bidders, the best possible guarantee for even three XOS bidders is $\nicefrac{18}{27} < \nicefrac{3}{4}-\nicefrac{1}{240}$.\footnote{
    Moreover,~\cite{BMW18} design a simultaneous protocol for two XOS bidders achieving an approximation guarantee of $\nicefrac{23}{32} > \nicefrac{18}{27}$, so no lower bound via two-player simultaneous protocols can possibly achieve a separation for more than $2$ players.}
That is, while we now know that two-bidder $\poly(m)$-communication protocols achieve strictly better guarantees than two-bidder $\poly(m)$-communication truthful mechanisms, it is still plausible that truthful mechanisms are just as powerful as non-truthful protocols for $n \geq 3$ XOS bidders. Moreover, the approach of \cite{AKSW20} heavily leverages the two-bidder corollary of~\cite{D16b}: their result follows from a communication lower bound on simultaneous protocols, which does not imply anything about interactive truthful mechanisms for $n \geq 3$ bidders.

\paragraph{Our Contributions.}
In our main result, we provide the first separation of $\poly(m)$-communication truthful vs.~non-truthful protocols for more than $2$ bidders. Specifically, we consider the class of bidders that are either Single-Minded or Subadditive (and term the class $\SubAdd \cup \SingleM$). We show that for this class, any deterministic truthful mechanism beating a $\nicefrac{(\sqrt{3}-1)}{2}\approx 0.366$ approximation requires $\exp(m)$ communication, whereas a simple extension of~\cite{Fei09} achieves a $\nicefrac{1}{2}$-approximation with a non-truthful protocol.\footnote{
  We also include novel results on related classes of valuations with two bidders; see \autoref{sec:roadmap}.
}

\paragraph{Brief Overview of Approach.}
We leverage the Taxation Complexity framework of~\cite{D16b}, which hinges around the concept of a \emph{menu}. 
The menu for Bidder $i$ of an auction $\mathcal A$ is a function $M = \Menu_i^{\mathcal{A}}(v_{-i})$, parameterized by $v_{-i}$, that takes as input $v_i$ and outputs the set of items (and price paid) that Bidder $i$ gets on input $(v_i, v_{-i})$. 
That is, a menu fixes the valuations of bidders other than $i$, and stores the impact that Bidder $i$'s valuation $v_i$ has on $i$'s own allocation and prices. 

Informally speaking, one can think of the results of \cite{D16b} as showing that, in terms of the communication-efficiency of truthful auctions (in sufficiently rich settings), the auction {might as well be implemented by fully learning a bidder's menu}, and then allocating to that bidder according to their menu and their valuation.
Indeed, in these settings, \cite{D16b} proves that for any truthful auction, the communication complexity of the auction is at least (some polynomial function of) the communication complexity of learning one bidder's menu.\footnote{
  \cite{D16b} provides other ways to bound the communication complexity of truthful auctions, as we discuss below.
}
To prove our main result, we establish that in order for a truthful mechanism to beat a $0.366$-approximation for $\SubAdd \cup \SingleM$,
the Communication Complexity of learning a bidder's menu must be exponential in the number of items $m$.
The Communication Complexity of all truthful mechanisms achieving the same approximation ratio then follows via results from~\cite{D16b}. 

We defer technical details, but now briefly give intuition for the role of both Subadditive bidders and the Single-Minded bidder, and how the $0.366$-approximation factor arises. A construction of~\cite{EFNTW19} establishes that exponential communication is necessary in order for a (not necessarily truthful) protocol to beat a $\nicefrac{1}{2}$-approximation for two subadditive bidders. So intuitively, truthful welfare-maximizing auctions face the following challenge.
Suppose there is one Single-Minded and two Subadditive bidders.
Imagine that the maximum possible welfare between the two Subadditive bidders is $c$. Then if, based on the two Subadditive bidders, the price of set $S$ for the Single-Minded bidder is set to $p(S)$, it could be that:
\begin{itemize}
    \item Perhaps the Single-Minded bidder has interest set $S$, has value barely exceeding $p(S)$ (and therefore chooses to purchase set $S$), and yet the maximum welfare achievable between the two Subadditive bidders for $\overline{S}$ is $\approx 0$. Therefore, we could have welfare as bad as $\approx p(S)$ when the optimum is $c$.
    \item Perhaps the Single-Minded bidder has interest set $S$, has value barely below $p(S)$ (and therefore chooses not to purchase anything), and yet the maximum welfare achievable between the two Subadditive bidders for $\overline{S}$ is also $c$. Moreover, even when allocating all items to the two Subadditive bidders, we don't expect to achieve welfare greater than $c/2$ without $\exp(m)$-communication, based on the results of \cite{EFNTW19}. Therefore, we could have welfare as bad as $c/2$ when the optimum is $c+p(S)$.
\end{itemize}
Therefore, the best ratio we can hope to achieve is $\min\{\frac{p(S)}{c}, \frac{c/2}{c+p(S)}\}$, which (it turns out) is at most $\frac{\sqrt{3}-1}{2}\approx 0.366$ no matter how we set $p(S)$. 

Of course, this is just intuition---we need an actual construction where the above arguments hold even after the bidders have engaged in polynomial communication. For example, if we know the Single-Minded Bidder is interested in a particular set $S$, then it is trivial to determine whether the maximum welfare achievable between the two Subadditive bidders for $\overline{S}$ is $\approx 0$ or $\approx c$. %
Nevertheless, the high-level intuition of our proof closely follows the outline above.
In particular, the role of the two Subadditive Bidders in the is to: (a) have uncertainty regarding whether $\overline{S}$ can generate non-trivial welfare between them, and (b) require $\exp(m)$ communication to beat welfare $c/2$ allocating any set of items to them. The role of the Single-Minded Bidder is for a clean and direct analysis of what the price of a single set $S$ might mean for the resulting allocation, thus allowing us to reason about the menu presented to the Single-Minded Bidder.

\subsection{Roadmap and Conclusions}
\label{sec:roadmap}

We give preliminaries in \autoref{sec:prelims}. 
In \autoref{sec:proof-prelim}, we give a ``warmup'' to our main result, namely, we give a false proof attempting to implement the intuitive hardness outline discussed above, but leaving a gap in one step of the proof.
After explaining this gap and the main ideas used to fix it, in \autoref{sec:real-proof}, we prove our main result: a welfare approximation lower bound of $0.366$ for poly-communication truthful auctions for three bidders with valuations in $\SubAdd\cup \SingleM$.
We then observe that this gives a separation between truthful auctions and non-truthful protocols, by showing that non-truthful protocols can achieve a higher welfare approximation of $\frac{1}{2}$.\footnote{
    Additionally, we note that our 
    results get a separation for any $n=O(\log n)$ bidders (since our upper bound of a $\nicefrac{1}{2}$-approximation extends to any $n=O(\log n)$ bidders, and not only $3$; see \autoref{lemma::black-box-upper-bound}).
    This is in contrast to \cite{AKSW20}: while they prove upper and lower bounds yielding a separation for $2$ bidders, there is no known separation with $3$ or more bidders in their setting (since, while their lower bounds trivially also holds for more bidders, their upper bound degrades as the number of bidders increases). 
}

In \autoref{section::two-player}, we study auctions for two bidders in the class of valuations  $\XOS\cup \SingleM$.
Observe that
to obtain a separation for two-bidder mechanisms,  it is necessary to consider a \textquote{smaller} class than $\SubAdd\cup \SingleM$.\footnote{The reason for it is that 
    for $\SubAdd \cup \SingleM$, 
    there is provably no gap between the power of communication efficient truthful mechanisms and non-truthful protocols for two bidders, as the second-price auction on the grand bundle is truthful, communication-efficient, and $1/2$-approximates the social welfare, which  is optimal even for non-truthful protocols \cite{EFNTW19}.
    Thus, to show a gap for two bidders, it is necessary to consider a \textquote{smaller} class.
} 
We show that for two bidders in the class of valuations  $\XOS\cup \SingleM$, the communication complexity of every truthful mechanism that has an approximation better than $\nicefrac{(\sqrt{5}-1)}{2} \approx 0.618$ is $\exp(m)$. We show this bound by two different proofs based on two different approaches from \cite{D16b}.
Finally, in \autoref{subsec::difficulties-xos} we also discuss difficulties for generalizing it to a stronger lower bound for more than two $\XOS\cup\SingleM$ bidders.
Holisticly, our results in \autoref{section::two-player} illustrate
additional ways to achieve impossibility results, and in fact served as stepping stones en-route to our main result.

We conclude by restating that our main result is the first separation between the approximation guarantees of $\poly(m)$-communication truthful and non-truthful combinatorial auctions beyond two bidders for any class of valuations. Still, the major open problem of whether $\poly(m)$-communication deterministic truthful mechanisms can achieve an approximation guarantee better than $\widetilde{\Omega}(\sqrt{m})$ remains open. An obvious direction for future work is to make further progress, extending our result either by removing the need for a single-minded bidder, or by achieving super-constant lower bounds for more bidders. It is also important to investigate whether our techniques open doors for similar results with other canonical valuation classes (e.g., submodular valuations, arbitrary monotone valuations), or the related setting of multi-unit auctions.

\subsection{Related Work}

\paragraph{Communication Complexity of Deterministic Truthful Combinatorial Auctions.} The most related work to ours concerns the Communication Complexity of Truthful Combinatorial Auctions. Here, the best approximation ratios are guaranteed by ``VCG-Based'' (also called ``Maximal-in-Range'') mechanisms~\cite{HolzmanKMT04,DNS05-journal,DN07a-journal,QiuW24}, which are a factor of $\widetilde{\Theta}(\sqrt{m})$ worse than those of the best non-truthful protocols~\cite{LehmannOS02, Fei09, DNS05-journal, FeigeV10} for the canonical settings of Submodular, XOS, Subadditive, and Monotone. It is plausibly conjectured that the aforementioned VCG-based mechanisms are optimal among deterministic truthful mechanisms, which would imply that strong separations between truthful and non-truthful protocols exist.

The lone prior separation is~\cite{AKSW20}.
This separation relies on a result of \cite{D16b} which reduces the problem to showing impossibilities for two-bidder simultaneous protocols; this reduction does not extend beyond two bidders.%
\footnote{
    \cite{DobzinskiNO14,DobzinskiRV22} also prove lower bounds for simultaneous combinatorial auctions. However, their results hold only for a large number of bidders.
}
In contribution to this line of works, we provide the first separation for $> 2$ bidders and directly via the Taxation Complexity framework~\cite{D16b} in a manner that is not restricted to a particular number of bidders.
Finally, we also note that the particular notion of truthfulness we consider is the standard ``Ex-Post Nash,'' meaning that it is always a Nash equilibrium for Bidders to follow the protocol, no matter their valuations. A stronger notion of truthfulness asks that it is a ``Dominant Strategy'' to follow the protocol, and~\cite{
RubinsteinSTWZ21, DobzinskiRV22} establish  separations between the achievable guarantees of $\poly(m)$-communication Dominant Strategy Truthful mechanisms and non-truthful protocols.\footnote{
    The distinction between Ex-Post Nash and Dominant-Strategy Truthful is that the former need only incentivize each bidder to behave truthfully when the other bidders are truthful \emph{according to some, but possibly arbitrary, valuation}, whereas the latter faces the much stiffer task of incentivizing each bidder to be truthful \emph{even when other bidders are behaving in a bizarre manner inconsistent with any valuation function.}
}

\paragraph{Communication Complexity of Randomized Truthful Combinatorial Auctions.} There is also a significant line of work developing randomized truthful combinatorial auctions. Here, the state-of-the-art approximation guarantees are much better, $\poly(1/\log \log m)$ for Submodular, XOS, and Subadditive~\cite{AssadiKS21} (building upon~\cite{AssadiS19, Dobzinski16a, Dobzinski07, KrystaV12}), and $\Theta(\sqrt{m})$ for arbitrary monotone valuations~\cite{DNS06-journal}. The latter is asymptotically tight, whereas it remains a major open problem whether the former can be improved to a constant.

\paragraph{Computational Complexity of Truthful Combinatorial Auctions.} A significant body of work also considers the computational complexity of truthful Combinatorial Auctions. This line of research indeed concludes strong separations between the constant-factor approximations achievable by poly-time algorithms~\cite{LLN01-journal, MirrokniSV08, Vondrak08} and poly-time truthful mechanisms~ for Submodular valuations~\cite{D11, DobzinskiV12a, DobzinskiV12b, DobzinskiV16, DughmiV11}. These results establish an $\widetilde{\Omega}(\sqrt{m})$ lower bound on the approximation guarantees of randomized poly-time truthful mechanisms. The simple posted-price mechanisms of the prior paragraph `break' these bounds because they use demand queries, and~\cite{CaiTW20} identifies a relaxed notion of ``truthfulness'' under which these mechanisms achieve their guarantees in poly-time. This counterintuitive interaction between computation and incentives further motivates the communication model for combinatorial auctions.

\section{Preliminaries}
\label{sec:prelims}
\paragraph{Combinatorial auctions.}
Recall that in a combinatorial auction, there are $m$ heterogeneous items and $n$ bidders. 
We denote items by $j \in [m] = \{1,2,\ldots,m\}$, and bidders by $i \in [n]$.
The set of allocations of items to the $n$ bidders, i.e., sets $(A_1,\ldots,A_n)$ with $A_i \subseteq [m]$ and $A_i\cap A_j = \emptyset$ for all $i \ne j$, is denoted by $\Alloc$.
Each bidder $i$ holds a private valuation function $v_i: 2^{[m]}\to \mathbb{R}$ which is drawn from some set $\V_i$. The sets $\V_i$ differ depending on the auction problem considered; we discuss different canonical cases below.
We assume that all valuations are monotone non-decreasing ($v_i(S)\le v_i(T)$ for $S \subseteq T$) and normalized ($v_i(\emptyset)=0$). 
The goal is to find an allocation of items $(A_1,\dots,A_n) \in \Alloc$ that (approximately) maximizes the social welfare $\sum_{i \in [n]} v_i(A_i)$.

An {\em auction rule}, equivalently known as a direct-revelation mechanism, is a function
$\mathcal{A} : \V_1\times \cdots\times \V_n\to \Alloc \times \R^n$, that maps a valuation profile 
$(v_1,\ldots,v_n)$ to an outcome $\big( (A_1,\ldots,A_n), p_1,\ldots,p_n \big)$, where $S_i$ is the allocation of player $i$ and $p_i$ is her payment.
The auction rule can also be equivalently be defined as a tuple $(f, p_1,\ldots,p_n)$, where the allocation is determined via the function $f: \V_1\times \cdots\times \V_n\to \Alloc$ and the payments are determined via the functions $p_i: \V_1\times \cdots\times \V_n\to \R$ for each player $i\in[n]$.
When no confusion can arise, we do not distinguish between these two representations of the auction rule.
We denote by $f_i(v_1,\dots,v_n)$ the bundle that bidder $i$ receives in $f(v_1,\dots,v_n)$.
When bidder $i$ receives bundle $S_i$ and is charged payment $p_i$, bidder $i$'s utility is $v_i(S_i)-p_i$. 
We assume that bidders are rational and strategic, and thus aim to maximize their utility.

We say that an auction rule is {\em truthful} (for the relevant classes of valuations $\V_1,\ldots,\V_n$)  if it satisfies the following:
for every bidder $i$, for every two valuations $v_i, v_i' \in \V_i$ for bidder $i$, and valuations $v_{-i}\in\V_{-i}$ for other players,
$$v_i(f_i(v_i,v_{-i}))-p_i(v_i,v_{-i}) 
\ge v_i(f_i(v_i',v_{-i}))-p_i(v_i',v_{-i}).$$
Intuitively, this means truthfully-reporting is always $i$'s best strategy.\footnote{ 
    Note that, following most of the algorithmic mechanism design literature, we consider a mechanism truthful so long as the underlying auction rule is truthful, i.e., our definition is independent of the communication protocol used to implement the mechanism.
    In technical jargon, this means we we study implementations of truthful auction rules in ex-post equilibria, i.e., \emph{ex-post} incentive-compatible auction rules, as opposed to the stronger notion of  {\em dominant-strategy} incentive compatibility (where no bidder would regret acting according the true value even if other bidders act in a way which is inconsistent with any valuation function, see e.g. \cite{RubinsteinSTWZ21, DobzinskiRV22}). 
    Note that working with a weaker notion of truthfulness makes our lower bounds only stronger.
}
An auction rule gives an \emph{$\alpha$-approximation to the optimal welfare} (for the valuation classes $\V_1,\ldots,\V_n$) if for every valuation profile $(v_1,\ldots,v_n)\in \V_1\times\cdots\times\V_n$:
$$ \sum_{i=1}^n v_i(f_i(v_1,\ldots,v_n))
 \ge \alpha \cdot \max_{\substack{(S_1,\ldots,S_n) \\ \in \Alloc}} \sum_{i=1}^n v_i(S_i).
$$

\paragraph{Communication complexity and protocols.}
We study the communication complexity of implementing auction rules (which, we recall, calculate both the allocation and the payments charged).
Technically, we study the $n$-player deterministic number-in-hand blackboard communication model, where the input known to each player $i$ is her valuation function $v_i \in \V_i$ and all the messages sent are visible to all the bidders. 
The communication cost of a protocol is the maximum number of bits that are written to the blackboard in the worst case. The communication complexity of some problem, denoted with $\cc(\cdot)$,   is the minimum communication cost of a protocol that computes it.
For clarity, we phrase our results (e.g.,) as ``for any truthful auction rule $\mathcal{A}$ for valuations $\V$ achieving an $\alpha$-approximation to the optimal welfare, the communication complexity of $\mathcal{A}$ is at least $C$''. It is well known and easy to see that the communication complexity of a protocol is at least the log of the number of transcripts; see e.g. \cite{KN97}. Our lower bounds will be based on this fact.

In our main results, we discuss three-player protocols for three-bidder auctions. 
We refer to these three bidders as Alice, Bob, and Charlie, and we denote 
$i \in \{A, B, C\}$ for shorthand
(so that, for example, these bidders' valuation functions are $v_A, v_B, v_C$).

\paragraph{Valuation classes.}
As mentioned above, different combinatorial auction problems are defined by different classes of bidder valuations; typically, one studies valuations with some canonical property.\footnote{
    Formally, specifying a class of valuation functions for use in a communication protocol requires specifying a number of items, and a range / precision of possible numerical values.
    Formally, one can say that a class of valuations uses \emph{precision $k$} if for all $v$ in the class and all bundles $S$, we have $v(S) \in \{0\} \cup \{ x / 2^k \hspace{0.25em} | \hspace{0.25em} x\in [2^{2k}] \}$.
    Following most of the algorithmic mechanism design literature, we leave $k$ implicit in all our communication complexity bounds.
    Formally, this means that we always hide factors of $\poly(k)$ in the communication costs of the protocols we construct, and that all of our lower bounds / impossibility theorems hold for some $k = \poly(m, n)$. For more discussion of the issue of representation of numbers, see  \cite[Appendix A.4.1]{D16b} and our \autoref{app:uniform}.
    \label{fn:precision-preliminaries}
}
Our main result concerns the following two classes. 

\begin{definition}[Single-minded valuations]
    A valuation function $v:2^{[m]}\to \mathbb{R}$ is {\em single-minded} if there exists a weight $w\in \R_{\ge 0}$ and a set $T \subseteq [m]$ such that $v(S)=w \cdot \mathrm{I}[S \supseteq T]$ for all $S \in 2^{[m]}$, where $\mathrm{I}[\cdot]$ denotes the indicator function. We denote the family of all single-minded valuation functions over $m$ items by $\SingleM = \SingleM_m$. 
\end{definition}

\begin{definition}[Subadditive valuations]
    A valuation function $v:2^{[m]}\to \mathbb{R}$ is {\em subadditive} if, for all bundles $S, T \subseteq [m]$, we have $v(S\cup T) \le v(S) + v(T)$. We denote the family of all subadditive valuation functions over $m$ items by $\SubAdd = \SubAdd_m$. 
\end{definition}

$\SingleM$ is in some sense the most basic class of valuations with economic complementarities---the bidder gets positive utility $w$ if and only if she gets her desired bundle $T$.
On the other hand, $\SubAdd$ is often considered the most general canonical class of valuations \emph{without} complementarities---when two bundles are combined, the bidder's utility can never increase beyond the sum of her value on the individual bundles.

\paragraph{Black-box welfare approximation reduction with $O(\log m)$ single minded bidders.}
In our paper, we focus on showing a separation between the power of communication-efficient truthful auctions and their non-truthful counterparts for the class $\SubAdd\cup \SingleM$. However, the known non-truthful protocols are defined only for subadditive bidders. Fortunately, we can easily extend the existent protocols to take care of single-minded bidders without loss in the approximation guarantee, via the following black-box reduction:
\begin{lemma}\label{lemma::black-box-upper-bound}
    Let $\mathcal P$ be a protocol that achieves an approximation $\alpha$ for $n$ bidders with valuations in the class $\mathcal C$ with $\poly(m)$ bits. Then, there exists a protocol $\mathcal P'$ that achieves an approximation $\alpha$ for $n$ bidders with valuations in the class $\mathcal C\cup \SingleM$ with $\poly(m,2^n)$ bits.
\end{lemma}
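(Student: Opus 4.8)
The plan is to reduce the problem with $\mathcal C \cup \SingleM$ bidders to the problem with only $\mathcal C$ bidders, by having the protocol $\mathcal P'$ ``guess'' the right combinatorial structure for the single-minded bidders at the cost of an exponential-in-$n$ (but still $\poly(m)$ for $n = O(\log m)$) blow-up. First I would partition the $n$ bidders into the set $I_{\mathcal C}$ of those whose valuation lies in $\mathcal C$ and the set $I_{SM}$ of those whose valuation is single-minded; note each single-minded bidder $i \in I_{SM}$ can, with $\poly(m)$ communication, announce her interest set $T_i$ and weight $w_i$ (she simply writes them on the blackboard — this is a non-truthful protocol, so there is no incentive issue). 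So the only real content is: once the sets $(T_i)_{i \in I_{SM}}$ are known, how do we allocate well?

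The key observation is that, in any allocation, each single-minded bidder $i$ either receives value $w_i$ (if $A_i \supseteq T_i$) or value $0$. So an optimal allocation is characterized by the \emph{subset} $W \subseteq I_{SM}$ of single-minded bidders who are ``satisfied'': conditioned on $W$, the items $\bigcup_{i \in W} T_i$ are removed (assigned to the satisfied single-minded bidders, contributing $\sum_{i \in W} w_i$ to the welfare), and the remaining items $R_W := [m] \setminus \bigcup_{i \in W} T_i$ must be allocated among the bidders in $I_{\mathcal C}$ to maximize $\sum_{i \in I_{\mathcal C}} v_i(A_i)$. Here I would use that restricting a valuation in $\mathcal C$ to a sub-universe of items keeps it in $\mathcal C$ — this holds for all the canonical classes ($\SubAdd$, $\XOS$, $\SubMod$, monotone), and it is the one place where a mild assumption on $\mathcal C$ is (implicitly) needed. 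The protocol $\mathcal P'$ therefore does the following: for each of the $2^{|I_{SM}|} \le 2^n$ choices of $W$, run $\mathcal P$ on the $\mathcal C$-bidders restricted to the item set $R_W$ to get an allocation with welfare $\ge \alpha \cdot \mathrm{OPT}_{\mathcal C}(R_W)$, add $\sum_{i\in W} w_i$, and finally output the choice of $W$ (and accompanying allocation) achieving the largest total.

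For correctness: let $(A_1^*,\ldots,A_n^*)$ be a welfare-optimal allocation, let $W^*$ be its set of satisfied single-minded bidders, and observe that $\mathrm{OPT} = \sum_{i \in W^*} w_i + \sum_{i \in I_{\mathcal C}} v_i(A_i^*) \le \sum_{i \in W^*} w_i + \mathrm{OPT}_{\mathcal C}(R_{W^*})$, since $(A_i^* \cap R_{W^*})_{i \in I_{\mathcal C}}$ — equivalently $(A_i^*)_{i\in I_{\mathcal C}}$, which already avoids the removed items — is a feasible allocation of $R_{W^*}$ to the $\mathcal C$-bidders. When $\mathcal P'$ processes the branch $W = W^*$ it obtains welfare at least $\sum_{i\in W^*} w_i + \alpha\cdot \mathrm{OPT}_{\mathcal C}(R_{W^*}) \ge \alpha\big(\sum_{i\in W^*} w_i + \mathrm{OPT}_{\mathcal C}(R_{W^*})\big) \ge \alpha\cdot\mathrm{OPT}$ (using $\alpha \le 1$), and since $\mathcal P'$ outputs the best branch, it achieves an $\alpha$-approximation overall. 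For the communication cost: announcing the $(T_i, w_i)$ costs $O(nm)$ plus precision bits, and we invoke $\mathcal P$ a total of $2^{|I_{SM}|} \le 2^n$ times, each costing $\poly(m)$, for a total of $\poly(m, 2^n)$ bits. (If one wants $\mathcal P'$ to be a single protocol with a fixed transcript bound rather than sequential invocations, simply run the $2^n$ branches one after another on the blackboard; the worst-case cost is still $2^n \cdot \poly(m) = \poly(m,2^n)$.)

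The only genuine subtlety — and the step I would be most careful about — is the implicit requirement that $\mathcal C$ be closed under restriction to sub-universes of items, so that ``run $\mathcal P$ on the $\mathcal C$-bidders restricted to $R_W$'' makes sense and is covered by $\mathcal P$'s guarantee; I would state this as a (satisfied-in-all-cases-of-interest) hypothesis or simply note it holds for $\SubAdd$, $\XOS$, etc. A secondary bookkeeping point is handling the precision/representation of numbers (the $w_i$'s and the values output by $\mathcal P$), but as noted in \autoref{fn:precision-preliminaries} these contribute only $\poly(k) = \poly(m,n)$ factors and are absorbed into the $\poly(m,2^n)$ bound.
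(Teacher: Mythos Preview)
Your approach is essentially the same as the paper's: have the single-minded bidders announce $(T_i,w_i)$, enumerate subsets $W$ of them to satisfy, run $\mathcal P$ on the $\mathcal C$-bidders over the remaining items $R_W$, and output the best branch. One small oversight: you iterate over all $2^{|I_{SM}|}$ subsets, but for $W$ with overlapping $\{T_i\}_{i\in W}$ the claimed contribution $\sum_{i\in W}w_i$ is not actually achievable, so ``output the branch with the largest total'' could select an infeasible $W$ whose accompanying allocation has strictly lower true welfare; the paper avoids this by restricting to \emph{feasible} $K$ (pairwise-disjoint desired sets) and by having bidders report their actual received values before comparing---with that fix your correctness argument, applied to the always-feasible branch $W^*$ coming from the optimal allocation, matches the paper's.
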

Briefly, the protocol $\mathcal{P}'$ proceeds by asking each bidder whether they are single-minded, and running protocol $\mathcal{P}$ separately for each of the $\le 2^n$ sub-problem assuming that a given subset of single-minded bidders are allocated.
We defer the full proof of \autoref{lemma::black-box-upper-bound} to \autoref{app:missing-proofs}.

\subsection{Menus}

Following the seminal work of~\cite{D16b}, our approach to studying the communication complexity of truthful auctions centers around the notion of bidders' \emph{menus}.
Given a truthful mechanism $\mathcal A$, bidder $i$'s \emph{menu} given the valuations $v_{-i}$ of the other players specifies a price $p_S$ for every subset of items $S\subseteq[m]$. 
Thus, if player $i$ wins $S$, then she pays $p_S$;
the fact that such a $p_S$ is well-defined follows immediately from truthfulness, via \autoref{prop-taxation-principle} below.
We use the following notation for menus:

\begin{definition}[Menus]
For any truthful auction rule $\mathcal{A} = (f, p_1,\ldots,p_n)$, any player $i$, and any valuation profile $v_{-i}\in\V_{-i}$, define the \emph{menu} of player $i$ as the function $\Menu_i^{\mathcal{A}}(v_{-i}): 2^{[m]} \to \R$ such that, for all bundles $S \subseteq {[m]}$ such that there exists a $v_i$ with $f_i(v_i,v_{-i})=S$,
we have $\Menu_i^{\mathcal A}(v_{-i})(S) = p_i(v_i, v_{-i})$. We often write $\Menu(v_{-i})$ where the mechanism and the player presented with the menu are clear from context. 
\end{definition}

Note that the menu is well-defined only if for every player $i$, and for every valuation profile $v_{-i}\in \V_{-i}$ of the other players, 
the price a bidder is charged can depend only on the bundle $S$ that player $i$ receives, and cannot vary with valuation $v_i$ (so long as player $i$ still receives bundle $S$). Fortunately, this fact follows immediately from truthfulness,
goes back to at least \cite{Hammond79}, and (following \cite{G81}) is known as the taxation principle:

\begin{proposition}[Taxation Principle]\label{prop-taxation-principle}
Consider a truthful auction rule $\mathcal{A} = (f, p_1,\ldots,p_n):\mathcal V_1\times\cdots\times \mathcal V_n \to \Sigma\times \mathbb R^n$.
For every bidder $i \in [n]$, every valuation profile $v_{-i} \in \V_{-i}$, and every bundle $S\subseteq [m]$, if $f_i(v_i,v_{-i})=f_i(v_i',v_{-i})=S$ for two valuations $v_i,v_i'\in \mathcal V_i$, then $p_i(v_i,v_{-i})=p_i(v_i',v_{-i})$.
\end{proposition}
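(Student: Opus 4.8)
The plan is to obtain the claim directly from the definition of truthfulness, applied twice with the roles of $v_i$ and $v_i'$ exchanged. First I would instantiate the truthfulness inequality with true valuation $v_i$ and reported valuation $v_i'$ (keeping $v_{-i}$ fixed), which gives
$$v_i(f_i(v_i,v_{-i})) - p_i(v_i,v_{-i}) \ge v_i(f_i(v_i',v_{-i})) - p_i(v_i',v_{-i}).$$
By the hypothesis $f_i(v_i,v_{-i}) = f_i(v_i',v_{-i}) = S$, both allocation terms equal $v_i(S)$ and cancel, so this collapses to $p_i(v_i,v_{-i}) \le p_i(v_i',v_{-i})$.

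Next I would instantiate the truthfulness inequality the other way, with true valuation $v_i'$ and reported valuation $v_i$. Symmetrically, using $f_i(v_i',v_{-i}) = f_i(v_i,v_{-i}) = S$, the allocation terms $v_i'(S)$ cancel and one obtains $p_i(v_i',v_{-i}) \le p_i(v_i,v_{-i})$. Combining the two inequalities gives $p_i(v_i,v_{-i}) = p_i(v_i',v_{-i})$, which is the desired conclusion; this also justifies that $\Menu_i^{\mathcal{A}}(v_{-i})(S)$ is well-defined.

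There is essentially no real obstacle in this argument — it is a two-line manipulation. The only point that requires (trivial) care is that truthfulness must be legitimately invoked in \emph{both} directions, i.e.\ that $v_i$ and $v_i'$ both belong to $\mathcal{V}_i$ (so that each of them is a valid ``true valuation'' and also a valid ``misreport''); this is exactly part of the hypothesis of the proposition. Everything else is immediate cancellation.
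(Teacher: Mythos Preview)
Your proposal is correct and is exactly the standard argument the paper has in mind: the paper does not spell out a proof but simply asserts that the proposition ``follows immediately from truthfulness,'' citing \cite{Hammond79,G81}. Your two applications of the truthfulness inequality with the roles of $v_i$ and $v_i'$ exchanged is precisely this immediate derivation.
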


Note also that truthfulness directly implies that for all $(v_i, v_{-i})$, the bundle of items $i$ receives is (one of) the bundles $S$ that maximizes $v_i(S) - M(S)$, where $M = \Menu_i(v_{-i})$.

A property of menus that will be useful is that we can assume that they are be non-decreasing and normalized without loss of generality, just like valuation functions. 
\begin{proposition}[Menu monotonicity~\cite{D16b}]\label{prop-menu-mono}
Let $\mathcal{A}=(f,p_1,\ldots,p_n):\V_1\times\cdots\times\V_n\to\Alloc\times\mathbb{R}^n$ be any deterministic truthful {auction rule}. There exists some other mechanism $\mathcal{A}'=(f,p_1',\ldots,p_n'):\V_1\times\cdots\times\V_n\to\Alloc\times\mathbb{R}^n$ with the same allocation function $f$, such that for every $i\in [n], v_{-i} \in \V_{-i}$, the menu $\Menu_i^{\mathcal{A}'}(v_{-i})$ for bidder $i$ is non-decreasing, i.e., $\Menu_i^{\mathcal{A}'}(v_{-i})(S) \le \Menu_i^{\mathcal{A}'}(v_{-i})(T)$ for $S \subseteq T$, and normalized, i.e., $\Menu_i^{\mathcal{A}'}(v_{-i})(\emptyset)=0$.
\end{proposition}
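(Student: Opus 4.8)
The plan is to keep the allocation rule $f$ and, for each bidder $i$ and each profile $v_{-i}\in\V_{-i}$ of the others, replace the menu $M:=\Menu_i^{\mathcal A}(v_{-i})$ by its ``monotone closure'' and then subtract a constant depending only on $v_{-i}$. Call a bundle $S\subseteq[m]$ \emph{achievable} (for $i$ at $v_{-i}$) if $S=f_i(v_i,v_{-i})$ for some $v_i\in\V_i$; these are exactly the bundles on which $M$ is defined. The only features of $\mathcal A$ I will use are truthfulness (equivalently, the fact recorded after \autoref{prop-taxation-principle} that $i$ always receives a utility-maximizing bundle on its menu) and the standing assumption that every valuation in $\V_i$ is monotone non-decreasing. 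First I would observe that the original menu is already ``locally ordered'' on achievable bundles: if $S\subseteq T$ are both achievable and $v_i$ witnesses $f_i(v_i,v_{-i})=S$, then $v_i(S)-M(S)\ge v_i(T)-M(T)$ while $v_i(T)\ge v_i(S)$, and combining gives $M(S)\le M(T)$.

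Next I would define $\overline M:2^{[m]}\to\R\cup\{+\infty\}$ by $\overline M(S)=\min\{\,M(T):T\supseteq S,\ T\text{ achievable}\,\}$ (to keep $\overline M$ real-valued, one may first pad the menu with the grand bundle $[m]$ at a prohibitively high price, which changes neither $f$ nor the outcomes, so that every $S$ has an achievable superset). Since the family $\{T:T\supseteq S\}$ shrinks as $S$ grows, $\overline M$ is non-decreasing; and by the observation above the minimum defining $\overline M(S)$ is attained at $T=S$ whenever $S$ is achievable, so $\overline M$ agrees with $M$ on every achievable bundle, and $\overline M(\emptyset)=\min\{M(T):T\text{ achievable}\}$ is a minimum over a nonempty finite set, hence finite. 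The point is that this extended menu still implements $f$: for any $v_i$, writing $S^*:=f_i(v_i,v_{-i})$ and choosing for an arbitrary $U$ with $\overline M(U)<\infty$ an achievable $T_U\supseteq U$ attaining $\overline M(U)=M(T_U)$, we get
\begin{align*}
v_i(U)-\overline M(U)&= v_i(U)-M(T_U)\ \le\ v_i(T_U)-M(T_U)\\
&= v_i(T_U)-\overline M(T_U)\ \le\ v_i(S^*)-\overline M(S^*),
\end{align*}
using monotonicity of $v_i$, then $\overline M=M$ on achievable bundles, then utility-maximization of the original allocation. Hence the rule with allocation $f$ and payments $p_i''(v_i,v_{-i}):=\overline M_{v_{-i}}(f_i(v_i,v_{-i}))$ is truthful and has allocation $f$; in fact its payments agree with those of $\mathcal A$ on every input, so this step only exhibits a non-decreasing \emph{extension} of each menu.

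Finally I would normalize: set $\mathcal A'=(f,p_1',\dots,p_n')$ with $p_i'(v_i,v_{-i}):=\overline M_{v_{-i}}(f_i(v_i,v_{-i}))-\overline M_{v_{-i}}(\emptyset)$. Subtracting from bidder $i$'s menu a quantity that depends only on $v_{-i}$ does not change which bundle maximizes $i$'s utility, so truthfulness and the allocation $f$ are preserved; and since each $p_i'$ is defined purely from $f$ and the original menu of $i$, these modifications can be made for all bidders simultaneously. The resulting menu is $\overline M_{v_{-i}}-\overline M_{v_{-i}}(\emptyset)$, which is non-decreasing and vanishes at $\emptyset$, as required. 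The step I expect to need the most care is the ``locally ordered'' observation and its use in showing $\overline M=M$ on achievable bundles: it is exactly monotonicity of the valuation class that prevents the closure from disturbing any realized price (and hence any outcome), and the claim would be false without it. The rest---padding with a dummy grand bundle versus permitting $+\infty$ prices, and checking the simultaneity of the per-bidder changes---is routine bookkeeping.
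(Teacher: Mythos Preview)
The paper does not give its own proof of this proposition; it is stated in the preliminaries and attributed to \cite{D16b}. Your argument is correct and is essentially the standard one: the key point is your ``locally ordered'' observation that truthfulness together with monotonicity of valuations forces $M(S)\le M(T)$ for achievable $S\subseteq T$, after which the monotone closure leaves achievable prices (and hence $f$) untouched, and the normalization by $\overline M(\emptyset)$ is harmless since it depends only on $v_{-i}$.

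One small remark: your step about ``padding the menu with the grand bundle $[m]$ at a prohibitively high price'' is a bit loose, since achievability is defined via the fixed allocation rule $f$ and you are not allowed to alter $f$. The cleaner phrasing is simply to allow $\overline M(S)=+\infty$ on bundles with no achievable superset (no valuation will ever prefer such a bundle, so it has no effect on truthfulness or on which bundle is allocated), or equivalently to note that the proposition only constrains $\Menu_i^{\mathcal A'}(v_{-i})$ on bundles that are actually realized by $f$, and any monotone, normalized extension to the remaining bundles suffices. This is exactly the ``routine bookkeeping'' you flagged, and it does not affect the substance of your proof.
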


\subsection{Lower Bounds through Menus} \label{subsec::taxation}

\cite{D16b} gives several techniques for lower bounding the communication complexity of truthful auctions; all of these techniques hinge around the concept of the menu. 
To get some intuition as to how this works, observe that  one way to implement a truthful combinatorial auction is to select a player $i$, run an $(n-1)$-player communication protocol among the bidders other than $i$ to determine $i$'s menu $M = \Menu_i(v_{-i})$, then query $i$ to determine $i$'s highest-utility bundle according to $M$.\footnote{
    Note that this intuitive sketch ignores potentially-complex issues of tie-breaking; these considerations are handled in full detail in \cite{D16b}.
}
Informally, \cite{D16b} shows that something quite special happens for truthful auctions (in sufficiently rich domains), namely, 
truthful auctions can do \emph{essentially no better} in terms of communication-efficiency than this approach of explicitly learning bidders' menus.

The above intuition is actually formalized multiple different ways in \cite{D16b}, with a large emphasis on the Taxation Complexity, which counts (the log of) the number of distinct menus.
We recall this technique %
(and the reduction to simultaneous protocols exploited in \cite{AKSW20}) when we need them in \autoref{section::two-player}. 
To prove our main result, we use the following result from \cite{D16b}, which gives an even stronger lower bound technique than Taxation Complexity:

\begin{theorem}[Follows from \cite{D16b}]
\label{lemma:menu-reconstruction-cc}
    Consider any deterministic, truthful auction rule $\mathcal{A}$ for $m$ items and $n\ge 2$ bidders with valuation functions in $\mathcal{V}$.
    Suppose $\mathcal{V} \supseteq \SubAdd$.
    Let $\cc(\mathcal A)$ be the communication complexity of $\mathcal A$.
    Then, for any player $i \in [n]$, there exists an $(n-1)$-player protocol $\mathcal{P}$ with $\poly( \cc(\mathcal A),m,n)$ bits of communication that computes (the index of) the menu $\Menu_i^{\mathcal A}(v_{-i})$.
\end{theorem}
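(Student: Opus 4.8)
The plan is to obtain this as the ``menu reconstruction'' consequence of Dobzinski's taxation-complexity machinery \cite{D16b}: it is strictly stronger than the Taxation Complexity bound, but proved by the same toolkit, and the only thing special about our setting is that $\V\supseteq\SubAdd$ supplies the rich family of valuations those theorems require.

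First I would set up the transcript--menu correspondence. Fix a deterministic blackboard protocol $\pi$ computing $\mathcal{A}$ with $\cc(\pi)=\cc(\mathcal{A})$, and fix player $i$. Since $\pi$ outputs the entire outcome (allocation and all payments) and the outcome is a function of the transcript, each transcript $t$ of $\pi$ determines a bundle $S_i(t)$ and a payment $p_i(t)$ for player $i$. For a fixed $v_{-i}$, let $T(v_{-i})=\{\pi(v_i,v_{-i}):v_i\in\V_i\}$, so $|T(v_{-i})|\le 2^{\cc(\mathcal{A})}$. By the taxation principle (\autoref{prop-taxation-principle}), $S_i(t)=S_i(t')$ forces $p_i(t)=p_i(t')$ within $T(v_{-i})$, so the achievable entries of $\Menu_i^{\mathcal A}(v_{-i})$ are exactly $\{(S_i(t),p_i(t)):t\in T(v_{-i})\}$, and by menu monotonicity (\autoref{prop-menu-mono}) the full menu may be taken to be the monotone closure of this set. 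Hence $\Menu_i^{\mathcal A}(v_{-i})$ is a function of $T(v_{-i})$ and of the fixed $\pi$ alone --- the rest of the argument is about controlling and recovering $T(v_{-i})$.

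Second I would isolate the role of $\SubAdd$: it supplies ``probe'' valuations that force $\pi$ to reveal menu entries. Concretely, if $v_{-i}$ and $v_{-i}'$ induce distinct monotonized menus $M\ne M'$, pick (using \autoref{prop-menu-mono}) an inclusion-minimal bundle $S$ on which they disagree, say $M(S)=p<p'=M'(S)$; since $M$ and $M'$ then coincide on all strict subsets of $S$, one can exhibit a subadditive $v_i^{S}$ --- essentially concentrated on $S$, with value tuned strictly between $p$ and $p'$ --- such that player $i$'s best response under $M$ is to buy a superset of $S$ at price $\le p$ for strictly positive utility, whereas under $M'$ no such purchase is profitable, so $i$ either takes a different bundle or takes $S$ at price $p'$. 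Either way $\mathcal{A}$'s outcome for player $i$ differs between $(v_i^{S},v_{-i})$ and $(v_i^{S},v_{-i}')$, hence $\pi(v_i^{S},v_{-i})\ne\pi(v_i^{S},v_{-i}')$. So distinct menus are always separated by $\pi$ on a suitable subadditive probe; verifying precisely that $\SubAdd$ (and not only $\SingleM$) is rich enough for this is the concrete hypothesis-checking I would carry out, with the exact probe constructions and accounting taken from \cite{D16b}.

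Finally I would assemble the reconstruction protocol $\mathcal{P}$: the $n-1$ players holding $v_{-i}$ simulate $\pi$ while playing the absent player $i$ with probe valuations as above, adaptively binary-searching over probe bundles and weights, so that after $\poly(\cc(\mathcal{A}),m,n)$ bits they have learned enough achievable entries of $\Menu_i^{\mathcal A}(v_{-i})$ to pin down its monotone closure, hence its index. The step where I would simply invoke \cite{D16b} --- and the main obstacle --- is making this efficient: $\mathcal{P}$ must reconstruct player $i$'s whole menu, an object indexed by $2^{[m]}$, using only polynomially many simulated rounds of $\pi$ rather than a naive traversal of the $\le 2^{\cc(\mathcal{A})}$-leaf reachable subtree, and must simultaneously guarantee that only $2^{\poly(\cc(\mathcal{A}),m,n)}$ distinct menus can arise (so that ``the index of the menu'' is a well-defined, polynomially long object). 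This efficiency-and-counting core is exactly what the taxation-complexity analysis of \cite{D16b} provides; our only extra obligation is the hypothesis check from the previous paragraph.
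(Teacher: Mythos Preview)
Your proposal is essentially correct and follows the same route as the paper --- both ultimately defer the real work to \cite{D16b} --- but the paper's proof is more modular and pinpoints exactly which three black boxes are being invoked, whereas you partially re-sketch their internals. Specifically, the paper cites: (i) the menu-reconstruction theorem \cite[Theorem~3.1]{D16b}, which holds for \emph{every} domain and gives that the communication complexity of computing the menu is $\poly(\tax(\mathcal A),\price(\mathcal A),m,n)$; (ii) \cite[Proposition~2.3]{D16b}, giving $\tax(\mathcal A)\le\cc(\mathcal A)$ whenever the valuation class contains $\SubAdd$ (this is where your ``subadditive probe'' argument lives, and it is already done in \cite{D16b}, so your ``extra obligation'' is in fact just observing that the proof there goes through verbatim for any $\V\supseteq\SubAdd$); and (iii) \cite[Proposition~F.1]{D16b}, giving $\price(\mathcal A)\le\cc(\mathcal A)$ whenever the class contains additive valuations. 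Your sketch conflates (i)--(iii) into a single ``efficiency-and-counting core'' and does not explicitly isolate the $\price$ ingredient; that ingredient is what makes each individual menu entry cheaply computable by simulating $\pi$ with a suitable additive probe, and is logically separate from the probe argument you give for bounding the number of menus. Nothing you wrote is wrong, but citing the three results directly is both shorter and makes clear that no new hypothesis-checking is actually required beyond noting $\SubAdd\subseteq\V$.
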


\autoref{lemma:menu-reconstruction-cc} is a direct corollary of \cite{D16b}'s Theorem 3.1, Theorem 2.3, and Proposition F.1. See the proof in \autoref{app:missing-proofs}.

\section{Building Up To Our Main Result} 
\label{sec:proof-prelim}
The main result of this paper is a lower bound on the communication complexity of getting good welfare in a three-bidder truthful auctions.
In particular, we will show that for the class of bidder valuations $\SubAdd\cup\SingleM$, any truthful auction for three bidders which gets $\frac{\sqrt{3}-1}{2}\approx 0.366$ approximation requires $\exp(m)$ communication.

In this section, we give exposition into the proof of this result via a simplified, but ultimately faulty, version of the proof which conveys much of the ideas (and how our real proof overcomes the faults of this expository version).

\subsection{Failed Proof Attempt}\label{subsec::failed-proof}

To give exposition into our proof, we first give a simpler ``failed attempt''. 
In this ``failed attempt'', there will be a gap in the proof (as we explain when the gap appears);  we will also gloss over techniques from previous works and claim without proof that a certain constructions exist.
In our actual proof, we will correct the gap, and also provide full details and proofs for all constructions.

\subsubsection{(Failed) proof outline}

Let $\mathcal{A}$ be a truthful auctions for $\SubAdd\cup\SingleM$ with three-bidder and $m$ items.
By \autoref{lemma:menu-reconstruction-cc}, the communication complexity of $\mathcal{A}$ is at most (a polynomial function of) the communication complexity of determining a bidder's menu.
Thus, our goal is to show that if $\mathcal{A}$ gets a good approximation to the optimal welfare, then the communication complexity of determining a bidder's menu is $\exp(m)$.

Our approach is inspired by the classical rectangle argument in communication complexity. This argument proves lower bounds on communication complexity by constructing a large set of inputs (the ``fooling set''), considering any pair of inputs in the fooling set, and showing that every protocol computing a function $f$ must use a different transcript for those two inputs. 
However, in order to make our argument work, it turns out we need to extend this approach to consider \emph{$4$-tuples} of inputs rather than pairs.

After constructing the valuations used in our hard instance, we lower bound the communication complexity of Bob and Charlie calculating Alice's menu. 
Our (failed) proof outline for this proceeds in two steps.
\begin{itemize}
    \item First, we show that, if a communication-efficient and truthful auction rule induces the same menu on a $4$-tuple of ``sufficiently different'' inputs, then the auction cannot get a good approximation to the optimal welfare.
    If such a $4$-tuple uses the same menu, we call this a ``bad'' $4$-tuple; for a precise definition, see below.
    This step uses direct arguments about our hard instance, and about the welfare obtainable under different menus and different possible valuations of Alice.

    \item Second, we consider any protocol for calculating Alice's menu which avoids all bad $4$-tuples, and show that all such protocols must have high communication.
    This step is a direct, simple lemma in communication complexity.
\end{itemize}

Combining these two steps finishes the (failed) proof.

\subsubsection{Description of the Class of Valuations}
\label{sec:simplified-construction}

We begin by defining the class of subadditive valuation functions that our simplified construction uses. 
This construction is based on exponentially many overlapping $4$-cell partitions of all of the items; say these partitions are denoted $\mathcal{G}_{i, 1} \cup \mathcal{G}_{i, 2} \cup \mathcal{G}_{i, 3} \cup \mathcal{G}_{i, 4} = [m]$ for many different {values of} $i$, say $i\in[K]$ for some large $K$.
The key property of the construction is that, for each way to select one cell of each partition, there is a valuation in this class who, for each partition $i$ simultaneously, wants \emph{only} elements of the selected cell of partition $i$. 
Intuitively, \cite{EFNTW19} construct valuations like this in order to hide an allocation with good welfare by reducing it to an equality problem---high welfare can be achieved among two bidders with valuations in this class if and only if there is some partition on which the bidders want different items.

Formally, for some family of partitions $\big\{ \{ \mathcal{G}_{i, 1} , \mathcal{G}_{i, 2} , \mathcal{G}_{i, 3} , \mathcal{G}_{i, 4} \} \big\}_{i,j}$, we define a class of valuations $\mathcal V$  such that there exists a constant $\ell$ such that for every pair of $i\in [K]$ and $j\in [4]$, there exists a valuation $v\in \mathcal V$ such that $v(\mathcal{G}_{i, j}) = \ell -1$ and $v([m] \setminus \mathcal{G}_{i, j}) = 1$.
We use strings $\vectr{b} \in [4]^K$ to denote ways to pick a cell of each partition, where $\vectr{b}[i]$ denotes the $i$th character in the string and $\vectr{b}[i] = j \in [4]$ denotes selecting cell $\mathcal{G}_{i, j}$ from partition $i$.

Formally, the class of valuation functions we consider is given by the following:

\begin{proposition}
    \label{thrm:valuation-class-first-attempt}
    Fix any constant $\ell$.
    For all sufficiently large $m$, and for some $K = \Omega(\exp(m))$, there exists:
    \begin{enumerate}
        \item a family of $4$-cell partitions $\{ \mathcal{G}_{i,j}\ |\ i \in [K], j\in [4] \}$ of $[m]$ (i.e., for all $i$, we have $\bigcup_{j\in[4]} \mathcal{G}_{i,j} = [m]$ and $\mathcal{G}_{i,j} \cap \mathcal{G}_{i,j'}$ for all $j\ne j'$), and
        \item a family of subadditive valuation functions $v_{\vectr{b}} : 2^{[m]} \to \R_{\ge 0}$ indexed by strings $\vectr{b}\in[4]^K$,     such that for all $\vectr{b}\in[4]^K$ and all $i\in[K]$:
    \begin{enumerate}
        \item $v_{\vectr{b}}( [m] ) = \ell$,
        \item $v_{\vectr{b}}( \mathcal{G}_{i, \vectr{b}[i]} ) = \ell - 1$, and
        \item $v_{\vectr{b}}( [m] \setminus \mathcal{G}_{i, \vectr{b}[i]} ) = 1$. 
    \end{enumerate}
    We name this class $\mathcal V_{\text{4-cell}}$.
     \end{enumerate}
\end{proposition}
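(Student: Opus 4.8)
The plan is to construct the family of partitions by the probabilistic method, and to take each $v_{\vectr b}$ to be a weighted minimum-cover valuation built from a small family of ``gadget sets'' determined by the cells selected by $\vectr b$; this is in the spirit of the two-cell subadditive construction of \cite{EFNTW19}, adapted to $4$-cell partitions. Throughout assume $\ell\ge 2$ (the case $\ell=1$ is degenerate). For each item $x\in[m]$ and each $i\in[K]$ draw a cell-label $c_i(x)\in[4]$ independently and uniformly at random and set $\mathcal G_{i,j}:=\{x : c_i(x)=j\}$, which is automatically a $4$-cell partition of $[m]$. The only thing we need is the following event, call it (A): \emph{for every $r\in\{1,\dots,\ell-1\}$ and all distinct indices $i_1,\dots,i_r\in[K]$ and cells $j_1,\dots,j_r\in[4]$ we have $\left\lvert \bigcap_{k=1}^r \mathcal G_{i_k,j_k} \right\rvert \ge \ell-r$, and in addition every single cell has size at most $m/3$}. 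Each intersection above has size distributed as $\mathrm{Binomial}(m,4^{-r})$ with mean $\Omega_\ell(m)$, so a Chernoff bound shows that any one of the $O_\ell(K^{\ell-1})$ relevant events fails with probability $e^{-\Omega_\ell(m)}$; since $\ell$ is a constant, a union bound shows that (A) holds with positive probability whenever $K=\exp(\Omega_\ell(m))=\Omega(\exp(m))$. Fix one such family of partitions.

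Given $\vectr b\in[4]^K$, write $T_i:=\mathcal G_{i,\vectr b[i]}$, and let $\mathcal D_{\vectr b}$ be the family of gadget sets consisting of each $T_i$ (weight $\ell-1$), each $[m]\setminus T_i$ (weight $1$), and each singleton $\{x\}$ (weight $1$). Define
\[
v_{\vectr b}(S)\;=\;\min\Bigl\{\, \textstyle\sum_{D\in\mathcal C} w(D) \;:\; \mathcal C\subseteq\mathcal D_{\vectr b},\; \textstyle\bigcup_{D\in\mathcal C}D\supseteq S \,\Bigr\}.
\]
As a minimum-cover valuation, $v_{\vectr b}$ is nonnegative, normalized ($v_{\vectr b}(\emptyset)=0$ via the empty cover), monotone, and subadditive (juxtaposing a cover of $S$ and a cover of $T$ yields a cover of $S\cup T$), so $v_{\vectr b}\in\SubAdd$, and we call the resulting class $\mathcal V_{\text{4-cell}}$. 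The upper bounds in (a)--(c) have immediate witnesses: $\{T_i,\ [m]\setminus T_i\}$ covers $[m]$ at cost $\ell$; the gadget $T_i$ covers $T_i$ at cost $\ell-1$; the gadget $[m]\setminus T_i$ covers $[m]\setminus T_i$ at cost $1$.

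It remains to prove the matching lower bounds. For $v_{\vectr b}([m]\setminus T_i)\ge1$ there is nothing to do, as every gadget has weight at least $1$ and $[m]\setminus T_i\ne\emptyset$. For $v_{\vectr b}(T_i)\ge\ell-1$ and $v_{\vectr b}([m])\ge\ell$, fix any cover $\mathcal C$ of the relevant set $S\in\{T_i,[m]\}$; when $S=T_i$ we may assume $[m]\setminus T_i\notin\mathcal C$, since that gadget is disjoint from $T_i$ and removing it only decreases the cost. Let $R\subseteq[K]$ index the complement-gadgets in $\mathcal C$ (so $i\notin R$ when $S=T_i$). Their union is $[m]\setminus\bigcap_{i'\in R}T_{i'}$, so the remaining gadgets of $\mathcal C$ — cell-gadgets (cost $\ell-1$ each) and singletons (each covering one element) — must cover $Q:=S\cap\bigcap_{i'\in R}T_{i'}$. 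If $\lvert R\rvert\ge\ell$ (or $\ge\ell-1$ when $S=T_i$) the complement-gadgets alone already cost enough, so assume otherwise; if moreover $R=\emptyset$ and $S=[m]$, then since every cell has size $\le m/3$ and no gadget equals $[m]$, covering $[m]$ with only cell-gadgets and singletons uses either two cell-gadgets (cost $\ge 2(\ell-1)\ge\ell$) or at least $2m/3$ singletons (cost $\ge\ell$). In all remaining cases, property (A) applied to $\{T_{i'}\}_{i'\in R}$ — and, when $S=T_i$, to that family together with the extra distinct cell $T_i$ — gives $\lvert Q\rvert\ge\ell-\lvert R\rvert$ (resp.\ $\lvert Q\rvert\ge\ell-\lvert R\rvert-1$); since covering the nonempty set $Q$ with cell-gadgets and singletons costs at least $\min(\ell-1,\lvert Q\rvert)$, we get $\cost(\mathcal C)\ge\lvert R\rvert+\min(\ell-1,\lvert Q\rvert)\ge\ell$ (resp.\ $\ge\ell-1$). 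This proves (a)--(c).

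The crux is exactly this last case analysis: pinning $v_{\vectr b}([m])$ and $v_{\vectr b}(T_i)$ down requires lower bounds not just on pairwise intersections of selected cells but on all $r$-wise intersections for $r$ up to $\ell-1$, and it is this family of $(\le\!\ell-1)$-wise constraints that caps how large $K$ can be taken (the union bound ranges over roughly $K^{\ell-1}$ events). This is harmless here because $\ell$ is a fixed constant, but it is the reason the statement fixes $\ell$ in advance. (An alternative to the self-contained argument above is to quote the analogous two-cell subadditive construction of \cite{EFNTW19} and split its cells.)
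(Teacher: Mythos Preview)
Your proof is correct. The paper itself does not prove this proposition: it simply asserts it, remarking that it follows from the construction of \cite{EFNTW19} together with a probabilistic-method argument, and defers the details to the stronger two-layered construction proved later (\autoref{lemma:large-family-existence} and \autoref{lemma::valuation-properties-EFNTW19}). So your self-contained argument is more than what the paper supplies at this point.

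That said, your route differs from what the paper sketches. The intended approach is to take, for each $\vectr b$, the collection $\mathcal S_{\vectr b}=\{[m]\setminus T_i\}_{i\in[K]}$, show it is $\ell$-sparse (which amounts only to the $(\ell-1)$-wise nonemptiness $\bigcap_{i\in I}T_i\neq\emptyset$ for $|I|=\ell-1$), and then invoke the \cite{EFNTW19} valuation $v_{\mathcal S}^\ell$ verbatim; properties (a)--(c) then come for free from \autoref{lemma::valuation-properties-EFNTW19}. You instead build a bespoke \emph{weighted} min-cover valuation with gadgets $T_i$ (weight $\ell-1$), $[m]\setminus T_i$ (weight $1$), and singletons (weight $1$), and verify (a)--(c) by a direct case analysis. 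This is more elementary in that it avoids appealing to the well-definedness and subadditivity of the EFNTW valuation, but it costs you the full family of $r$-wise intersection bounds for all $r\le\ell-1$ (your property (A)) rather than just the single case $r=\ell-1$. Since $\ell$ is a fixed constant, this does not change the asymptotics, and your closing remark already flags exactly this point. One minor quibble: your phrasing ``$K=\exp(\Omega_\ell(m))=\Omega(\exp(m))$'' conflates $\exp(\Omega(m))$ with $\Omega(\exp(m))$; the former is what you actually get (and what the statement intends).
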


\begin{figure}
    \centering

    \includegraphics[width=0.6\textwidth]{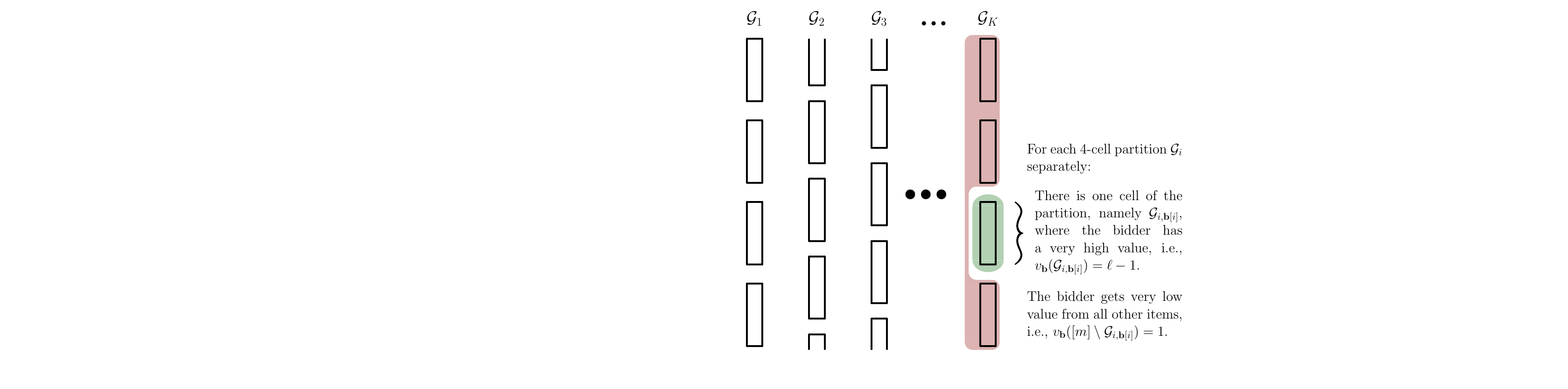}
    \caption{Illustration of valuations in \autoref{thrm:valuation-class-first-attempt}.}
        \label{fig::first-attempt}
\end{figure}

{For an illustration of the construction, see \autoref{fig::first-attempt}.} 
To prove this proposition, one can use the main construction of \cite{EFNTW19}, along with an argument using the probabilistic method.
We will later specify and formally prove an even stronger construction; for now, we assert this proposition without proof.

\cite{EFNTW19} use a construction analogous to the valuations in \autoref{thrm:valuation-class-first-attempt} to lower bound the communication complexity of maximizing welfare with two subadditive bidders.
\begin{theorem}[\cite{EFNTW19}]
    \label{thrm:efntw-hardness}
    For any communication protocol $\mathcal{P}$ that outputs an allocation that gives approximation better than $\frac{1}{2}$ for two bidders with valuations in $\mathcal{V}_{\text{4-cell}}$, the communication cost of $\mathcal{P}$ is $\exp(m)$ bits.
\end{theorem}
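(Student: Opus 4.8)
The plan is to reduce from two-party \textsc{Set-Disjointness} (equivalently, \textsc{Equality}) on a universe of size $K = \Omega(\exp(m))$, exploiting the structure promised by \autoref{thrm:valuation-class-first-attempt}. First I would establish the key welfare dichotomy: if Alice holds $v_{\vectr{a}}$ and Bob holds $v_{\vectr{b}}$ for strings $\vectr{a},\vectr{b}\in[4]^K$, then (i) if $\vectr{a}=\vectr{b}$, every allocation $(S,\bar S)$ has welfare at most roughly $\ell$ — this uses subadditivity, since $v_{\vectr{b}}(S)+v_{\vectr{b}}(\bar S)$ would need to beat $v_{\vectr{b}}([m])=\ell$ only by a bounded amount, and one checks that on the "aligned" instance no split does better than essentially $\ell$; whereas (ii) if $\vectr{a}\ne\vectr{b}$, there is some coordinate $i$ with $\vectr{a}[i]\ne\vectr{b}[i]$, and the allocation $S = \mathcal{G}_{i,\vectr{a}[i]}$, $\bar S = [m]\setminus \mathcal{G}_{i,\vectr{a}[i]} \supseteq \mathcal{G}_{i,\vectr{b}[i]}$ gives welfare $v_{\vectr{a}}(\mathcal{G}_{i,\vectr{a}[i]}) + v_{\vectr{b}}([m]\setminus\mathcal{G}_{i,\vectr{a}[i]}) \ge (\ell-1) + (\ell-1) = 2\ell-2$. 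Choosing $\ell$ a large enough constant makes the ratio $\ell / (2\ell-2)$ strictly below any fixed constant above $\tfrac12$; in particular a protocol guaranteeing better than a $\tfrac12$-approximation (with a suitable choice of $\ell$, better than $\tfrac{\ell}{2\ell-2}$, which tends to $\tfrac12$) must distinguish case (i) from case (ii).

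Next I would convert this into a communication lower bound via the standard log-rank / fooling-set route as laid out in the preliminaries: restrict attention to the "diagonal" inputs $\{(v_{\vectr{b}}, v_{\vectr{b}})\}_{\vectr{b}\in[4]^K}$ as a candidate fooling set for the promise problem. If a protocol $\mathcal{P}$ with transcript function $\tau$ had $\tau(v_{\vectr{a}},v_{\vectr{a}}) = \tau(v_{\vectr{b}},v_{\vectr{b}})$ for $\vectr{a}\ne\vectr{b}$, then by the rectangle property $\tau(v_{\vectr{a}},v_{\vectr{b}})$ equals this same transcript, so $\mathcal{P}$ outputs the same allocation on the aligned instance $(v_{\vectr{a}},v_{\vectr{a}})$ (welfare $\le \ell$, optimum $\ell$) and on the misaligned instance $(v_{\vectr{a}},v_{\vectr{b}})$ (optimum $\ge 2\ell-2$) — on one of these two instances the fixed output allocation fails the approximation guarantee, a contradiction. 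Hence all $4^K = \exp(m)$ diagonal inputs produce distinct transcripts, so the communication cost is at least $\log(4^K) = \Omega(m)$ bits; since $K = \Omega(\exp(m))$, this is $\exp(m)$, i.e., $2^{\Omega(m)}$ bits.

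The main obstacle I anticipate is the precise bookkeeping in case (i) — verifying that the aligned instance genuinely admits no allocation of welfare noticeably exceeding $\ell$. Subadditivity alone gives $v_{\vectr{b}}(S)+v_{\vectr{b}}(\bar S) \le v_{\vectr{b}}([m]) + (\text{overlap terms})$ is not literally true; rather one needs the construction of \autoref{thrm:valuation-class-first-attempt} (inherited from \cite{EFNTW19}) to have the additional property that for \emph{every} partition of $[m]$ into two parts, $v_{\vectr{b}}$ assigns the two parts values summing to at most $\ell$ (up to a lower-order slack) unless the split "respects" one of the $\mathcal{G}_{i,\cdot}$ cells — and when $\vectr{a}=\vectr{b}$ even a respecting split only yields $(\ell-1)+1 = \ell$. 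I would state this as the precise guarantee needed from the construction and note that it is exactly what \cite{EFNTW19} prove; the cleanest path is to cite their theorem directly (as the statement does) and to isolate the one inequality "$\max_{(S,\bar S)} v_{\vectr{b}}(S)+v_{\vectr{b}}(\bar S) = \ell$ on the diagonal" as the lemma doing all the work, with the remaining argument being the routine fooling-set contradiction above. A secondary subtlety is numerical precision: the values $\ell-1, 1, \ell$ are integers (or fixed-precision rationals), so the precision parameter $k$ of footnote~\ref{fn:precision-preliminaries} is $O(1)$ here and contributes nothing to the bound.
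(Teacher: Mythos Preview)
Your overall approach matches the paper's one-line sketch (reduce to equality on the strings $\vectr{b}$ via a fooling set on the diagonal $\{(v_{\vectr{b}},v_{\vectr{b}})\}_{\vectr{b}}$), and you correctly identify that the crucial structural fact needed from the construction is $v_{\vectr{b}}(S)+v_{\vectr{b}}(\bar S)=\ell$ for \emph{every} $S$---this is exactly item~\ref{condicondi-identical} of \autoref{lemma::valuation-properties-EFNTW19}.

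There is, however, a genuine gap in your fooling-set step. You compare the aligned instance $(v_{\vectr a},v_{\vectr a})$ with the single misaligned instance $(v_{\vectr a},v_{\vectr b})$ and assert that ``on one of these two instances the fixed output allocation fails.'' But the aligned instance can \emph{never} fail: by the complementarity property, every allocation achieves welfare exactly $\ell$, which equals the optimum there. And the fact that the allocation $(S,\bar S)$ is the same on both instances does not by itself bound the welfare on $(v_{\vectr a},v_{\vectr b})$: welfare depends on the valuations, not just on the allocation, so the fixed $(S,\bar S)$ could very well achieve $2\ell-2$ on the misaligned input. The easy fix is to bring in the \emph{other} off-diagonal corner $(v_{\vectr b},v_{\vectr a})$, which also shares the transcript by the rectangle property. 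The two misaligned welfares sum to
\[
\bigl(v_{\vectr a}(S)+v_{\vectr b}(\bar S)\bigr)+\bigl(v_{\vectr b}(S)+v_{\vectr a}(\bar S)\bigr)
= \bigl(v_{\vectr a}(S)+v_{\vectr a}(\bar S)\bigr)+\bigl(v_{\vectr b}(S)+v_{\vectr b}(\bar S)\bigr)=2\ell,
\]
so at least one of them is $\le\ell$, while its optimum is $\ge 2\ell-2$; this gives approximation $\le \ell/(2\ell-2)\to\tfrac12$, the desired contradiction for $\ell$ large. (Minor: $\log(4^K)=2K$, not $\Omega(m)$; your final conclusion of $\exp(m)$ is nonetheless correct since $K=\Omega(\exp(m))$.)
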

For deterministic protocols (our focus), one can prove this result directly by reducing to the equality problem on the strings $\vectr{b}$; for details and further results for randomized protocols (which we do not need), see \cite{EFNTW19}.
Note that \autoref{thrm:efntw-hardness} shows hardness even for non-truthful protocols, and not just for truthful auction rules.

\subsubsection{(Failed proof that) getting good welfare implies ``no bad $4$-tuple''.}

We now proceed to argue that ``sufficiently different'' elements of the valuation family given by \autoref{thrm:valuation-class-first-attempt} cannot possibly present the same menu in a truthful, communication-efficient and approximately optimal auction. 
Note, however, that this part of the proof has a gap, as we explain when this gap occurs, and later correct in the full version of the proof. 
Recall that, when we discuss three-bidder auctions, we refer to the three bidders as Alice, Bob, and Charlie (abbreviated $A, B$, and $C$).

We now describe the notion of ``bad $4$-tuples''. For some fixed auction rule, we say that a $4$-tuple $(\vectr{b}_1,\vectr{b}_2,\vectr{b}_3,\vectr{b}_4)$ of strings in $[4]^K$ is \emph{bad} if it satisfies the following two properties: 
\begin{itemize}
    \item The strings are all pairwise different on some single index $i_*$, i.e., we have $\{\vectr{b}_1[i_*], \vectr{b}_2[i_*] ,\allowbreak \vectr{b}_3[i_*] ,\allowbreak \vectr{b}_4[i_*] \} = \{1,2,3,4\}$. 
    \item Bob and Charlie present the same menu to Alice when their valuations are $(v_{\vectr b_1}, v_{\vectr b_2})$ or when they are $(v_{\vectr b_3}, v_{\vectr b_4})$, i.e., $\Menu_A(v_{\vectr b_1}, v_{\vectr b_2}) = \Menu_A(v_{\vectr b_3}, v_{\vectr b_4})$. 
\end{itemize}

We will next give a false proof that any communication-efficient auction rule achieving a good welfare approximation must avoid all bad $4$-tuples.
Intuitively, the key will be to show that a bad $4$-tuple implies that the welfare is sometimes bad, by considering Alice with a single-minded valuation such that the optimal allocation is very different on $(v_{\vectr b_1}, v_{\vectr b_2})$ versus on $(v_{\vectr b_3}, v_{\vectr b_4})$.
The gap in the proof will come from the way we apply \autoref{thrm:efntw-hardness}, as we explain below.

\begin{false-proposition}\label{prop::false}
Consider an auction rule $\mathcal A$ which is truthful for three bidders with valuations in $\SubAdd \cup \SingleM$, has a $\poly(m)$-bit communication protocol, and gets a $(\sqrt{3}-1)/2 + \xi \approx 0.366 + \xi$ approximation to the optimal welfare for some constant $\xi > 0$.
Now, suppose $\vectr{b}_1,\vectr{b}_2,\vectr{b}_3,\vectr{b}_4 \in [4]^K$ and $i_* \in [K]$ are such that $\{\vectr{b}_1[i_*], \vectr{b}_2[i_*] ,\vectr{b}_3[i_*] ,\vectr{b}_4[i_*] \} = \{1,2,3,4\}$.
Then, $\Menu^{\mathcal A}_A(v_{\vectr b_1}, v_{\vectr b_2}) \ne  \Menu^{\mathcal A}_A(v_{\vectr b_3}, v_{\vectr b_4})$.
\end{false-proposition}

\begin{proof}[(Failed) Proof.]
Fix a truthful auction rule $\mathcal A$, and four vectors  $\vectr b_1,\vectr b_2,\vectr b_3,\vectr b_4 \in [4]^K$ such that $\{\vectr{b}_1[i_*], \vectr{b}_2[i_*],\allowbreak\vectr{b}_3[i_*] ,\vectr{b}_4[i_*] \} = \{1,2,3,4\}$.
Without loss of generality, assume that $\vectr{b}_j[i^*]=j$ for $j \in [4]$.
Informally speaking, what this means is that for the ``special'' partition $\mathcal{G}_{i_*,1} \cup \mathcal{G}_{i_*,2} \cup \mathcal{G}_{i_*,3} \cup \mathcal{G}_{i_*,4} = [m]$,
we have that for $j=1,\ldots,4$, valuation $v_{\vectr b_j}$ ``loves'' set $\mathcal{G}_{i_*, j}$ and ``hates'' the rest of the items, $[m]\setminus \mathcal{G}_{i_*, j}$. 
Now, suppose for contradiction that 
$\Menu^{\mathcal A}_A(v_{\vectr b_1}, v_{\vectr b_2}) = \Menu^{\mathcal A}_A(v_{\vectr b_3}, v_{\vectr b_4})$, and denote this menu with $M$ for simplicity. 

Consider the case when Alice has a single-minded valuation $v$ such that she values the set $S=\mathcal{G}_{i_*,1} \cup \mathcal{G}_{i_*,2}$ with some value $x$ (and thus Alice values all sets containing $S$ at $x$ as well, and all sets not containing $S$ at $0$).
We will tune the precise value of $x$ later; for now, suppose only that $x < \ell$, and that $M(S) \ne x$ (which is without loss of generality, since we can always slightly perturb the value of $x$ after fixing $\mathcal A$ and $\vectr b_j$ for $j=1,\ldots,4$). 

Now, we examine two cases based on whether the mechanism $\mathcal A$ allocates $S$ to Alice or not.  
\begin{itemize}
    \item Suppose $M(S) < x$. 
    In this case, Alice must receive (a bundle containing) $S =\mathcal{G}_{i_*,1} \cup \mathcal{G}_{i_*,2}$ whenever the menu is $M$. 
    Suppose that Bob's and Charlie's valuations are $(v_{\vectr{b}_1},v_{\vectr{b}_2})$. 
    
    By the definition of $v_{\vectr{b}}$ and the fact that $\vectr{b}_1[i^*], \vectr{b}_2[i^*] \in \{1,2\}$, we have that $v_{\vectr{b}_1}([m]\setminus S), v_{\vectr{b}_2}([m]\setminus S) \le 1$.
    Thus, $\mathcal A$ gets welfare at most $x + 2$.
    
    On the other hand, consider the allocation that gives nothing to Alice, gives $\mathcal{G}_{i_*,1}$ to Bob, and gives $\mathcal{G}_{i_*,2}$ to Charlie. 
    The welfare of this allocation is $2(\ell-1)$.
    Thus, the welfare approximation ratio of $\mathcal A$ in this case is at most $(x + 2) / \big(2(\ell-1)\big)$.
    
    \item Suppose $M(S) > x$. 
    In this case, Alice cannot receive a bundle containing $S=\mathcal{G}_{i_*,1}\cup\mathcal{G}_{i_*,2}$ when the menu is $M$.
    Suppose that Bob's and Charlie's valuations are $(v_{\vectr{b}_3},v_{\vectr{b}_4})$. 
    
    \textbf{False Argument:}
    Now, one may \emph{intuitively} think that since Alice gets no items in this case, we are just solving the allocation problem of awarding items between Bob and Charlie, who are two bidders with valuations drawn from the class of valuations specified in \autoref{thrm:valuation-class-first-attempt}.
    On these instances, giving $\mathcal{G}_{i_*,3}$ to Bob and $\mathcal{G}_{i_*,4}$ to Charlie gets welfare $2(\ell-1)$.
    However, \emph{intuitively}, by \autoref{thrm:efntw-hardness}, any communication-efficient auction rule should sometimes get welfare at most $\ell$, reflecting the $(1/2)$-approximation hardness result of \cite{EFNTW19}.
    This argument is \emph{not actually correct}, because the communication-efficient protocol for $\mathcal A$ can actually use information from Alice's valuation to focus on the specific partition $\mathcal{G}_{i_*}$.\footnote{
        Put another way, \autoref{thrm:efntw-hardness} directly implies the following: if a poly-communication auction rule never allocates any items to Alice, then there must exist some inputs where the auction rule gets welfare $\ell$, while the optimal welfare Bob and Charlie could get amongst themselves is at least $2(\ell -1)$.
        Completing this proof would require a stronger property:
        for \emph{any} poly-communication auction rule, there exists some case where the auction rule gets welfare $\ell$, but the optimal total welfare is in fact $x + 2(\ell-1)$.
    }
    However, to illustrate the idea, let us pretend that this is true for the remainder of the proof, and that $\mathcal A$ gets welfare at most $\ell$ in this case.

    On the other hand, consider the allocation that gives $S$ to Alice, gives $\mathcal{G}_{i,3}$ to Bob, and gives $\mathcal{G}_{i,4}$ to Charlie.
    The welfare of this allocation is $x+2(\ell-1)$, and thus the optimal welfare on this instance is at least this large.
    Thus, the welfare approximation ratio in this case is at most $\ell / \big(x + 2(\ell -1) \big)$.
\end{itemize}
Therefore, under our assumption that the menus for Alice with respect to $(v_{\vectr{b}_1},v_{\vectr{b}_2})$ and $(v_{\vectr{b}_3},v_{\vectr{b}_4})$ are identical, the welfare approximation ratio of $\mathcal A$ can be at most
$$\max\left\{ \frac{x+2}{2(\ell-1)}, \frac{\ell}{x+2(\ell-1)}\right\}.$$
Taking $x=(\sqrt{3}-1)\ell$, the ratio above converges to $\frac{\sqrt{3}-1}{2} \approx .366$ as $\ell \to \infty$.\footnote{
    This specific value of $x$ arises from assuming $x = \rho \ell$ for some $\rho \in (0,1)$, then making the two terms in the above maximum equal to each other while ignoring lower-order factors.
    In more detail, assuming that $\rho\ell / (2 \ell) = \ell / ( \rho\ell + 2\ell)$ for $\rho > 0$ implies that $\rho^2 + 2\rho - 2 = 0$, and thus that $\rho = \sqrt{3} - 1$. Then, the above maximum goes to $\rho / 2$ as $\ell \to \infty$. 
}
\end{proof}

\subsubsection{``No bad $4$-tuple'' implies high communication.}

\autoref{prop::false} implies in particular that if the auction rule has any bad $4$-tuples, then the auction cannot get a good approximation to the optimal welfare.
We will show next that for any protocol which avoids all bad $4$-tuples, the communication cost of the protocol must be high.
For clarity and generality, we state this part of the argument for general communication problems.

\begin{lemma}[$4$-tuple rectangle argument]
\label{lem:4-tupe-rectangle-simple}
    For a function $f : \mathcal{T} \times \mathcal{T} \to \mathcal{R}$, let $\mathcal{S} = \{ v_{\vectr{b}} \}_{\vectr{b}\in [4]^K} \subseteq \mathcal{T}$ be a set of distinct inputs to $f$ parameterized by a $K$-length string $\vectr{b} \in [4]^K$.
    Suppose that for every 4-tuple $(\vectr{b}_1,\vectr{b}_2,\vectr{b}_3,\vectr{b}_4) \in \mathcal{S}^4$ that differs on some index $i \in [K]$ (i.e., $\{\vectr{b}_1[i],\vectr{b}_2[i],\vectr{b}_3[i],\vectr{b}_4[i]\} = \{1,2,3,4\}$),
    we have $f(v_{\vb{b}_1},v_{\vb{b}_2}) \ne f(v_{\vb{b}_3},v_{\vb{b}_4})$.
    Then, the communication complexity of $f$ is at least \[ \log_2\left(4^K / 3^K \right) = \Omega(K). \]
\end{lemma}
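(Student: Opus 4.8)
The plan is to use the standard fact that any deterministic protocol computing $f$ partitions the input space $\mathcal{T}\times\mathcal{T}$ into combinatorial rectangles, one per transcript, and that the communication cost is at least $\log_2$ of the number of distinct transcripts (see e.g.~\cite{KN97}, as already noted in \autoref{sec:prelims}). It therefore suffices to exhibit at least $4^K/3^K$ distinct transcripts. I would obtain this by restricting attention to the ``diagonal'' inputs $(v_{\vb{b}}, v_{\vb{b}})$ for $\vb{b}\in[4]^K$, of which there are exactly $4^K$ (using that the $v_{\vb{b}}$ are distinct), and lower-bounding the number of transcripts they produce.

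Fix a protocol $\mathcal{P}$ for $f$. For a transcript $t$, let $B_t = \{\vb{b}\in[4]^K : \mathcal{P}\text{ on input }(v_{\vb{b}},v_{\vb{b}})\text{ produces transcript }t\}$; the sets $B_t$ partition $[4]^K$. The key claim is that $|B_t|\le 3^K$ for every $t$. To prove it, I would first show that for each coordinate $i\in[K]$ the projection $\pi_i(B_t) := \{\vb{b}[i] : \vb{b}\in B_t\}$ is a proper subset of $\{1,2,3,4\}$. Suppose not; then there are $\vb{b}_1,\vb{b}_2,\vb{b}_3,\vb{b}_4\in B_t$ with $\vb{b}_j[i]=j$. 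Write the rectangle of transcript $t$ as $X_t\times Y_t$; since each $(v_{\vb{b}_j},v_{\vb{b}_j})$ has transcript $t$, we get $v_{\vb{b}_j}\in X_t$ and $v_{\vb{b}_j}\in Y_t$ for all $j$. Hence both $(v_{\vb{b}_1},v_{\vb{b}_2})$ and $(v_{\vb{b}_3},v_{\vb{b}_4})$ lie in $X_t\times Y_t$, so both produce transcript $t$ and thus the same output of $\mathcal{P}$, giving $f(v_{\vb{b}_1},v_{\vb{b}_2}) = f(v_{\vb{b}_3},v_{\vb{b}_4})$ --- contradicting the hypothesis, since $\{\vb{b}_1[i],\vb{b}_2[i],\vb{b}_3[i],\vb{b}_4[i]\}=\{1,2,3,4\}$. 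With $|\pi_i(B_t)|\le 3$ for every $i$, the bound $|B_t|\le\prod_{i=1}^K|\pi_i(B_t)|\le 3^K$ follows since every string in $B_t$ is determined by its coordinate values.

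Finally, since the sets $B_t$ partition a set of size $4^K$ and each has size at most $3^K$, there are at least $4^K/3^K$ distinct transcripts, so $\cc(f)\ge\log_2\!\left(4^K/3^K\right) = K\log_2(4/3) = \Omega(K)$.

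The argument is short, and the only real subtlety --- which I would state carefully --- is the ``rectangle closure'' step: four diagonal inputs sharing a transcript forces the two off-diagonal pairs $(v_{\vb{b}_1},v_{\vb{b}_2})$ and $(v_{\vb{b}_3},v_{\vb{b}_4})$ into the same rectangle and hence to the same output. This is precisely where the hypothesis about $4$-tuples (rather than pairs, as in the usual fooling-set argument) is used, and it is what allows a single-coordinate disagreement among four strings to drive the lower bound.
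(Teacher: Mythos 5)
Your proof is correct and follows essentially the same route as the paper's: you restrict to diagonal inputs $(v_{\vb{b}},v_{\vb{b}})$, bound the number of diagonal strings per transcript by $3^K$ via the coordinate-projection argument, and use the rectangle-closure step to turn four diagonal inputs with a shared transcript into the two off-diagonal pairs $(v_{\vb{b}_1},v_{\vb{b}_2})$ and $(v_{\vb{b}_3},v_{\vb{b}_4})$ sharing a transcript, contradicting the hypothesis. The only cosmetic difference is that you phrase the closure step directly in terms of the rectangle $X_t\times Y_t$ while the paper phrases it via the transcript function $\tau$; these are the same fact.
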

\begin{proof}
  Consider any communication protocol for $f$, and for any inputs $v_B, v_C \in \mathcal{T}$, let $\tau(v_B, v_C)$ be the transcript of the protocol on inputs $v_B, v_C$.
  Recall that, as in standard rectangle arguments, if $f(v_B, v_C) \ne f(v_B', v_C') $ then $\tau(v_B, v_C) \ne \tau(v_B', v_C') $. Also, if $\tau(v_B, v_C) = \tau(v_B', v_C')$ then $\tau(v_B', v_C) = \tau(v_B, v_C') = \tau(v_B, v_C)$.

  To show our lower bound, we focus on the transcripts used when both players have the same input in $\mathcal{S}$.
  For any transcript $T$ of the protocol, let $\mathcal{I}(T)\subseteq \mathcal{S}$ denote the (possibly empty) set of all $v_{\vectr{b}}\in\mathcal{S}$ such that the transcript given the input $(v_{\vectr{b}}, v_{\vectr{b}})$ is $T$. 
  
We will now show that for all transcripts $T$, we have $|\mathcal{I}(T)|\le 3^K$.
For that, we claim that for every transcript $T$ and for every $i\in[K]$, we have $\{ \vectr b[i] \ \big|\ \vectr b \in \mathcal{I}(T) \} \ne \{ 1,2,3,4\}$.
To see this, assume for contradiction that there is such a transcript $T$ and an index $i$ 
such that 
$\tau(v_{\vectr{b}_j}, v_{\vectr{b}_j}) = T$ for each $j=1,\ldots,4$. 
Then, by a standard rectangle argument, we have $T = \tau(v_{\vectr{b}_1},v_{\vectr{b}_2}) = \tau(v_{\vectr{b}_3},v_{\vectr{b}_4})$ as well. 
However, by the assumption of the lemma, we have $f(v_{\vectr{b}_1},v_{\vectr{b}_2}) \ne f(v_{\vectr{b}_3},v_{\vectr{b}_4})$, which is a contradiction.
Thus, since the string corresponding to elements in $\mathcal{I}(T)$ can take at most $3$ values on each index $i$, we know that $|\mathcal{I}(T)|\le 3^K$.

Now, this implies that any protocol correctly computing $f$ requires at least $|\mathcal{S}| / 3^K = 4^K / 3^K$ distinct transcripts in total.
Thus, %
the communication complexity of the protocol has to be $\Omega(K)$.

\end{proof}

If the failed Proposition~\ref{prop::false} had actually been true, then combining it with \autoref{lem:4-tupe-rectangle-simple} would immediately imply that the communication complexity of the two-player function $f = \Menu_A(\cdot,\cdot)$ is $\Omega(K) = \exp(m)$.
Furthermore, by the techniques of \cite{D16b} recalled in our \autoref{lemma:menu-reconstruction-cc}, we could then conclude that the auction rule itself has communication complexity at least $\exp(m)$.

\subsection{Correcting the gap in this proof}

We now briefly discuss the main idea we use below to correct the gap in the failed proof of Proposition~\ref{prop::false}.  
For an argument like the above to work, we need to
have a construction such that it will not only be hard (communication-wise) to decide whether we want to allocate to Alice or not,
but even conditional on our decision regarding Alice, it will be communication-hard for us to find a \textquote{good enough} partition of the items of Bob and Charlie. 
For that, we create a two-layered construction, which we describe in detail in \autoref{sec:real-proof}.\footnote{
    Note that we do not overrule that it is possible to prove \autoref{thm:truthful-lower-bound-for-subadditive-cup-sm} using the construction specified in this failed attempt, i.e., we do not know
    whether the exists a $\poly(m)$-communication truthful auction for the classes of valuations considered above.
}
This two-layered construction is more involved that the one discussed in %
\autoref{sec:simplified-construction} and \autoref{thrm:efntw-hardness}, but its proof is a straightforward extension of the techniques used, i.e., those of \cite{EFNTW19}. %

After making this adjustment, the (false) Proposition~\ref{prop::false}
remains nearly identical in how we argue about the welfare achieved by the auction with a ``bad 4-tuple''. However, the definition of a ``bad $4$-tuple'' becomes considerably more involved.
For instance, a ``bad 4-tuple'' is defined by
a condition both about protocols for calculating the menu, and about protocols for calculating the auction rule.
This technical fact precludes using a clean and self-contained ``generalized rectangle argument'' such as \autoref{lem:4-tupe-rectangle-simple}; still, our transcript-counting arguments follow much of the intuition provided by \autoref{lem:4-tupe-rectangle-simple}.
For the full argument, see \autoref{sec:real-proof}.

\section{Main Result: A Separation for Three Bidders}
\label{sec:real-proof}

We now state and prove the main results of our paper: a separation for three-bidder truthful auction vs. non-truthful protocols. 
First, we state the lower bound for $\poly$-communication truthful auctions with bidders in $\SubAdd\cup\SingleM$:
\begin{theorem}\label{thm:truthful-lower-bound-for-subadditive-cup-sm}
    Every deterministic truthful mechanism $\mathcal{A}$ for three bidders with valuations in $\SubAdd_m\cup\SingleM_m$ that guarantees a $(\frac{\sqrt{3}-1}{2}+\frac{12}{\log m})$-approximation to the optimal social welfare requires $\exp\big({\Omega(\frac{\sqrt{m}}{\log m})}\big)$ bits of communication. 
\end{theorem}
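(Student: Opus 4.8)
The plan is to follow the template of the failed proof in Section~\ref{sec:proof-prelim}, but to repair the gap by replacing the single-layer $\mathcal{V}_{\text{4-cell}}$ construction with a \emph{two-layered} construction of subadditive valuations. At the outer layer we keep the structure of Proposition~\ref{thrm:valuation-class-first-attempt}: exponentially many overlapping $4$-cell partitions $\mathcal{G}_{i,1}\cup\cdots\cup\mathcal{G}_{i,4}=[m]$, indexed by $i\in[K]$ with $K=\exp(\Omega(\sqrt m/\log m))$, and valuations $v_{\vectr b}$ indexed by $\vectr b\in[4]^K$ that ``love'' the selected cell $\mathcal{G}_{i,\vectr b[i]}$ of each partition. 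The new ingredient is that \emph{within} each cell $\mathcal{G}_{i,j}$ we embed a second copy of an $\mathcal{V}_{\text{4-cell}}$-type hard instance (or, more simply, a sub-partition refining $\mathcal{G}_{i_*,1}\cup\mathcal{G}_{i_*,2}$ into pieces that themselves require exponential communication to split well). The point is that even after Bob and Charlie use Alice's single-minded valuation to identify the ``special'' index $i_*$ and to learn that Alice will (or will not) take $S=\mathcal{G}_{i_*,1}\cup\mathcal{G}_{i_*,2}$, they still face an exponential-communication problem when trying to partition the remaining items well. I would prove the existence of this two-layered family by the same probabilistic-method argument used for Proposition~\ref{thrm:valuation-class-first-attempt}, combined with the \cite{EFNTW19} equality-reduction: the extra $\log m$ factors in the theorem statement come from the overhead of nesting two such constructions (roughly, the number of items available per cell is $\Theta(m/\mathrm{poly}\log m)$, and one takes square roots as in \cite{EFNTW19}).

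With this construction in hand, I would redefine a ``bad $4$-tuple'' $(\vectr b_1,\vectr b_2,\vectr b_3,\vectr b_4)$ to require: (i) pairwise distinctness on a single outer index $i_*$; (ii) Bob and Charlie induce the same menu on Alice for $(v_{\vectr b_1},v_{\vectr b_2})$ and for $(v_{\vectr b_3},v_{\vectr b_4})$; and crucially (iii) the \emph{auction rule itself} (not just the menu protocol) uses transcripts that are ``equal enough'' across the relevant cross-terms, so that the inner-layer hardness of \cite{EFNTW19} can be invoked. I would then re-run the argument of the failed Proposition~\ref{prop::false}: in the case $M(S)<x$ the bound $\tfrac{x+2}{2(\ell-1)}$ is unchanged; in the case $M(S)>x$, Alice gets nothing, Bob and Charlie have valuations $(v_{\vectr b_3},v_{\vectr b_4})$, and now the \emph{inner-layer} construction plus condition (iii) legitimately forces welfare $\lesssim\ell$ where the optimum is $x+2(\ell-1)$, giving $\tfrac{\ell}{x+2(\ell-1)}$. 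Optimizing $x=(\sqrt3-1)\ell$ and letting $\ell\to\infty$ (so $\ell=\Theta(\log m)$, whence the additive $\tfrac{12}{\log m}$ slack) yields the $\tfrac{\sqrt3-1}{2}$ threshold. Finally I would run a transcript-counting argument in the spirit of Lemma~\ref{lem:4-tupe-rectangle-simple}, but now bookkeeping \emph{pairs} of transcripts (one for the menu protocol of Theorem~\ref{lemma:menu-reconstruction-cc}, one for the auction rule) simultaneously, to conclude that avoiding all bad $4$-tuples forces $\exp(\Omega(\sqrt m/\log m))$ total communication; then Theorem~\ref{lemma:menu-reconstruction-cc} transfers this lower bound from the menu-reconstruction problem to $\cc(\mathcal A)$ itself.

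The main obstacle, as flagged in Section~\ref{sec:proof-prelim}, is condition (iii) and the accompanying transcript-counting: the clean self-contained Lemma~\ref{lem:4-tupe-rectangle-simple} no longer applies, because ``badness'' now couples the menu protocol with the auction-rule protocol, and the inner \cite{EFNTW19} hardness only bites when the auction rule's behavior is genuinely constrained by a fooling-set collision at the inner layer. Making this precise requires carefully choosing \emph{which} transcript-equalities to demand — enough to trigger inner-layer hardness in the $M(S)>x$ branch, but few enough that one can still count $|\mathcal{I}(T)|\le 3^K$ (or an analogous bound) over the product transcript space. I expect the bulk of the real proof to consist of setting up this two-protocol fooling argument correctly; the welfare-ratio computation and the existence of the two-layered valuation class should, as the authors indicate, be routine extensions of \cite{EFNTW19} and the probabilistic method.
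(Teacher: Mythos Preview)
Your plan matches the paper's approach in all essential respects: a two-layered subadditive construction, bad $4$-tuples coupling the menu-reconstruction protocol with the auction-rule (welfare) protocol, the same $\tfrac{\sqrt3-1}{2}$ welfare case-analysis, and a two-protocol transcript-counting argument finished off via \autoref{lemma:menu-reconstruction-cc}.

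Two structural differences are worth flagging. First, the paper's outer layer is \emph{binary} rather than $4$-cell: valuations are indexed by $(\vb b,\vb C)\in\{0,1\}^k\times\{0,1\}^{k\times k}$, where $\vb b[i]$ selects $G_i$ versus $\overline{G_i}$ and the $i$-th row of $\vb C$ encodes an inner-layer choice inside the selected half. This makes the counting a clean dichotomy per outer index (``$|C_{i,0}(T)|\le 1$ or $|C_{i,1}(T)|\le 2^{\cc(\mathcal{SW})}$''), rather than the $3^K$-style bound you sketch. Second, and more substantively, the paper does \emph{not} realize your condition (iii) by ``invoking inner-layer \cite{EFNTW19} hardness.'' Instead it uses a diagonal trick: the bad-tuple condition requires $\tau_{\mathcal{SW}}(v_{i^*,0},v^{(3)},v^{(3)})=\tau_{\mathcal{SW}}(v_{i^*,0},v^{(4)},v^{(4)})$ on inputs where Bob and Charlie are \emph{identical}; since $v^{(3)}(S)+v^{(3)}(\overline S)=\ell$ for every $S$, the mechanism's welfare on the diagonal is trivially $\le\ell$, and the rectangle argument transports this bound to $(v^{(3)},v^{(4)})$. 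This is precisely what makes condition (iii) both concrete and countable (pigeonhole the inner-layer rows over at most $2^{\cc(\mathcal{SW})}$ welfare transcripts). Your phrasing suggests forcing a fooling-set collision at the inner layer, which is in the same spirit but less direct; the diagonal trick is the cleaner realization of the step you correctly identify as the crux.
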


We prove this lower bound below.
To establish a separation, we must also give an upper bound for non-truthful protocols.
We prove this next, i.e., we construct a $\poly(m)$-communication protocol getting a good welfare approximation.
This protocol is a simple extension of a result of \cite{Fei09}:
\begin{theorem}[\cite{Fei09}]\label{thm::feige}
    For any number of bidders $n$ and items $m$,
    there is a deterministic, $\poly(m,n)$-communication (non-truthful) protocol that guarantees a $\frac12$-approximation to the optimal social welfare for bidders with valuations in $\SubAdd_m$.  
\end{theorem}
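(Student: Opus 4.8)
The plan is to observe that Feige's $\tfrac12$-approximation for subadditive welfare is really an algorithm in the demand-query model, and that a single demand query can be implemented with $\poly(m)$ bits of communication, so the whole algorithm becomes a $\poly(m,n)$-communication protocol. I would carry this out in three steps.

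\emph{Step 1: the configuration LP.} Introduce a variable $x_{i,S}\ge 0$ for every bidder $i$ and bundle $S\subseteq[m]$, with constraints $\sum_S x_{i,S}\le 1$ for each $i$, $\sum_i\sum_{S\ni j} x_{i,S}\le 1$ for each item $j$, and objective $\sum_{i,S} x_{i,S}\, v_i(S)$. Every integral allocation is a feasible $0/1$ point, so the optimum $v_{\mathrm{LP}}$ of this LP satisfies $v_{\mathrm{LP}}\ge \mathrm{OPT}$, where $\mathrm{OPT}$ denotes the optimal social welfare.

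\emph{Step 2: solving the LP with few demand queries.} The dual LP is $\min \sum_i u_i+\sum_j p_j$ subject to $u_i+\sum_{j\in S}p_j\ge v_i(S)$ for all $i,S$ and $u_i,p_j\ge 0$; a separation oracle for bidder $i$'s dual constraints at item prices $p$ is exactly a demand query to $i$ (``which bundle maximizes $v_i(S)-\sum_{j\in S}p_j$?''). Running the ellipsoid method (equivalently, multiplicative weights) against this oracle produces, after $\poly(m,n,1/\epsilon)$ demand queries, a primal solution $x$ of value at least $(1-\epsilon)v_{\mathrm{LP}}\ge(1-\epsilon)\mathrm{OPT}$ that is supported on only $\poly(m,n,1/\epsilon)$ bundles. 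A demand query sends a price vector ($\poly(m)$ bits, under the bounded-precision convention of \autoref{sec:prelims}) and returns one bundle ($m$ bits), so this phase costs $\poly(m,n)$ communication; taking $\epsilon$ inverse-polynomial keeps the loss inside the precision factors the paper suppresses.

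\emph{Step 3: rounding.} Apply Feige's rounding for subadditive valuations to the sparse fractional solution $x$: it yields an integral allocation of welfare at least $\tfrac12 v_{\mathrm{LP}}$, the factor of $2$ being exactly where subadditivity (via $v_i(S)\le v_i(S')+v_i(S\setminus S')$ for $S'\subseteq S$) enters --- i.e., this is the statement that the configuration-LP integrality gap is at most $2$ for $\SubAdd$. The rounding can be made deterministic, either by invoking a known deterministic LP-rounding for subadditive welfare or by derandomizing the randomized version through the method of conditional expectations; the latter is feasible here because $x$ has polynomial support, so every conditional expectation of a $v_i$ that the estimator needs is computable from $\poly(m)$-communication value queries. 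This phase uses $\poly(m,n)$ further communication, and combining with Step 2 gives an allocation of welfare at least $\tfrac12(1-\epsilon)\mathrm{OPT}$, i.e.\ a $\tfrac12$-approximation once $\epsilon$ is absorbed; the protocol is deterministic and uses $\poly(m,n)$ bits. I expect the main obstacle to be Step 3: re-deriving (or citing) Feige's combinatorial proof that the configuration-LP integrality gap for subadditive valuations is at most $2$, and making the rounding deterministic without blowing up communication, which the polynomial sparsity of ellipsoid/MWU LP solutions makes possible. A secondary, purely technical point is the $\epsilon$/precision bookkeeping in solving the LP with demand queries, which the paper's convention of hiding $\poly(k)$ factors lets us ignore.
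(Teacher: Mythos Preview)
The paper does not prove this theorem at all: it is stated with the citation \cite{Fei09} and used as a black box (it is immediately followed by ``Applying the reduction in \autoref{lemma::black-box-upper-bound} to the communication efficient protocol in \autoref{thm::feige} gives the following''). So there is no ``paper's own proof'' to compare against; the paper simply imports Feige's result.

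That said, your sketch is essentially Feige's original argument: solve the configuration LP via demand-query separation, then invoke the integrality-gap-$2$ rounding for subadditive valuations. Two small cautions if you intend to flesh this out. First, Feige's rounding as originally stated is randomized; your derandomization-by-conditional-expectations remark is plausible but not automatic, since the conditional expectations involve values $v_i(S)$ on bundles that arise from intersecting a sampled bundle with a random contention-resolution step, and you should check that the polynomial support of the LP solution really suffices to enumerate these with only value queries. Second, the ``demand query equals $\poly(m)$ bits'' step quietly assumes the bidder can answer a demand query from her own input alone; this is fine in the communication model (the bidder knows $v_i$), but be careful not to conflate this with query-complexity lower bounds for demand oracles, which are a different model.
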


Applying the reduction in \autoref{lemma::black-box-upper-bound} to the communication efficient protocol
in \autoref{thm::feige} gives the following:

\begin{corollary}\label{lemma::ub--for-subadditive-cup-sm}
For any number of bidders $n=O(\log m)$,
there is a deterministic $\poly(m)$-communication (non-truthful) protocol that guarantees a $\frac12$-approximation to the optimal social welfare for bidders with valuations in $\SubAdd_m\cup \SingleM_m$. 
\end{corollary}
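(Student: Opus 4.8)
The plan is to derive Corollary~\ref{lemma::ub--for-subadditive-cup-sm} by plugging Feige's protocol into the black-box reduction of Lemma~\ref{lemma::black-box-upper-bound}. Concretely, instantiate Lemma~\ref{lemma::black-box-upper-bound} with $\mathcal{C} = \SubAdd_m$, $\alpha = \tfrac12$, and $\mathcal{P}$ equal to the $\poly(m,n)$-communication protocol of Theorem~\ref{thm::feige}: this produces a (non-truthful) protocol $\mathcal{P}'$ that $\tfrac12$-approximates the optimal welfare for $n$ bidders with valuations in $\SubAdd_m \cup \SingleM_m$ using $\poly(m,2^n)$ bits. Since $n = O(\log m)$ gives $2^n = \poly(m)$, the communication cost of $\mathcal{P}'$ is $\poly(m)$, which is exactly the claim. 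So the only real content left is the proof of Lemma~\ref{lemma::black-box-upper-bound}, which the paper defers.

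For that lemma I would argue as follows. Each bidder first announces (one bit, $n$ bits total) whether its reported valuation is single-minded; let $U$ be the set of declared-single-minded bidders. For each of the $\le 2^{|U|}\le 2^n$ subsets $S\subseteq U$, run a sub-protocol: every $i\in S$ announces its interest set $T_i$ and weight $w_i$ ($\poly(m,n)$ bits over the whole subset); if the sets $T_i$ for $i\in S$ are pairwise disjoint, tentatively give $T_i$ to each $i\in S$ and invoke $\mathcal{P}$ among the bidders in $[n]\setminus S$ on the restricted ground set $[m]\setminus\bigcup_{i\in S}T_i$ (restricting the ground set preserves subadditivity, monotonicity, and normalization, so this is a valid instance for $\mathcal{P}$). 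The resulting candidate allocation has welfare $\sum_{i\in S} w_i$ plus the welfare returned by $\mathcal{P}$. Finally, output the candidate allocation of highest welfare over all choices of $S$ (including the run that treats everyone as a $\mathcal{C}$-bidder). The total communication is $n + 2^n\cdot\poly(m,n) = \poly(m,2^n)$.

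Correctness follows by comparing against a fixed optimal allocation $(A_1,\dots,A_n)$. Let $S^\star\subseteq U$ be the declared-single-minded bidders who actually receive their interest set in the optimum, so $v_i(A_i)=w_i$ for $i\in S^\star$ and the sets $T_i$ for $i\in S^\star$ are pairwise disjoint. In the sub-protocol for $S=S^\star$, the bidders in $S^\star$ contribute exactly $\sum_{i\in S^\star} v_i(A_i)$; moreover $\bigcup_{i\notin S^\star}A_i$ is disjoint from $\bigcup_{i\in S^\star}T_i$, so the restricted instance on which $\mathcal{P}$ runs still admits welfare at least $\sum_{i\notin S^\star}v_i(A_i)$, and $\mathcal{P}$ returns at least $\alpha$ times that. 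Hence the $S^\star$-candidate, and therefore the protocol's output, has welfare at least $\alpha\sum_i v_i(A_i)$. I do not expect a genuine obstacle here; the only points needing a sentence of care are that deleting the single-minded winners' bundles cannot hurt the remaining bidders' optimal welfare (their optimal bundles are disjoint from those bundles), and that discarding a subset $S$ whose claimed $T_i$'s overlap is harmless because the true $S^\star$ is never discarded.
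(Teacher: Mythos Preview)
Your derivation of the corollary from Lemma~\ref{lemma::black-box-upper-bound} and Theorem~\ref{thm::feige} is exactly the paper's argument: instantiate the black-box reduction with $\mathcal{C}=\SubAdd_m$ and Feige's protocol, then use $n=O(\log m)\Rightarrow 2^n=\poly(m)$.

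Your bonus sketch of Lemma~\ref{lemma::black-box-upper-bound} is also essentially the paper's proof, but contains one small slip. You write ``invoke $\mathcal{P}$ among the bidders in $[n]\setminus S$'', which for $S\subsetneq U$ includes the declared-single-minded bidders in $U\setminus S$. Single-minded valuations are in general \emph{not} subadditive (e.g.\ $v(\{1\})+v(\{2\})=0<w=v(\{1,2\})$), so this is not a valid instance for $\mathcal{P}$, and your appeal to ``$\mathcal{P}$ returns at least $\alpha$ times that'' is unjustified on that input. The paper fixes this by allocating the empty bundle to every bidder in $U\setminus S$ and running $\mathcal{P}$ only on the non-single-minded bidders $[n]\setminus U$; your correctness argument then goes through verbatim, since the bidders in $U\setminus S^\star$ contribute zero to the optimal welfare anyway.
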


Together, \autoref{thm:truthful-lower-bound-for-subadditive-cup-sm} and \autoref{lemma::ub--for-subadditive-cup-sm} give the main result of our paper: truthful mechanisms are provably less powerful than non-truthful protocols for bidders with valuations in $\SubAdd\cup \SingleM$. Observe that the upper bound, and thus the separation, holds for every constant $n \ge 3$, and indeed even for every number of bidders that is logarithmic in the number of items.\footnote{
    Observe that a trivial second-price auction on the grand bundle among the three bidders in $\SubAdd\cup \SingleM$ that gives at least $\frac{1}{3}$ of the optimal welfare, is truthful and communication-efficient, whilst by \autoref{thm:truthful-lower-bound-for-subadditive-cup-sm}, no truthful and communication efficient has approximation better than $\approx 0.366$. Hence, there remains a small gap in understanding the achievable approximation ratio for truthful mechanisms in this class.
}

We are now ready to prove \autoref{thm:truthful-lower-bound-for-subadditive-cup-sm}; this proof occupies the remainder of \autoref{sec:real-proof}.

\subsection{Proof Outline}
As discussed in \autoref{sec:proof-prelim}, our proof proceeds as follows.
To begin, we construct a class of subadditive valuations, generalizing~\cite{EFNTW19} and our construction in \autoref{thrm:valuation-class-first-attempt}.
Then, the proof of our communication lower bound proceeds in two major steps.
First, we show that getting a good welfare approximation implies that the auction must always avoid situations we term ``bad $4$-tuples'' (see below for a definition).
Second, we show that any protocol which always avoids bad $4$-tuples requires high communication.

\subsection{Description of the class of valuations}
Our construction is a significant generalization the construction of ~\cite{EFNTW19}. Our goal is to have a set of  subadditive valuations of two bidders, whom we name Bob and Charlie, and 
a collection of subsets $\mathcal G=\{G_1,\ldots,G_k\}\subseteq 2^{[m]}$ for a large enough $k$ such that for each $i\in [k]$: 
\begin{enumerate}
    \item[(1)] For every $i\in [k]$, we can find valuations of Bob and Charlie that jointly have high value for $G_i$ and low value for $\overline G_i$. Alternatively, we can find valuations of Bob and Charlie that have low value for $G_i$ and high value for $\overline G_i$.
    \item[(2)] For every $i\in [k]$ and $S\in \{G_i,\overline G_i\}$, even if we know that Bob and Charlie have high joint value for $S$, dividing the items in $S$ among Bob and Charlie in an approximately optimal way takes exponentially-many bits of communication. 
\end{enumerate}

The construction in \autoref{thrm:valuation-class-first-attempt} meets condition (1).\footnote{
    In more detail, \autoref{thrm:valuation-class-first-attempt} constructs a large set of $4$-cell partitions $\{ \{ \mathcal{G}_{i,1}, \mathcal{G}_{i,2}, \mathcal{G}_{i,3}, \mathcal{G}_{i,4} \} \ \mid \ i\in[k] \}$, and a set of valuations of Bob and Charlie, such that the set family 
    $\{ \mathcal{G}_{i,1} \cup \mathcal{G}_{i,2} \ \mid \ i \in [k] \}$ meets condition (1) above.
    Our construction extends the one in \autoref{thrm:valuation-class-first-attempt} by replacing the index $j \in [4]$ with another family of sets $\{ H_j \ \mid\ j \in [k]\}$ in order to meet condition (2).
}
For  condition (2) to hold, we build the following two-layered collection of subsets:
\begin{definition}
        A {\em $k$-width-family} $\mathcal{F}=(\mathcal{G},\{\mathcal{H}_i^{(t)}\}_{t \in \{0,1\}, i \in [k]})$ of $[m]$ consists of $2k+1$ collections $\mathcal{G},\mathcal{H}_1^{(0)},\dots,\mathcal{H}_k^{(0)}$, $\mathcal{H}_1^{(1)},\dots,\mathcal{H}_k^{(1)} \subseteq 2^{[m]}$ of size $k$ each, where we 
    denote $\mathcal{G}=\{G_1,\dots,G_k\}$ and $\mathcal{H}_i^{(t)}=\{H_{i,1}^{(t)},\dots,H_{i,k}^{(t)}\}$ for each $t \in \{0,1\},i \in [k]$ such that for all $i,j \in [k]$, $H_{i,j}^{(0)} \subseteq \overline{G_i}$ and $H_{i,j}^{(1)} \subseteq G_i$.
\end{definition}
    For every $k$-width-family $\mathcal F$, a vector $\vb{b} \in \{0,1\}^{k}$, and a $k\times k$ binary matrix $\vb C \in \{0,1\}^{k\times k}$,
    denote with $\mathcal F[\vb b,\vb C]$ the collection of $k^2$ subsets $\{S_{i,j}\}_{i,j\in [k]}$, where for each $i,j \in [k]$, $$S_{i,j} = \begin{cases}
            G_i \cup H_{i,j}^{(0)} & \vb b[i]=0, \vb C[i][j]=0, \\
            G_i \cup (\overline{G_i} \setminus H_{i,j}^{(0)}) & \vb b[i]=0, \vb C[i][j]=1,\\ \overline{G_i} \cup H_{i,j}^{(1)} & \vb b[i]=1, \vb C[i][j]=0, \\
            \overline{G_i} \cup (G_i \setminus H_{i,j}^{(1)}) & \vb b[i]=1, \vb C[i][j]=1.
                   \end{cases}$$
Recall the following property for a collection (as defined in~\cite{EFNTW19}):%
\begin{definition}
    A collection $\mathcal{S}$ of subsets in $[m]$ is called {\em $\ell$-sparse} if for all $T_1,T_2,\dots,T_{\ell-1} \in \mathcal{S}$, $\bigcup_{j=1}^{\ell-1} T_j \ne [m]$.
\end{definition}

In words, an $\ell$-sparse collection requires at least $\ell$ elements to cover all of $[m]$.
We will now use $\ell$-sparseness to define another property (mildly generalizing the corresponding property in \cite{EFNTW19}):
\begin{definition}
    A $k$-width-family is \emph{$\ell$-independent} if for all $\vb{b} \in \{0,1\}^{k},\vb C\in \{0,1\}^{k\times k}$, the collection $\mathcal F[\vb b,\vb C]$ is $\ell$-sparse. 
\end{definition} 

For exposition, we write an example of a $2$-independent $2$-width-family in \autoref{app:example}. 
By using the probabilistic method, we show that there exists such a family for {large enough} values of $k$ and $\ell$:
\begin{lemma}\label{lemma:large-family-existence}
    For sufficiently large $m$, there exists an 
    $\ell$-independent $k$-width-family, where $\ell=\frac{1}{4} \log m$ and $k=\exp\left({\frac{2\sqrt{m}}{\log m}}\right)$.
\end{lemma}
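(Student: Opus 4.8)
The plan is to construct an $\ell$-independent $k$-width-family by choosing all of the sets $G_i$, $H_{i,j}^{(0)}$, and $H_{i,j}^{(1)}$ at random from appropriate distributions, and then applying a union bound over all the ways the $\ell$-sparseness condition could fail. First I would sample each $G_i$ by including each item of $[m]$ independently with probability $1/2$, so that $G_i$ and $\overline{G_i}$ are each of size roughly $m/2$ with high probability. Conditioned on $G_i$, I would then sample each $H_{i,j}^{(0)} \subseteq \overline{G_i}$ by including each item of $\overline{G_i}$ independently with probability $1/2$ (and symmetrically $H_{i,j}^{(1)} \subseteq G_i$ with each item of $G_i$ included with probability $1/2$). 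The key structural observation is that each of the four possible forms of $S_{i,j}$ — namely $G_i \cup H_{i,j}^{(0)}$, $G_i \cup (\overline{G_i}\setminus H_{i,j}^{(0)})$, $\overline{G_i}\cup H_{i,j}^{(1)}$, $\overline{G_i}\cup(G_i\setminus H_{i,j}^{(1)})$ — is a set that contains a fixed ``half'' ($G_i$ or $\overline{G_i}$) plus a uniformly random subset of the complementary half. Thus the complement $\overline{S_{i,j}}$ is always a uniformly random subset of a fixed half of $[m]$ (namely of $\overline{G_i}$ or of $G_i$, depending on $\vb b[i]$), of expected size about $m/4$.

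The heart of the argument is the union bound. For a fixed $\vb b$ and $\vb C$, the collection $\mathcal F[\vb b, \vb C]$ fails to be $\ell$-sparse iff there exist $T_1, \dots, T_{\ell-1}$ in it with $\bigcup_j T_j = [m]$, equivalently $\bigcap_j \overline{T_j} = \emptyset$. Since each $\overline{T_j} = \overline{S_{i_j, j_j}}$ is (conditioned on the relevant $G_{i_j}$) an independent uniformly random subset of a set of size $\approx m/2$, the probability that a fixed item is \emph{not} covered by $\overline{T_j}$ — i.e. lies in $T_j$ — needs care: if the item lies outside the fixed half that $\overline{T_j}$ is drawn from, it is covered by $T_j$ deterministically; if it lies inside that half, it is excluded from $\overline{T_j}$ with probability $1/2$ independently. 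The cleanest route is: for a single item $e \in [m]$, $\Pr[e \text{ covered by } \bigcup_{j=1}^{\ell-1} T_j] \le 1 - 2^{-(\ell-1)}$ is wrong in the wrong direction, so instead I should bound $\Pr[\bigcap_j \overline{T_j} = \emptyset]$ by observing that an item $e$ that lies in $\bigcap_j G_{i_j}$ (or in a consistently-chosen half across all $j$) is excluded from all $\overline{T_j}$ with probability $2^{-(\ell-1)}$; since we can find such an item with constant probability given the $G$'s are balanced, $\Pr[\text{fixed } e \in \bigcap_j \overline{T_j}] \ge c \cdot 2^{-(\ell-1)}$ for the right item, and then $\Pr[\bigcap_j \overline{T_j} = \emptyset] \le (1 - c\, 2^{-(\ell-1)})^{\Omega(m)} \le \exp(-\Omega(m) \cdot 2^{-\ell})$. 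With $\ell = \frac14 \log m$ this is $\exp(-\Omega(m^{3/4}))$.

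Finally I would count the number of events to union over. There are $2^k$ choices of $\vb b$, $2^{k^2}$ choices of $\vb C$, at most $(k^2)^{\ell - 1}$ ways to choose the tuple $(T_1,\dots,T_{\ell-1})$ from the $k^2$ sets of $\mathcal F[\vb b,\vb C]$, and a further $O(\exp(\sqrt m))$-type factor from the probability that some $G_i$ is too unbalanced (handled by a separate Chernoff bound and union bound over $i\in[k]$). So the total failure probability is at most roughly $2^{k^2} \cdot k^{2\ell} \cdot \exp(-\Omega(m^{3/4})) + k \cdot \exp(-\Omega(m))$, and plugging in $k = \exp(2\sqrt m / \log m)$ gives $k^2 = \exp(4\sqrt m/\log m)$, which is $\exp(o(m^{3/4}))$, so the product is $o(1)$ and the desired family exists. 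The main obstacle I anticipate is getting the single-item non-coverage probability bound stated correctly and with the right constant — one has to be careful that the ``fixed half'' that $\overline{S_{i,j}}$ lives in depends on $i$ (through $\vb b[i]$) but \emph{not} on $j$, so across the $\ell - 1$ chosen sets the relevant halves can differ; the fix is to note that with constant probability the $G_i$'s appearing are balanced enough that $\bigcap$ of the (at most $\ell-1$ distinct) relevant halves still has size $\Omega(m / 2^{\ell})$, which for $\ell = \frac14\log m$ is $\Omega(m^{3/4})$, still large enough to drive the bound. I would lay out this balancedness bookkeeping as a short preliminary claim before the main union bound, to keep the calculation clean.
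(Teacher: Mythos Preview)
Your random construction matches the paper's, but the union bound is broken. You write the failure probability as roughly $2^{k^2}\cdot k^{2\ell}\cdot\exp(-\Omega(m^{3/4}))$ and then claim this is $o(1)$ because ``$k^2=\exp(4\sqrt m/\log m)$ is $\exp(o(m^{3/4}))$''. But the factor in your bound is $2^{k^2}$, not $k^2$: with $k=\exp(2\sqrt m/\log m)$, the quantity $2^{k^2}$ is doubly exponential in $m$ and annihilates any singly-exponential term like $\exp(-\Omega(m^{3/4}))$. Union-bounding over all $2^k\cdot 2^{k^2}$ choices of $(\vb b,\vb C)$ cannot work here.

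The missing observation is that once you fix the index set $I=\{(i_1,j_1),\dots,(i_\ell,j_\ell)\}$, the sets $S_{i,j}$ for $(i,j)\in I$ depend only on the bits $\vb b[i]$ and $\vb C[i][j]$ for those specific coordinates---at most $2\ell$ bits in total. So you only need to union over $\binom{k^2}{\ell}\cdot 2^{2\ell}\le k^{2\ell}$ events, not $2^{k+k^2}$. Separately, your per-item bound is off by a factor: for a fixed item $e$ and a fixed $S_{i,j}$ (with the relevant bits of $\vb b,\vb C$ fixed), $\Pr[e\in S_{i,j}]=3/4$, so $\Pr[e\notin\bigcup_{(i,j)\in I}S_{i,j}]\ge 4^{-\ell}$ and hence $\Pr[\bigcup=[m]]\le(1-4^{-\ell})^m\le\exp(-m\cdot 4^{-\ell})=\exp(-\sqrt m)$, not $\exp(-\Omega(m^{3/4}))$. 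With the corrected union bound, $k^{2\ell}\cdot\exp(-\sqrt m)=\exp(2\ell\ln k-\sqrt m)<1$ for large $m$, since $2\ell\ln k=\tfrac12\log m\cdot\tfrac{2\sqrt m}{\log m}=\sqrt m$ and the strict inequality comes from the slack in $\binom{k^2}{\ell}2^{2\ell}<k^{2\ell}$. This also makes your balancedness bookkeeping and the conditioning on intersections of halves unnecessary: the clean $3/4$ per-item-per-set probability, together with independence across items, does all the work.
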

The proof of \autoref{lemma:large-family-existence} is in \autoref{app:missing-proofs}; the proof uses a (somewhat involved, but standard) application of the probabilistic method. 
In accordance with \cite{EFNTW19}, we translate $\ell$-sparse collections into modified set-cover valuations, denoted as $v_{\mathcal{S}}^{\ell}(\cdot)$. 
This translation is exactly as in \cite{EFNTW19}, but we include it for completeness.
Given an $\ell$-sparse collection $\mathcal S=\{S_1,\ldots,S_d\}$, for every $X\subseteq M$:
$$
\sigma_{\mathcal S}(X)=\begin{cases}
    \min \left\{|Y| \hspace{0.25em} \Big| \hspace{0.25em} Y\subseteq [d],\ X \subseteq \bigcup_{i\in Y}S_i \right\}  
    &\quad \text{if $X\subseteq \bigcup_{i\in [d]}S_i$} \\
    \max\{\ell,d\}  & \quad\text{otherwise.}
\end{cases}
$$
In words, $\sigma_{\mathcal S}(X)$ is the minimal number of subsets of $\mathcal S$ that cover $X$, if it exists, and otherwise it is a \textquote{large} number. 
We now use $\sigma_{\mathcal S}$ to 
define a valuation function $v_{\mathcal S}^\ell$. For every subset of items  $X\subseteq M$, if $\phi_{\mathcal S}(X)<\frac{\ell}{2}$, set $v_{\mathcal S}^\ell(X)=\sigma_{\mathcal S}(X)$ and 
$v_{\mathcal S}^\ell(\overline X)=\ell -\sigma_{\mathcal S}(X)$. For every subset that remains undefined, let $v_{\mathcal S}^\ell(\overline X)=\frac{\ell}{2}$. 
In principle, it is possible that $v_{\mathcal S}^\ell$ is not well defined. However, \cite{EFNTW19} shows that: 
\begin{lemma}[\cite{EFNTW19}]\label{lemma::valuation-properties-EFNTW19}
    For any $\ell$-sparse collection $\mathcal{S}$ and $v_{\mathcal{S}}^{\ell}(\cdot)$:
    \begin{enumerate}
            \item $v_{\mathcal{S}}^{\ell}(\cdot)$ is well defined. 
        \item $v_{\mathcal{S}}^{\ell}(\cdot)$ is monotone, normalized and subadditive.
        \item For every $S \subseteq [m]$, $v_{\mathcal{S}}^{\ell}(S) + v_{\mathcal{S}}^{\ell}(\overline{S}) = \ell$. \label{condicondi-identical}
        \item For every $S \in \mathcal{S}$, $v_{\mathcal{S}}^{\ell}(S)=1$ and $v_{\mathcal{S}}^{\ell}(\overline{S})=\ell - 1$.
    \end{enumerate}
\end{lemma}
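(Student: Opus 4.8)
The plan is to reduce the entire lemma to elementary facts about the minimum–set–cover function $\sigma_{\mathcal{S}}$ together with one structural consequence of $\ell$-sparseness, namely the inequality
\[ \sigma_{\mathcal{S}}(X) + \sigma_{\mathcal{S}}(\overline{X}) \ge \ell \qquad \text{for every } X \subseteq [m]. \]
First I would record three facts about $\sigma_{\mathcal{S}}$, each by standard reasoning about set covers: (a) $\sigma_{\mathcal{S}}$ is monotone ($X\subseteq Y \Rightarrow \sigma_{\mathcal{S}}(X) \le \sigma_{\mathcal{S}}(Y)$); (b) $\sigma_{\mathcal{S}}$ is subadditive ($\sigma_{\mathcal{S}}(X\cup Y) \le \sigma_{\mathcal{S}}(X)+\sigma_{\mathcal{S}}(Y)$), by taking the union of minimum covers of $X$ and $Y$; and (c) the displayed inequality, since a minimum cover of $X$ together with a minimum cover of $\overline{X}$ covers $[m]$, and $\ell$-sparseness says no $\ell-1$ sets of $\mathcal{S}$ cover $[m]$. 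In all three, the degenerate ``otherwise'' branch of $\sigma_{\mathcal{S}}$ only helps, since its value $\max\{\ell,d\}$ is at least $\ell$ and at least any cover size.

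With these in hand, Properties~1 and~3 are short. By the displayed inequality, at most one of $X,\overline{X}$ can have $\sigma_{\mathcal{S}}(\cdot) < \ell/2$, so the two clauses of the definition of $v_{\mathcal{S}}^{\ell}$ that could both assign a value to a fixed set $X$ (the clause triggered by $X$ itself, and the clause triggered by $\overline{X}$) are mutually exclusive; hence each set receives exactly one value, the ``otherwise $\ell/2$'' clause filling in the rest, giving Property~1. Property~3 then follows by inspecting the three cases $\sigma_{\mathcal{S}}(S)<\ell/2$, $\sigma_{\mathcal{S}}(\overline{S})<\ell/2$, and neither. For Property~4, each $S\in\mathcal{S}$ is covered by the single set $S$, so $\sigma_{\mathcal{S}}(S)=1$, and the displayed inequality gives $\sigma_{\mathcal{S}}(\overline{S})\ge \ell-1$, from which $v_{\mathcal{S}}^{\ell}(S)=1$ and $v_{\mathcal{S}}^{\ell}(\overline{S})=\ell-1$ (the small-$\ell$ corner case is checked directly).

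The substance is Property~2, in particular subadditivity. Normalization is immediate since $\sigma_{\mathcal{S}}(\emptyset)=0<\ell/2$. For the rest I would establish the closed form
\[ v_{\mathcal{S}}^{\ell}(X) = \max\!\bigl\{\, \ell - \sigma_{\mathcal{S}}(\overline{X}),\ \min\{\sigma_{\mathcal{S}}(X),\ \ell/2\} \,\bigr\}, \]
by checking the same three cases and using the displayed inequality to see $\ell-\sigma_{\mathcal{S}}(\overline{X}) \le \sigma_{\mathcal{S}}(X)$. Monotonicity of $v_{\mathcal{S}}^{\ell}$ then follows, since both $\ell-\sigma_{\mathcal{S}}(\overline{\cdot})$ and $\min\{\sigma_{\mathcal{S}}(\cdot),\ell/2\}$ are monotone. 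The closed form also yields $\min\{\sigma_{\mathcal{S}}(X),\ell/2\} \le v_{\mathcal{S}}^{\ell}(X) \le \sigma_{\mathcal{S}}(X)$. For subadditivity $v_{\mathcal{S}}^{\ell}(X\cup Y) \le v_{\mathcal{S}}^{\ell}(X)+v_{\mathcal{S}}^{\ell}(Y)$ I split on $\sigma_{\mathcal{S}}(\overline{X\cup Y})$. If $\sigma_{\mathcal{S}}(\overline{X\cup Y}) \ge \ell/2$, the closed form gives $v_{\mathcal{S}}^{\ell}(X\cup Y)=\min\{\sigma_{\mathcal{S}}(X\cup Y),\ell/2\}$, and then subadditivity of $\sigma_{\mathcal{S}}$, the elementary inequality $\min\{a+b,c\}\le\min\{a,c\}+\min\{b,c\}$, and $v_{\mathcal{S}}^{\ell}\ge\min\{\sigma_{\mathcal{S}},\ell/2\}$ close the case. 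If $\sigma_{\mathcal{S}}(\overline{X\cup Y}) < \ell/2$, the closed form gives $v_{\mathcal{S}}^{\ell}(X\cup Y)=\ell-\sigma_{\mathcal{S}}(\overline{X\cup Y})$; if $\sigma_{\mathcal{S}}(X),\sigma_{\mathcal{S}}(Y)\ge\ell/2$ then $v_{\mathcal{S}}^{\ell}(X)+v_{\mathcal{S}}^{\ell}(Y)\ge\ell$ and we are done, while otherwise (say $\sigma_{\mathcal{S}}(Y)\le\ell/2$) the disjoint decomposition $\overline{X}=\overline{X\cup Y}\cup(\overline{X}\cap Y)$ together with subadditivity and monotonicity of $\sigma_{\mathcal{S}}$ gives $\sigma_{\mathcal{S}}(\overline{X})\le\sigma_{\mathcal{S}}(\overline{X\cup Y})+\sigma_{\mathcal{S}}(Y)$, whence $v_{\mathcal{S}}^{\ell}(X)+v_{\mathcal{S}}^{\ell}(Y)\ge(\ell-\sigma_{\mathcal{S}}(\overline{X}))+\sigma_{\mathcal{S}}(Y)\ge\ell-\sigma_{\mathcal{S}}(\overline{X\cup Y})=v_{\mathcal{S}}^{\ell}(X\cup Y)$. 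This is precisely the argument of \cite{EFNTW19}, reproduced for completeness.

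The one genuinely delicate point, and the step I expect to be the main obstacle, is this second case of the subadditivity proof: it is the unique place where $\ell$-sparseness is used essentially rather than as bookkeeping, and it requires invoking the closed-form bound $v_{\mathcal{S}}^{\ell}\ge\ell-\sigma_{\mathcal{S}}(\overline{\cdot})$ (not the weaker $v_{\mathcal{S}}^{\ell}\ge\min\{\sigma_{\mathcal{S}},\ell/2\}$) in combination with the set decomposition of $\overline{X}$. Everything else amounts to routine manipulation of minimum set-cover sizes.
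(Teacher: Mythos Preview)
The paper does not give its own proof of this lemma: it is stated as a citation to \cite{EFNTW19} and used as a black box. So there is nothing in the paper to compare your argument against.

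That said, your proof is correct and is essentially the standard one. The three facts about $\sigma_{\mathcal{S}}$ (monotonicity, subadditivity, and $\sigma_{\mathcal{S}}(X)+\sigma_{\mathcal{S}}(\overline{X})\ge\ell$ from $\ell$-sparseness) are exactly the right ingredients, and the closed form $v_{\mathcal{S}}^{\ell}(X)=\max\{\ell-\sigma_{\mathcal{S}}(\overline{X}),\min\{\sigma_{\mathcal{S}}(X),\ell/2\}\}$ is a clean way to organize the case analysis. One tiny quibble: in the second case of subadditivity, the negation of ``both $\sigma_{\mathcal{S}}(X),\sigma_{\mathcal{S}}(Y)\ge\ell/2$'' gives you some $\sigma_{\mathcal{S}}(Y)<\ell/2$ (strict), not $\le\ell/2$; this is what you actually need in order to conclude $v_{\mathcal{S}}^{\ell}(Y)=\sigma_{\mathcal{S}}(Y)$ (since generically only $v_{\mathcal{S}}^{\ell}\le\sigma_{\mathcal{S}}$ holds). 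With that one-character fix the argument is complete.
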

We note one counter-intuitive aspect of the valuations defined by $\mathcal S$:  the valuation $v_{\mathcal S}^{\ell}$ has \textquote{low} value for the bundles $S$ that are in the collection $\mathcal S$.

We are now ready to define the class of valuations for which we show hardness. 
Given an
    $\ell$-independent $k$-width-family
$\mathcal{F}=(\mathcal{G},\{\mathcal{H}_i^{(t)}\}_{t \in \{0,1\}, i \in [k]})$ and denote $\mathcal{G}=\{G_1,\dots,G_k\}$ and $\mathcal{H}_i^{(t)}=\{H_{i,1}^{(t)},\dots,H_{i,k}^{(t)}\}$ for each $t \in \{0,1\},i \in [k]$, we define the following sub-class of subadditive valuations:
\begin{align*}
   \SubAdd^*&= \big\{
   v_{\mathcal{F}[\vb{b},\vb{C}]}^\ell(\cdot)\ |\ \vb{b} \in \{0,1\}^{k}, \vb{C} \in \{0,1\}^{k\times k} \big\} \subseteq \SubAdd_m.
\end{align*}
The following claim lists several properties of the valuation functions in $\SubAdd^*$, which we will use in \autoref{subsec::good-implies-bad-4-tuple}: 
\begin{claim}\label{claim::vals-props}
Fix a vector $\vb b\in \{0,1\}^k$ and a matrix $ \vb C\in \{0,1\}^{k\times k}$, and let $v(\cdot)$ be  the valuation 
parameterized by the collection $\mathcal{F}[\vb{b},\vb{C}]$. Then, for every pair of indices $i,j\in [k]$:
    \begin{enumerate}
        \item If $\vb b[i]=0$, then $v(G_i)=1$. \label{b0-case}
        \item If $\vb b[i]=0$ and $\vb C[i][j]=0$, then $v(\overline G_i \setminus H_{i,j}^{(0)})=\ell-1$. \label{b0-c0-case}
        \item If $\vb b[i]=0$ and $\vb C[i][j]=1$, then $v(H_{i,j}^{(0)})=\ell-1$. \label{b0-c1-case}
        \item If $\vb b[i]=1$ and $\vb C[i][j]=0$, then $v(G_i \setminus H_{i,j}^{(1)})=\ell-1$.\label{b1-c0-case}
        \item If $\vb b[i]=1$ and $\vb C[i][j]=1$, then $v(H_{i,j}^{(1)})=\ell-1$.\label{b1-c1-case}
    \end{enumerate}
\end{claim}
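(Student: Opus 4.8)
The plan is to read off all five items from the properties of the modified set-cover valuation $v_{\mathcal S}^\ell$ already established in \autoref{lemma::valuation-properties-EFNTW19}, taking $\mathcal S=\mathcal F[\vb b,\vb C]$ and writing $v=v_{\mathcal S}^\ell$ for the valuation in the statement. The observation that drives everything is that the collection $\mathcal F[\vb b,\vb C]=\{S_{i,j}\}_{i,j\in[k]}$ contains each $S_{i,j}$ by construction, so \autoref{lemma::valuation-properties-EFNTW19}(4) gives $v(S_{i,j})=1$ and $v(\overline{S_{i,j}})=\ell-1$; items 2--5 are then just the assertion that $\overline{S_{i,j}}$ equals a particular named set, and item 1 needs one small additional computation of a minimum cover size.

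For items 2--5 I would argue as follows. Each hypothesis pins down which of the four branches in the definition of $\mathcal F[\vb b,\vb C]$ produces $S_{i,j}$, so it suffices to compute $\overline{S_{i,j}}$ in each case, using only $H_{i,j}^{(0)}\subseteq\overline{G_i}$ (hence $G_i\cap H_{i,j}^{(0)}=\emptyset$) and $H_{i,j}^{(1)}\subseteq G_i$. Concretely: for item 2, $S_{i,j}=G_i\cup H_{i,j}^{(0)}$ and $\overline{S_{i,j}}=\overline{G_i}\setminus H_{i,j}^{(0)}$; for item 3, $S_{i,j}=G_i\cup(\overline{G_i}\setminus H_{i,j}^{(0)})=[m]\setminus H_{i,j}^{(0)}$, so $\overline{S_{i,j}}=H_{i,j}^{(0)}$; for item 4, $S_{i,j}=\overline{G_i}\cup H_{i,j}^{(1)}$ and $\overline{S_{i,j}}=G_i\setminus H_{i,j}^{(1)}$; for item 5, $S_{i,j}=\overline{G_i}\cup(G_i\setminus H_{i,j}^{(1)})=[m]\setminus H_{i,j}^{(1)}$, so $\overline{S_{i,j}}=H_{i,j}^{(1)}$. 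In each case $v(\overline{S_{i,j}})=\ell-1$ by \autoref{lemma::valuation-properties-EFNTW19}(4), which is exactly the claimed equality.

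For item 1 I would show directly that $\sigma_{\mathcal S}(G_i)=1$ and then unwind the definition of $v_{\mathcal S}^\ell$. The key point is that when $\vb b[i]=0$, \emph{every} set $S_{i,j}$ (over $j\in[k]$) contains $G_i$: if $\vb C[i][j]=0$ this is immediate from $S_{i,j}=G_i\cup H_{i,j}^{(0)}$, and if $\vb C[i][j]=1$ then $S_{i,j}=[m]\setminus H_{i,j}^{(0)}\supseteq G_i$ since $H_{i,j}^{(0)}\subseteq\overline{G_i}$. Hence $G_i$ is covered by a single member of $\mathcal S$, so $\sigma_{\mathcal S}(G_i)\le 1$, and since $G_i\ne\emptyset$ (as $v(G_i)=1>0$ cannot hold otherwise for a normalized valuation) we get $\sigma_{\mathcal S}(G_i)=1$. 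Because $1<\ell/2$ for the parameters of \autoref{lemma:large-family-existence} (indeed for any $\ell>2$), the first branch of the definition of $v_{\mathcal S}^\ell$ applies to $X=G_i$ and yields $v(G_i)=\sigma_{\mathcal S}(G_i)=1$; that this value is unambiguous is precisely \autoref{lemma::valuation-properties-EFNTW19}(1).

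I do not expect a genuine obstacle: the only step that is not pure substitution into \autoref{lemma::valuation-properties-EFNTW19} is the observation in item 1 that, with $\vb b[i]=0$, $S_{i,j}$ contains $G_i$ no matter the bit $\vb C[i][j]$, and that in turn is just the inclusion $H_{i,j}^{(0)}\subseteq\overline{G_i}$ built into the definition of a $k$-width-family. Everything else is elementary set algebra, so the full proof should be short.
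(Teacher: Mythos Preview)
Your approach is correct and matches the paper's, which simply says the claim ``directly follows from the construction and \autoref{lemma::valuation-properties-EFNTW19}''; your proposal makes this explicit in exactly the intended way. One small nit: in item~1 your justification that $G_i\ne\emptyset$ (``as $v(G_i)=1>0$ cannot hold otherwise'') is circular, since $v(G_i)=1$ is precisely what you are proving; a clean fix is to observe that $\ell$-independence of $\mathcal{F}$ already forces every $G_i$ to be nonempty (otherwise, taking $\vb b'$ with $\vb b'[i]=1$ yields $S_{i,j}=[m]$ in $\mathcal{F}[\vb b',\vb C]$, contradicting $\ell$-sparseness for $\ell\ge 2$).
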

We skip the proof of \autoref{claim::vals-props}, as it directly follows from the construction and \autoref{lemma::valuation-properties-EFNTW19}. 

The class of single-minded valuations $\SingleM^* \subseteq \SingleM_m$ satisfies that each valuation is parameterized by an index   $i\in [k]$ and $\delta\in \{0,1\}$ such that:
\begin{equation} \label{eq::sm-def}
\forall X\subseteq M, \quad   v_{i,\delta}(X)=\begin{cases}
    (\sqrt{3}-1)\ell + \delta \quad &\overline G_i\subseteq X, \\
    0 \quad & \text{otherwise.}
\end{cases}
\end{equation}

\subsection{Getting good welfare implies ``no bad 4-tuple''} \label{subsec::good-implies-bad-4-tuple} 

Now, fix a truthful communication-efficient three-player mechanism $\mathcal A$ that gives an approximation strictly better than $\frac{\sqrt 3-1}{2}+\frac{3}{\ell}$ when all the bidders' valuation classes are equal to $\SubAdd_m\cup \SingleM_m$, where $\ell=\frac{1}{4} \log m$.
For convenience, we name the bidders Alice, Bob and Charlie.

The auction rule $\mathcal{A}$ defines two other communication problems which we reason about and use to define a notion of ``bad $4$-tuples''.
These two communication problems are to calculate the social welfare of $\mathcal{A}$ and to calculate the menu in $\mathcal{A}$.
Let $\mathcal{SW}(v_A,v_B,v_C)$ be the communication problem between all three bidders that computes the social welfare that $\mathcal{A}$ gets with valuations $(v_A,v_B,v_C)$, and denote the transcript function of its most efficient protocol by $\tau_{\mathcal{SW}}(v_A,v_B,v_C)$. Note that $\cc(\mathcal{SW})\le \cc(\mathcal A)+\poly(m)$.\footnote{
  This follows because one can always calculate $\mathcal{SW}$ by simply calculating $\mathcal{A}$, then having each bidder send their value for the set of items they receive (and we assume that bidders' valuations use precision $\poly(m)$, as discussed in \autoref{sec:prelims}).
}
Also, let $\mathcal{P}(v_B,v_C)$ be the communication problem between Bob and Charlie that finds the menu $\Menu_A(v_B,v_C)$ for Alice defined by the auction $\mathcal{A}$, and denote the transcript function of its most efficient protocol by $\tau_{\mathcal{P}}(v_B,v_C)$. 

We now define $4$-bad tuples and show that they cannot exist when $\mathcal{A}$ gets a good approximation to the optimal welfare. 
\begin{proposition}\label{prop::bad-tuple}
    Let $v^{(1)},v^{(2)},v^{(3)},v^{(4)}$ be valuations in $\mathcal \SubAdd^*$, parameterized by the vectors  $\vb b^{(1)},\vb b^{(2)}, \vb b^{(3)}, \vb b^{(4)}\in \{0,1\}^k$ and the matrices $\vb C^{(1)},\vb C^{(2)}, \vb C^{(3)}, \vb C^{(4)}\in \{0,1\}^{k\times k}$, respectively. 
    We say that $\{ (\vb b^{(j)}, \vb C^{(j)}) \}_{j\in [4]}$ is a \emph{bad $4$-tuple} if there exist $i^{\ast},j_1^\ast,j_2^\ast \in [k]$ such that the five following conditions hold simultaneously:  
        \begin{enumerate}
        \item $\vb b^{(1)}[i^*]=\vb b^{(2)}[i^*]=0$ and $\vb b^{(3)}[i^*]=\vb b^{(4)}[i^*]=1$. \label{bad-cond1}
        \item $\vb C^{(1)}[i^\ast][j_1^\ast]=0$ and $\vb C^{(2)}[i^\ast][j_1^\ast]=1$.  \label{bad-cond2}
                \item $\vb C^{(3)}[i^\ast][j_2^\ast]=0$ and $\vb C^{(4)}[i^\ast][j_2^\ast]=1$. \label{bad-cond3} 
        \item $\tau_{\mathcal{P}}(v^{(1)},v^{(1)})=\tau_{\mathcal{P}}(v^{(2)},v^{(2)})=\tau_{\mathcal{P}}(v^{(3)},v^{(3)})=\tau_{\mathcal{P}}(v^{(4)},v^{(4)})$, i.e. the transcript of the protocol that computes the menu is identical for all these pairs. \label{condi-4}
        \item    $\tau_{\mathcal{SW}}(v_{i^*,0}, v^{(3)}, x^{(3)})=\tau_{\mathcal{SW}}(v_{i^*,0}, v^{(4)}, x^{(4)})$,
        i.e. the transcript of the protocol that computes the social welfare of $\mathcal A$ is identical for both of these valuation profiles.\footnote{We remind that $v_{i^*,0}$ is the single-minded valuation defined in \autoref{eq::sm-def}.} \label{bad-condi5}
    \end{enumerate}
    Then, there cannot exist any bad $4$-tuples (i.e., for all such $\{ (\vb b^{(j)}, \vb C^{(j)}) \}_{j\in [4]}$, there cannot exist  $i^{\ast},j_1^\ast,j_2^\ast \in [k]$ such that all five above conditions hold).
\end{proposition}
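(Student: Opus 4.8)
The proof will argue by contradiction: suppose a bad $4$-tuple exists, and derive that $\mathcal A$ must get welfare ratio at most roughly $\frac{\sqrt 3-1}{2}+\frac{3}{\ell}$ on some instance, contradicting the assumption. The structure mirrors the failed Proposition~\ref{prop::false}, but with the crucial difference that condition~\ref{bad-condi5} (identical social-welfare transcripts on two instances with the single-minded Alice) now lets us pin down the actual welfare $\mathcal A$ achieves, closing the gap from before. Concretely, condition~\ref{condi-4} says Bob and Charlie present Alice the same menu $M$ in all four pairs $(v^{(j)},v^{(j)})$; by a standard rectangle argument this forces $\Menu_A(v^{(1)},v^{(2)}) = \Menu_A(v^{(3)},v^{(4)}) = M$ as well. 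Now set Alice's valuation to the single-minded $v_{i^*,0}$, which desires exactly $\overline{G_{i^*}}$ at value $(\sqrt 3-1)\ell$. We branch on whether $M(\overline{G_{i^*}})$ is below or above $(\sqrt 3-1)\ell$ (perturbing the single-minded value slightly to avoid equality, which is harmless since $\mathcal A$ and the tuple are fixed first).

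\textbf{The two cases.} In the first case, $M(\overline{G_{i^*}}) < (\sqrt 3-1)\ell$, so Alice wins a bundle containing $\overline{G_{i^*}}$ whenever the menu is $M$. Feed Bob and Charlie $(v^{(1)}, v^{(2)})$: by conditions~\ref{bad-cond1}--\ref{bad-cond2} and \autoref{claim::vals-props}, both of their valuations have low value ($\le 1$) on everything inside $G_{i^*}$ once they've already ``given up'' $\overline{G_{i^*}}$ to Alice, so $\mathcal A$'s welfare is at most $(\sqrt 3-1)\ell + 2$. But the allocation giving $\overline{G_{i^*}}$-avoiding bundles — namely $H^{(0)}_{i^*,j_1^*}$ to the $\vb C[i^*][j_1^*]=1$ player and $\overline{G_{i^*}}\setminus H^{(0)}_{i^*,j_1^*}$ to the other (using parts~\ref{b0-c0-case} and~\ref{b0-c1-case} of the claim) — achieves welfare $2(\ell-1)$ among Bob and Charlie, giving ratio $\approx \frac{(\sqrt 3-1)\ell+2}{2(\ell-1)}$. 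In the second case, $M(\overline{G_{i^*}}) > (\sqrt 3-1)\ell$, so Alice wins \emph{nothing}. Here is where condition~\ref{bad-condi5} does the work: since $\tau_{\mathcal{SW}}(v_{i^*,0},v^{(3)},x^{(3)}) = \tau_{\mathcal{SW}}(v_{i^*,0},v^{(4)},x^{(4)})$, the auction's full outcome — hence the welfare it achieves — is the \emph{same} on these two profiles. On one of them the optimal welfare is only $\approx \ell$ but on the other it is $\approx (\sqrt 3-1)\ell + 2(\ell-1)$ (allocating $\overline{G_{i^*}}$ to Alice plus the good $\ell-1,\ell-1$ split to Bob and Charlie using parts~\ref{b0-case},~\ref{b0-c0-case},~\ref{b0-c1-case}), yet $\mathcal A$ gives the same (low) welfare on both; so on the high-optimum profile the ratio is at most $\approx \frac{\ell}{(\sqrt 3-1)\ell + 2(\ell-1)}$.

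\textbf{Optimizing and the main obstacle.} Taking the worse of the two ratios and letting $\ell$ be large, the value $x = (\sqrt 3-1)\ell$ equalizes $\frac{x}{2\ell}$ and $\frac{\ell}{x+2\ell}$ — both approach $\frac{\sqrt 3-1}{2}$ — and a careful bookkeeping of the $+2$ and $-1$ lower-order terms yields the explicit $\frac{3}{\ell}$ slack in the theorem statement. This part is the routine calculation already sketched in the footnote after Proposition~\ref{prop::false}. I expect the genuinely delicate step to be the case $M(\overline{G_{i^*}}) > (\sqrt 3-1)\ell$: I must verify that the two profiles $(v_{i^*,0}, v^{(3)}, x^{(3)})$ and $(v_{i^*,0}, v^{(4)}, x^{(4)})$ really do have the claimed asymmetry in optimal welfare — that one of them admits only $\approx \ell$ total welfare while the other admits $\approx (\sqrt 3-1+2)\ell$ — and that the $\ell$-independence (equivalently $\ell$-sparseness of $\mathcal F[\vb b,\vb C]$) is exactly what forces the low-welfare profile's optimum down to $\ell$ even when all items may be allocated between Bob and Charlie. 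Establishing that lower bound on one side while simultaneously certifying the high allocation on the other, using only conditions~\ref{bad-cond1}--\ref{bad-condi5} and the structure of $\mathcal F[\vb b,\vb C]$, is the technical heart; the rest is the rectangle/transcript argument plus arithmetic.
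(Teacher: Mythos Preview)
Your Case~1 is fine and matches the paper (modulo using $v_{i^*,1}$ rather than ``perturbing'', which is what the construction's $\delta\in\{0,1\}$ is for). Your Case~2, however, has a genuine conceptual error.

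You treat the two \emph{diagonal} profiles $(v_{i^*,0},v^{(3)},v^{(3)})$ and $(v_{i^*,0},v^{(4)},v^{(4)})$ as the endpoints of the comparison, claiming one has \emph{optimal} welfare $\approx\ell$ (to be forced by $\ell$-sparseness) while the other has optimal $\approx(\sqrt3-1)\ell+2(\ell-1)$. Neither claim is true. On any diagonal, Bob and Charlie have the \emph{same} valuation, so they cannot both achieve $\ell-1$ on complementary pieces; conversely, Alice can always be given $\overline{G_{i^*}}$ for $(\sqrt3-1)\ell$ and Bob can take $G_{i^*}$ for $\ell-1$, so the diagonal optimum is $\ge \sqrt3\,\ell-1$, not $\approx\ell$. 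Sparseness plays no role here: it is used only to make the valuation functions well-defined via \autoref{lemma::valuation-properties-EFNTW19}, not to bound any optimum in this proof.

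The correct argument is: (i)~bound the \emph{achieved} welfare of $\mathcal A$ on the diagonal $(v_{i^*,0},v^{(3)},v^{(3)})$ by $\ell$ --- Alice contributes $0$ because her menu there is still $M$ (condition~\ref{condi-4} gives identical $\tau_{\mathcal P}$ on all four diagonals, hence $M$ is Alice's menu on $(v^{(3)},v^{(3)})$ too) and $M(\overline{G_{i^*}})>(\sqrt3-1)\ell$, while Bob and Charlie together contribute at most $\ell$ by \autoref{lemma::valuation-properties-EFNTW19}(\ref{condicondi-identical}) since they are identical; (ii)~apply the rectangle argument to condition~\ref{bad-condi5} to deduce that the \emph{cross} profile $(v_{i^*,0},v^{(3)},v^{(4)})$ has the same $\mathcal{SW}$-transcript and hence achieved welfare $\le\ell$; (iii)~compute the optimum on that cross profile as $\ge(\sqrt3-1)\ell+2(\ell-1)$, using Properties~\ref{b1-c0-case} and~\ref{b1-c1-case} of \autoref{claim::vals-props} (not \ref{b0-case}--\ref{b0-c1-case}, since $\vb b^{(3)}[i^*]=\vb b^{(4)}[i^*]=1$). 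The rectangle step from diagonals to the cross is the whole point of condition~\ref{bad-condi5}; without it you never reach a profile where the optimum is high enough.
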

Informally, this means that for every 
\textquote{sufficiently different} elements of the valuation family $\SubAdd^*$, their transcript either in the protocol of menu computation of $\mathcal A$ or the protocol that computes the welfare of $\mathcal A$ should differ.    
\begin{proof}[Proof of \autoref{prop::bad-tuple}]
We will show that if all conditions hold simultaneously for some indices $i^\ast,j_1^\ast,j_2^\ast$, then the approximation ratio must be strictly smaller than $\frac{\sqrt 3-1}{2}+\frac{3}{\ell}$, which leads to a contradiction.  We do that by showing that in particular, the mechanism outputs a bad approximation for one of the valuation profiles $(v_{i^*,1}, v^{(1)},v^{(2)})$  or
$(v_{i^*,0}, v^{(3)},v^{(4)})$. 
Note that by condition~\ref{condi-4} and by the rectangle argument, $\tau_{\mathcal{P}}(v^{(1)},v^{(2)})=\tau_{\mathcal{P}}(v^{(3)},v^{(4)})$, so in particular 
the menus presented to Alice given the valuations $(v^{(1)},v^{(2)})$ and $(v^{(3)},v^{(4)})$ are identical. 
Denote this menu with $M_A=\Menu_A^\mathcal{A}(v^{(1)},v^{(2)})=\Menu_A^\mathcal{A}(v^{(3)},v^{(4)})$ and recall that $M_A(S)$ is the price in the menu for any bundle $S \subseteq [m]$.
Note that by \autoref{prop-menu-mono}, we can assume that all prices are non-negative.
We complete the proof by case analysis. 

Note that if $M_A(\overline {G}_{i^\ast}) 
\le (\sqrt{3}-1)\ell$, then given the valuation profile $(v_{i^*,1},v^{(1)}, v^{(2)})$, Alice wins $\overline {G}_{i^\ast}$ because all prices are non-negative, so this is the only valuable bundle for her. 
However, since both valuations $v^{(1)},v^{(2)}$ are parameterized by $\vb b^{(1)}, \vb b^{(2)}$ such that $\vb b^{(1)}[i^\ast]= \vb b^{(2)}[i^\ast]$, by Property~\ref{b0-case} in \autoref{claim::vals-props}, we have that $v^{(1)}(G_{i^\ast})=v^{(2)}(G_{i^\ast})=1$. As a result, the total welfare of the allocation of $\mathcal A$ is at most $((\sqrt{3}-1)\ell +1)+2$. However, the optimal welfare is at least $2(\ell - 1)$, obtained by giving $\overline{G}_{i^*} \setminus H_{i^*,j^*_1}^{(0)}$ to Bob and $H_{i^*,j^*_1}^{(0)}$ to Charlie (according to Property~\ref{b0-c0-case}~and~\ref{b0-c1-case} of \autoref{claim::vals-props}).

However, if $M_A(\overline{G}_{i^*})>(\sqrt{3}-1)\ell$, we claim it still implies a failure on another valuation profile $(v_{i^*,0}, v^{(3)},v^{(4)})$. 
For that, we analyze the output of the mechanism $\mathcal A$ on $(v_{i^*,0}, v^{(3)},v^{(3)})$. 

Since the price of $\overline G_{i^\ast}$ that Bob and Charlie present to Alice is 
 is $M_A(\overline{G}_{i^*})$ given 
$(v^{(3)},v^{(3)})$, she does not win $\overline G_{i^\ast}$, resulting a welfare of zero. 
In addition, since Bob and Charlie's valuations are identical, the total welfare from them is at most $\ell$ by Condition~\ref{condicondi-identical}~of~\autoref{lemma::valuation-properties-EFNTW19}. 
Therefore, the social welfare that $\mathcal A$ obtains on $(v_{i^*,0}, v^{(3)},v^{(3)})$ is at most $\ell$. 
Recall that by assumption, $(v_{i^*,0}, v^{(3)},v^{(3)})$ and $(v_{i^*,0}, v^{(4)},v^{(4)})$ have the same transcript in the protocol that computes the social welfare.
By the rectangle argument, $(v_{i^*,0}, v^{(3)},v^{(4)})$ must also have the same transcript, and thus the welfare of the allocation from $\mathcal A$ in this case is also at most $\ell$. However, the optimal welfare from $(v_{i^*,0}, v^{(3)},v^{(4)})$ is at least $(\sqrt{3}-1)\ell + 2(\ell - 1)$, obtained by giving $\overline{G}_{i^*}$ to Alice, $G_{i^*} \setminus H_{i^*,j_2^*}^{(1)}$ to Bob, and $H_{i^*,j_2^*}^{(1)}$ to Charlie.

    Thus, the best possible approximation ratio of the mechanism $\mathcal A$ is at most:
    \begin{equation*}\label{eq::to-explain}
      \max\Bigg\{\frac{((\sqrt{3}-1)\ell+1)+2}{2(\ell-1)}, \frac{\ell}{(\sqrt{3}-1)\ell+2(\ell-1)}\Bigg\} < \frac{\sqrt{3}-1}{2} + \frac{3}{\ell}. \tag{for $\ell \ge 3$} 
    \end{equation*}
By that, we get a contradiction, which completes the proof. 
\end{proof}

\subsection{``No bad $4$-tuple'' implies high communication}
\label{subsec::no-bad-implies-high-cc}
We are now ready to finalize the proof of \autoref{thm:truthful-lower-bound-for-subadditive-cup-sm}.  
Note that by \autoref{prop::bad-tuple}, a communication-efficient truthful mechanism with approximation ratio no worse than $\frac{\sqrt 3-1}{2}+\frac{3}{\ell}$ cannot have a ``bad $4$-tuple''. We will now show
$e^{\Omega(\frac{\sqrt{m}}{\log m})}$ bits of communication is necessary for the mechanism $\mathcal{A}$ to avoid such ``bad $4$-tuples''.

    For any fixed transcript $T$ for $\mathcal{P}$, let $C(T)$ be the set of pairs $(\vb{b},\vb{C})\in \{0,1\}^k\times \{0,1\}^{k\times k}$ such that $\mathcal{P}$ uses transcript  $T$ on $(v_{\vb{b},\vb{C}},v_{\vb{b},\vb{C}})$, i.e., $C(T) = \{(\vb{b},\vb{C}) : \tau_{\mathcal{P}}(v_{\vb{b},\vb{C}},v_{\vb{b},\vb{C}}) = T\}$.
    For any index $i \in [k]$ and bit $z \in \{0,1\}$, we further define 
    for each transcript the following subsets of length-$k$ bit-vectors: 
    $$C_{i,z}(T)=\{\vb{C}[i]: (\vb{b},\vb{C}) \in C(T), \vb{b}[i]=z\} \subseteq \{0,1\}^k.$$
    In other words, for every transcript $T$ of the the menu computation protocol $\mathcal{P}$, the subset $C_{i,z}(T)$ contains for every $(\vb b,\vb C)\in C(T)$, the $i$'th rows of $\vb{C}$ only for cases in which $\vb b[i]=z$. Note that $C_{i,z}(T)$ may be empty.

We now claim that for every transcript $T$ of the menu computation protocol, and for every index $i$, it holds that
$|C_{i,0}(T)| \le 1$ or $|C_{i,1}(T)| \le 2^{\cc(\mathcal{SW})}$. The reason for it is
that if the converse holds, i.e. there exists an index $i^\ast$ such that both 
$|C_{i^\ast,0}(T)| > 1$ and $|C_{i^\ast,1}(T)| > 2^{\cc(\mathcal{SW})}$, then there is a bad $4$-tuple.\footnote{To see why, observe that by the pigeonhole principle, if both $|C_{i^\ast,0}(T)|$ and $|C_{i^\ast,1}(T)|$  are greater than $1$, then there exist $(\vb b^{(1)},\vb C^{(1)}),(\vb b^{(2)},\vb C^{(2)}),(\vb b^{(3)},\vb C^{(3)}),(\vb b^{(4)},\vb C^{(4)})$ that satisfy Property \ref{bad-cond1}, \ref{bad-cond2}, \ref{bad-cond3}, and \ref{condi-4} of \autoref{prop::bad-tuple} for this $i^\ast$ and for some $j^\ast_1, j^\ast_2$, since they all share transcript $T$. Furthermore, the fact that $|C_{i^\ast,1}(T)|$ is greater than the number of transcripts for the protocol that computes the social welfare, implies that there are two vectors in $C_{i^\ast,1}(T)$ that also share the same transcript for the protocol $\mathcal{SW}$.}

Note that $|C_{i,0}(T)| \le 1$ or $|C_{i,1}(T)| \le 2^{\cc(\mathcal{SW})}$ implies that:
\begin{equation}\label{eq-ub-c}
 |C_{i,0}(T)| + |C_{i,1}(T)| \le 2^{k} + 2^{\cc(\mathcal{SW})}.
\end{equation}
The explanation for it is by a simple case analysis. If $|C_{i,0}(T)| \le 1$, then: 
$$|C_{i,0}(T)|+|C_{i,1}(T)| \le 1+2^k\le 2^{\cc(\mathcal{SW})}+ 2^{k}.$$ 
If $|C_{i,1}(T)| \le 2^{\cc(\mathcal{SW})}$, then similarly: 
$$|C_{i,0}(T)| + |C_{i,1}(T)| \le 2^{k} + 2^{\cc(\mathcal{SW})}.$$ 

By definition and by plugging in \autoref{eq-ub-c}, we have that:
\begin{equation}\label{eq-ub-bigc}
    |C(T)| \le \prod_{i \in [k]} (|C_{i,0}(T)| + |C_{i,1}(T)|) \le (2^{k} + 2^{\cc(\mathcal{SW})})^k.
\end{equation}
Denote with $\mathcal T$ the subset of transcripts in the menu protocol, i.e. $\mathcal T=\{\tau (v_B,v_C): (v_B,v_C)\in \SubAdd^\ast)\}$.  
Observe that:
\begin{align*}
    2^{k+k^2} &= \sum_{T\in \mathcal T} C(T) \tag{the total number of pairs $(\vb{b},\vb{C})$ is $2^{k+k^2}$} \\ 
    &\le |\mathcal T| \max_{T\in \mathcal T}{C(T)} \\
    &\le 2^{\cc(\mathcal P)} \cdot (2^{k} + 2^{\cc(\mathcal{SW})})^k.
    \tag{by \autoref{eq-ub-bigc}}
\end{align*}
Therefore, we have:
\begin{equation}\label{eq-sw}
2^{\cc(\mathcal{P})} \ge \frac{2^{k+k^2}}{(2^{k} + 2^{\cc(\mathcal{SW})})^k} 
=\left(\frac{2^{1+k}}{2^k + 2^{\cc(\mathcal{SW})}}\right)^k
=\left(\frac{2}{1 + 2^{\cc(\mathcal{SW})-k}}\right)^k.    
\end{equation}
To conclude the proof, it suffices to show that 
$\frac{2}{1 + 2^{\cc(\mathcal{SW})-k}} > B$ for some constant $B > 1$.
We can show this as follows. %
Consider we use an $\ell$-independent and $k$-width family 
with $k=e^{\frac{2\sqrt{m}}{\log m}},\ell=\frac{1}{4} \log m$, which is guaranteed to exist by \autoref{lemma:large-family-existence}.
Let $d=\cc(\mathcal{SW})-\cc(\mathcal A)$. We remind that $d$ is at most polynomial in $m$.
Observe that if $\cc(\mathcal A)\ge k-d-1$, then $\cc(\mathcal A)=\exp(\Omega(\frac{2\sqrt{m}}{\log m}))$, and we are done.
Thus, we can assume that $\cc(\mathcal A)< k-d-1$, which implies that $\cc(\mathcal{SW})\le k-1$. Therefore, a simple computation gives  that:
\begin{equation}\label{eq-sw-bound2}
 \frac{2}{1 + 2^{\cc(\mathcal{SW})-k}} \ge \frac{4}{3}.
\end{equation}
Combining \autoref{eq-sw} and \autoref{eq-sw-bound2} gives that $ \cc(\mathcal{P}) \ge k\log \frac{4}{3}$.  Thus, applying \autoref{lemma:menu-reconstruction-cc} gives that
 $\poly(\cc(\mathcal{A}),m) \ge k \log \frac{4}{3}$, so $\cc(\mathcal{A}) \ge e^{\Omega(\frac{\sqrt{m}}{\log m})}$. This completes the proof that $\cc(\mathcal{A})$ is exponentially high.

\begin{remark}\label{remark:class-of-vals-subadd}
A discerning reader might have observed that in the construction utilized in the proof of \autoref{thm:truthful-lower-bound-for-subadditive-cup-sm}, Alice's valuation set comprises solely single-minded valuations, whereas Bob's and Charlie's valuation sets consist exclusively of subadditive valuations. Consequently, one might be inclined to argue that we demonstrate an impossibility for the scenario featuring one single-minded bidder and two subadditive bidders, i.e., for the valuation class $\SingleM\times \SubAdd\times \SubAdd$. However, this assertion is not accurate, and the lower bound and gap that we show are for the class $(\SingleM\cup \SubAdd)^3$. The rationale behind this lies in the taxation framework which, as we restate in \autoref{lemma:menu-reconstruction-cc}, necessitates all bidders' valuation classes to be ``sufficiently rich'' (in this case, the classes must include subadditive valuations). 
\end{remark}

\section{Supplemental Results: Lower Bounds on Two-Bidder Mechanisms}
\label{section::two-player}

In this section, we give an additional impossibility result for two bidders which holds when bidders might be single-minded.
In fact, the results of this section served as stepping stones towards our approach in Sections~\ref{sec:proof-prelim} and~\ref{sec:real-proof}.

In contrast to the three-bidder hardness in Sections \ref{sec:proof-prelim} and \ref{sec:real-proof} that requires showing hardness of computing the menu which was more involved, here we use the taxation framework in a more direct way. We do not need to show that computing the menu is hard.\footnote{Actually, there are only two players, so computing the menu is a one player problem which cannot possibly be hard.} We prove our impossibility result by showing that the number of menus in every truthful auction has to be high. Interestingly, our bound can also be achieved by arguing about the hardness of simultaneous protocols. We believe that due to the relative simplicity of the construction, this section serves as a way to learn about the structure of mechanisms for bidders both with and without complementarities. We opted to defer this discussion until this section to keep Sections~\ref{sec:proof-prelim} and~\ref{sec:real-proof} self-contained.

We show that for two bidders with valuations in $\XOS \cup \SingleM$, any deterministic truthful mechanism that gives an approximation asymptotically better than $\frac{\sqrt 5-1}{2}$  requires exponential communication.Formally:

\begin{theorem}\label{thm:truthful-lowerbound-for-xos-cup-sm} 
 Every deterministic truthful mechanism $\mathcal{A}$ for two bidders with valuations in $\XOS_m~\cup~\SingleM_m$ that guarantees a $(\frac{\sqrt 5-1}{2}+3m^{-\frac13})$-approximation to the optimal social welfare requires $\exp(\Omega(m^{\frac13}))$ bits of communication. 
\end{theorem}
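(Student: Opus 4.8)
The strategy is to mimic the structure of the three-bidder proof but use the more direct ``menu count'' (taxation complexity) lower bound from \cite{D16b} rather than menu reconstruction, since with only two players there is nothing to reconstruct. First I would construct a hard family of XOS valuations, analogous to $\mathcal V_{\text{4-cell}}$ but tuned to the two-player setting: exponentially many overlapping partitions of $[m]$ into a constant number of cells, with XOS valuations (rather than subadditive) that ``love'' a chosen cell of each partition. One can use here the known XOS analogue of the \cite{EFNTW19} construction (e.g.\ the constructions underlying the $\nicefrac34$-hardness in \cite{AKSW20}, or \cite{DobzinskiNO14}-style fooling sets), giving a class on which computing an approximately-optimal allocation between two XOS bidders requires $\exp(m^{1/3})$ communication to beat some fixed fraction. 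The exponent $m^{1/3}$ (rather than $\sqrt m$) is simply what the XOS fooling-set construction affords here.

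**The key reduction.** Now let $\mathcal A$ be a truthful mechanism for two bidders in $\XOS_m\cup\SingleM_m$ beating $\frac{\sqrt 5-1}{2}+3m^{-1/3}$. Fix the ``hard'' XOS bidder as Bob and let Alice be single-minded. For each valuation $v_B$ in the hard XOS family, $\mathcal A$ presents a menu $M = \Menu_A(v_B)$ to Alice. The crux is: if two XOS valuations $v_B, v_B'$ that are ``sufficiently different'' (they disagree on which cell is loved, on some coordinate $i^*$ of the partition index) induce the \emph{same} menu $M$ for Alice, then $\mathcal A$ gets a bad approximation. The argument is exactly the two-case split of the (corrected) Proposition~\ref{prop::false}: take Alice single-minded for the bundle $S$ that is ``the complement of the loved cell'' on partition $i^*$, with value tuned to $\rho\ell$. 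If $M(S)$ is below Alice's value, she wins $S$ and we lose the welfare Bob could have gotten from $S$ (bounded because Bob hates $S$ on this partition); if $M(S)$ is above Alice's value, Alice wins nothing, so we are purely solving the two-XOS-bidder allocation on some instance where the hard construction forces welfare loss. Balancing the two cases with $\rho=\sqrt5-1$ (so $\rho^2+2\rho-4=0$) gives the bound $\frac{\sqrt5-1}{2}+O(1/\ell)$, and choosing $\ell=\Theta(m^{1/3})$ matches the theorem's error term.

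**From ``few menus'' to communication.** The previous step shows that the map $v_B \mapsto \Menu_A(v_B)$ is injective on any set of XOS valuations that are pairwise ``sufficiently different''. Since the hard family has $\exp(m^{1/3})$ valuations and a constant-size color palette per coordinate, a pigeonhole/counting argument (identical in spirit to \autoref{lem:4-tupe-rectangle-simple}, but here just over a single player's input) shows the number of distinct menus $\mathcal A$ presents to Alice is $\exp(\Omega(m^{1/3}))$. Then I invoke the taxation-complexity bound of \cite{D16b} — the result (recalled in \autoref{subsec::taxation}) that the communication complexity of a truthful auction is at least (a polynomial in) the log of the number of menus, valid since $\XOS\cup\SingleM \supseteq \SubAdd$-richness requirement\footnote{One must double-check the exact richness hypothesis of the relevant theorem of \cite{D16b}; XOS contains enough structure, but if needed one shows the number-of-transcripts bound directly via a fooling-set argument on the menu-indexing.}— to conclude $\cc(\mathcal A) = \exp(\Omega(m^{1/3}))$.

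**Main obstacle.** The delicate point is the XOS hard construction itself: unlike $\SubAdd$, XOS valuations cannot be arbitrarily ``flat,'' so building exponentially many overlapping partitions with XOS valuations that simultaneously (a) love one cell / hate the rest on \emph{every} partition and (b) force $\Omega$-welfare-loss for communication-bounded protocols is more constrained — this is exactly why the exponent degrades to $m^{1/3}$ and why one must be careful that the single-minded Alice's interest set $S$ interacts correctly with \emph{all} the partitions at once, not just partition $i^*$. I would handle this by using the established XOS fooling-set family (from \cite{AKSW20}/\cite{DobzinskiNO14}) as a black box for property (b), and verifying property (a) is a matter of the explicit XOS representation (a small cover of additive clauses). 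The welfare-analysis step (Proposition~\ref{prop::false}-style) and the counting step are then essentially routine. The alternative proof ``via simultaneous protocols'' mentioned in the theorem statement would instead route the menu count through the \cite{D16b} reduction of two-player truthful mechanisms to two-player simultaneous protocols, then apply a simultaneous-communication lower bound for two XOS bidders; I would present that as a second, shorter proof.
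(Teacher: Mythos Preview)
Your high-level architecture — bound the number of distinct menus Alice sees as Bob's valuation varies, then invoke the taxation-complexity bound of \cite{D16b} — is exactly the paper's. But your instantiation of the case analysis has a genuine gap: in the second case (``$M(S)>x$, Alice wins nothing'') you appeal to the hardness of ``the two-XOS-bidder allocation.'' There is no such problem here: with two bidders total and Alice single-minded, if Alice gets nothing the remaining allocation problem is to give all items to a \emph{single} XOS bidder Bob, which is trivial and yields welfare exactly $v_B([m])$. So no EFNTW/AKSW-style hardness is available in this case, and the partition-based XOS family you propose (tuned so that ``Bob loves one cell and hates the rest'') is not what drives the argument.

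The paper's construction is quite different and in fact simpler. It takes a collection $\mathcal G$ of $\exp(\Theta(m^{1/3}))$ subsets of size $bm$ (with $b=m^{-1/3}$) satisfying a $b$-average-intersection property, and lets Bob's valuations be the binary XOS functions $v_{\mathcal H}(S)=\max_{H\in\mathcal H}|S\cap H|$ for $\mathcal H\subseteq\mathcal G$. Now if $v_B=v_{\mathcal H}$ and $v_B'=v_{\mathcal H'}$ induce the same menu and $H\in\mathcal H\setminus\mathcal H'$, set Alice single-minded for $\overline H$ with value $\alpha$. Case 1: if $M(\overline H)\le\alpha$ then Alice wins $\overline H$, and against $v_B'$ the remainder $H$ is worth at most $2b^2m$ to Bob (by average intersection), while OPT $\ge v_B'([m])=bm$. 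Case 2: if $M(\overline H)>\alpha$ then Alice gets nothing, so the achieved welfare is at most $v_B([m])=bm$, but OPT $\ge \alpha+v_B(H)=\alpha+bm$ since $H\in\mathcal H$. Balancing $\frac{\alpha}{bm}$ against $\frac{bm}{\alpha+bm}$ (i.e.\ $x^2+x-1=0$, not your $\rho^2+2\rho-4=0$) gives $\alpha=\tfrac{\sqrt5-1}{2}\,bm$ and the stated ratio. Thus all $2^{|\mathcal G|}$ valuations in $\XOS^*$ give distinct menus, so $\tax(\mathcal A)\ge|\mathcal G|=\exp(\Theta(m^{1/3}))$, and the \cite{D16b} bound $\cc(\mathcal A)\ge\tax(\mathcal A)/m-1$ (which needs only $\V\supseteq\XOS$; note your containment $\XOS\cup\SingleM\supseteq\SubAdd$ is false) finishes. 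The point is that in Case~2 the loss comes entirely from forfeiting Alice's value $\alpha$, not from any residual allocation hardness.
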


Note that  the deterministic protocol that $\frac34$-approximates the optimal welfare for $\XOS$ \cite{Fei09, DNS05-journal} can be generalized to work for $\XOS \cup \SingleM$ by using \autoref{lemma::black-box-upper-bound}. Therefore, \autoref{thm:taxation-lower-bound-xos} indeed implies a separation.

The proof of \autoref{thm:taxation-lower-bound-xos} is based on a direct taxation argument, i.e. we show that the number of menus of every truthful mechanism that gives a \textquote{good} approximation has to be doubly exponential in $m$. 
To the best of our knowledge, this is not easy to do with the lower bound construction of \cite{AKSW20}.\footnote{\autoref{thm:taxation-lower-bound-xos} can also be proven by showing
a lower bound on simultaneous protocols, which is the same argument used in \cite{AKSW20}. See \autoref{rem:alternative-proof-for-xos-cup-sm}.} We begin by describing the class of valuations that we use to show the separation (\autoref{sec:xos-lb-truth-class-of-valuation}). We then prove \autoref{thm:taxation-lower-bound-xos} in \autoref{sec:proof-of-xos-lb-truth}. 
In \autoref{subsec::difficulties-xos}, we explore the challenges associated with directly generalizing \autoref{thm:taxation-lower-bound-xos} to a scenario involving three bidders. By that, we illustrate the challenges encountered when proving \autoref{thm:truthful-lower-bound-for-subadditive-cup-sm}.

En route, we explore the structure of payments of approximately optimal and truthful mechanisms, showing 
that as the approximation guarantee of a mechanism goes to $1$, then its payments approach VCG payments. We state this formally in \autoref{app:structure-of-payments}.

\subsection{Description of the Class of Valuations}\label{sec:xos-lb-truth-class-of-valuation}

Recall the definition of XOS valuations:

\begin{definition}[XOS valuations]
    A valuation function $v:2^{[m]}\to \mathbb{R}$ is \emph{XOS} if there exists a collection of additive clauses $\mathcal{C} \subseteq \mathbb{R}^m$ such that for all $S \in 2^{[m]}$, $v(S)=\max_{\vb{c} \in \mathcal{C}} \sum_{i \in S} c_i$. We denote the family of all XOS valuation functions over $m$ items by $\XOS=\XOS_m$.
\end{definition} 
An $XOS$ valuation is \emph{binary} if all its clauses are 0/1 valued, i.e., for every $c\in \mathcal C$, $c_i\in \{0,1\}$. 
Equivalently, binary $XOS$ valuations can be defined by a collection of subsets of $[m]$: 
\begin{definition}[Binary XOS valuations]
    A valuation function $v:2^{[m]}\to \mathbb{R}$ is {\em binary XOS} if there exists a collection $\mathcal{G} \subseteq 2^{[m]}$ such that $v(S)=\max_{G \in \mathcal{G}}\{|S \cap G|\}$ for all $S \in 2^{[m]}$. Denote the family of all binary XOS valuation functions over $m$ items by $\mathsf{BXOS}=\mathsf{BXOS}_m \subset \XOS_m$.
\end{definition}

To prove \autoref{thm:truthful-lowerbound-for-xos-cup-sm}, we will consider a special subset of XOS valuations, which takes sets from a collection that satisfies the \emph{average intersection property}:

\begin{definition}
A collection $\mathcal G \subseteq 2^{[m]}$ satisfies the \emph{$b$-average intersection property} for some $b \in (0,1)$ if every $G\in\mathcal G$ is of size $bm$, and for every $G_1\neq G_2\in \mathcal G$, $|G_1 \cap G_2|\le 2b^2m$. 
\end{definition}

By the probabilistic method, such collections always exist and can be made exponentially large. Formally:
\begin{lemma}\label{lem:xos-avg-intersection}
    For any $b \in (0,1)$,
    there exists a collection $\mathcal G \subseteq 2^{[m]}$ of size $\exp(b^2m/6)$ that satisfies the $b$-average intersection property.
\end{lemma}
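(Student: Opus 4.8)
\textbf{Proof plan for \autoref{lem:xos-avg-intersection}.}
The plan is to use the probabilistic method: sample each of $N$ candidate sets uniformly and independently, and show that with positive probability, all of them have size close to $bm$ and pairwise intersections bounded by $2b^2m$; then discard the bad ones. Concretely, I would draw $N$ sets $R_1, \dots, R_N \subseteq [m]$, where each $R_t$ is formed by including each item $j\in[m]$ independently with probability $b$. Then $\E{|R_t|} = bm$ and, for $t\ne t'$, $\E{|R_t\cap R_{t'}|} = b^2 m$. The goal is to show that for $N = \exp(\Theta(b^2 m))$, a constant fraction (say, at least half) of the sampled sets simultaneously avoid all ``bad events,'' after which we can prune to get an exact-size collection.

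The key steps, in order. First, apply a Chernoff bound to each pair: $\P{|R_t\cap R_{t'}| > 2b^2 m} \le \exp(-\Omega(b^2 m))$, since $2b^2m = 2\cdot\E{|R_t\cap R_{t'}|}$ is a constant-factor deviation above the mean of a sum of independent indicators. (The standard multiplicative Chernoff bound gives $\P{Z > 2\mu} \le e^{-\mu/3}$, so here $\P{|R_t\cap R_{t'}|>2b^2m} \le e^{-b^2m/3}$.) Second, for the size constraint, I would handle it by a similar Chernoff bound on $|R_t|$ concentrating near $bm$ — but since the lemma demands every set have size \emph{exactly} $bm$, the cleanest route is actually to sample each $R_t$ \emph{uniformly among subsets of size exactly $bm$} (assuming $bm$ is an integer; otherwise take $\lfloor bm\rfloor$ and absorb rounding into constants), which automatically satisfies the size condition and still gives $\E{|R_t\cap R_{t'}|} = b^2m \cdot \frac{bm-1}{m-1} \le b^2 m$ with concentration via a Chernoff-type bound for the hypergeometric distribution (or negative association). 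Third, union-bound over all $\binom{N}{2} < N^2$ pairs: the expected number of bad pairs is at most $N^2 e^{-b^2m/3}$. Fourth, by linearity of expectation and Markov's inequality (or a simple deletion argument), if $N^2 e^{-b^2m/3} < N/2$, i.e. $N < \tfrac12 e^{b^2m/3}$, then some outcome has fewer than $N/2$ bad pairs; deleting one set from each bad pair leaves at least $N/2$ sets, all of size $bm$ and pairwise intersecting in at most $2b^2m$ items. Fifth, choose $N = \lceil \tfrac12 e^{b^2m/3} \rceil$ so the surviving collection has size at least $N/2 \ge \tfrac14 e^{b^2m/3} \ge e^{b^2m/6}$ for $m$ large enough (adjusting constants as needed to match the stated $\exp(b^2m/6)$ bound).

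I do not expect a serious obstacle here; this is a routine application of Chernoff bounds plus a deletion argument. The only mildly delicate point is the ``exact size $bm$'' requirement: sampling with i.i.d.\ inclusion does not produce sets of a fixed size, so one must either (a) sample uniformly from size-$bm$ subsets and invoke concentration for sampling without replacement, or (b) condition on the (constant-probability) event that $|R_t| = bm$ and argue the intersection tail bounds survive this conditioning. Option (a) is cleanest, using the fact that intersection of a fixed set with a uniformly random $bm$-subset is hypergeometric and enjoys the same Chernoff-type tail $\P{H > 2\mu}\le e^{-\Omega(\mu)}$ as the binomial. The bookkeeping of constants (ensuring the final count is at least $\exp(b^2m/6)$ rather than some other explicit constant in the exponent) is purely mechanical and can be tuned by starting from a slightly larger $N$.
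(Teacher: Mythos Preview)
Your approach is essentially the same as the paper's: probabilistic method with uniformly random size-$bm$ subsets, Chernoff bound $\P{|G_i\cap G_j|>2b^2m}\le e^{-b^2m/3}$, and a union bound over pairs. The only difference is that the paper skips the deletion step entirely---it directly samples $N=\exp(b^2m/6)$ sets and observes that the union bound over all pairs gives $\sum_{i\ne j}\P{\text{bad}}<N^2 e^{-b^2m/3}\le 1$, so with positive probability no pair is bad; your deletion argument works too but is not needed.
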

The proof of \autoref{lem:xos-avg-intersection} can be found in \autoref{app:missing-proof-sec-5}.
Let $b=m^{-\frac{1}{3}}$ and $\mathcal G$ be such a collection containing $\exp(m^{\frac{1}{3}}/6)$ subsets of $[m]$ that satisfies $m^{-\frac13}$-average intersection property.
We define a special set of binary XOS valuations $\XOS^*=\{v_\mathcal{H} : \mathcal{H} \subseteq \mathcal{G}\} \subseteq \mathsf{BXOS}_m$ where
$$v_\mathcal{H}(S)=\max_{H \in \mathcal{H}} |S \cap H|~~~~\text{for all}~S \in 2^{[m]}.$$

Meanwhile, we consider the class of single-minded valuations $\SingleM^*\subseteq \SingleM_m$ containing single-minded valuations with constant value $\alpha$, i.e., $\SingleM^* = \{v_{T,\delta}: T \subseteq [m], \delta \in \{0,1\}\}$ where
$$v_{T,\delta}(S)=(\alpha+\delta) \cdot \mathrm{I}[S \supseteq T]~~~~\text{for all}~S \in 2^{[m]}.$$
and $\alpha$ is a constant that we fix later.

Later, the valuation profiles $\{(v_A,v_B): v_A \in \SingleM^*, v_B \in \XOS^*\}$ will be used to show a lower bound on the communication complexity.

\subsection{Proof of \autoref{thm:truthful-lowerbound-for-xos-cup-sm} via Taxation Complexity}
\label{sec:proof-of-xos-lb-truth}

In this section, we will present a proof of \autoref{thm:truthful-lowerbound-for-xos-cup-sm} based on the Taxation Complexity framework~\cite{D16b}.

\begin{definition}[Taxation complexity]\label{def::tax-comp}
    Let $\mathcal{A}=(f,p_1,\ldots,p_n):\V_1\times \cdots\times\V_n \to \Alloc \times \R^n$ be any deterministic truthful mechanism.
    The \emph{taxation complexity} $\tax(\mathcal{A})$ of $\mathcal{A}$ is defined as the number of bits needed to represent the index of a specific menu, i.e., $\tax(\mathcal{A})=\max_{i \in [n]} \log |\{\Menu^{\mathcal{A}}_i(v_{-i}) : v_{-i} \in \V_{-i}\}|$.
\end{definition}

\begin{theorem}[\cite{D16b}]\label{thm:taxation-lower-bound-xos}
    Let $\mathcal{A}=(f,p_1,\ldots,p_n):\V_1\times \cdots\times\V_n \to \Alloc \times \R^n$ be any deterministic truthful mechanism, where all domains $\V_1,\ldots,\V_n$ contain $\XOS_m$. Then, $\cc(\mathcal A) \ge \frac{\tax(\mathcal A)}{m}-1$. 
\end{theorem}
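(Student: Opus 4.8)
Since this statement is quoted from~\cite{D16b}, the plan is to recall the structure of its proof rather than develop a new one. Write $c := \cc(\mathcal{A})$ and fix the bidder $i$ attaining the maximum in $\tax(\mathcal A)$; it suffices to show that the number of distinct menus $\Menu_i^{\mathcal A}(v_{-i})$, over $v_{-i}\in\V_{-i}$, is at most $2^{m(c+1)}$, since then $\tax(\mathcal A)=\log_2(\#\text{menus})\le m(c+1)$, which is exactly $\cc(\mathcal A)\ge \tax(\mathcal A)/m-1$. The first, elementary, ingredient is that the transcript of the $c$-bit protocol for $\mathcal A$ determines the entire outcome, and in particular the bundle and payment of bidder $i$; hence across \emph{all} inputs there are at most $2^c$ possible outcomes $(S,q)$ for bidder $i$. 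Consequently, for each fixed $v_{-i}$ the menu $\Menu_i(v_{-i})$ has at most $2^c$ realized options, and by \autoref{prop-menu-mono} it is the monotone, normalized extension of this bounded family --- so although a menu is formally a function on all $2^m$ bundles, it is a ``sparse'' object.

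The core of the argument is a \emph{menu-reconstruction} protocol: using the protocol for $\mathcal A$ as a black box, one shows the index of $\Menu_i(v_{-i})$ can be computed by a protocol --- in which bidder $i$ feeds probe valuations to $\mathcal A$ and the remaining bidders respond according to $v_{-i}$ --- that makes only $O(m)$ adaptive invocations of $\mathcal A$, plus $O(m)$ auxiliary bits. This is where the hypothesis $\V_i\supseteq\XOS_m$ enters, since the probes must themselves lie in bidder $i$'s domain. The resulting transcript of $\mathcal A$ on each probe carries at most $c$ bits of information about the menu, and the probes are designed so that, together with monotonicity of the menu, the $O(m)$ observed outcomes determine it: for instance, a probe assigning an enormous value to every item of a set $T$ forces bidder $i$ onto a bundle containing $T$, priced at $\Menu_i(v_{-i})(T)$, and iterating such probes across the $m$ items localizes the menu with a linear, rather than exponential, number of queries. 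Since this reconstruction protocol uses $O(mc)+O(m)=O(m(c+1))$ bits, and a protocol with $C$ bits of communication has at most $2^C$ transcripts and hence at most $2^C$ distinct outputs, we get $\#\text{menus}\le 2^{O(m(c+1))}$; tracking constants as in~\cite{D16b} gives $\tax(\mathcal A)\le m(c+1)$, as desired.

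The step I expect to be the genuine obstacle is exactly this reconstruction claim. A priori a bidder's menu is an arbitrary monotone function on $2^m$ bundles and would seem to require exponentially many queries to identify; the non-trivial content of~\cite{D16b} is that the combination of the bounded-outcome structure above with the richness of $\XOS$ --- which lets the reconstructor steer bidder $i$ through the menu in a way that decomposes across the $m$ items --- collapses this to $O(m)$ rounds. Additional care is needed for tie-breaking and for how a menu is defined on bundles that are never allocated; these are handled via \autoref{prop-menu-mono} and the precision conventions of \autoref{sec:prelims}. We use this theorem only as a black box below --- to convert the doubly-exponential number of menus arising from the class $\XOS^*\cup\SingleM^*$ into a communication lower bound --- so we do not reproduce these details.
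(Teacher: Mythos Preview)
The paper does not prove this theorem; it is quoted from~\cite{D16b} and invoked as a black box in the proof of \autoref{thm:truthful-lowerbound-for-xos-cup-sm} immediately afterward. Your proposal correctly identifies this and offers only a high-level recollection of Dobzinski's argument while explicitly disclaiming the details --- this matches the paper's treatment exactly, so there is no paper-side proof to compare against.
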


Observe that \autoref{thm:taxation-lower-bound-xos} only requires that the domain of valuations contains $\XOS$, and it remains applicable for mechanisms that are truthful for a larger domain, e.g., $\XOS \cup \SingleM$.
Therefore, our first step in the proof \autoref{thm:truthful-lowerbound-for-xos-cup-sm} will be to show a lower bound on the taxation complexity of every mechanism that gives a \textquote{good} approximation.
Formally:
\begin{proposition}\label{prop:taxation-lowerbound-for-xos-cup-sm}
Let $\V = \XOS_m\cup\SingleM_m$ and $\mathcal{A}=(f,p_A,p_B): \V \times \V \to \Alloc \times \R^2$ be any deterministic truthful mechanism. 
If $\mathcal{A}$ obtains a $(\frac{\sqrt 5-1}{2}+3m^{-\frac13})$-approximation to the optimal welfare, then $\tax(\mathcal{A})\ge \exp(m^{\frac13}/6)$.
\end{proposition}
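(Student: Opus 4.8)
The plan is to show that a good approximation forces the menu presented to Alice (the potential single-minded bidder) to ``encode'' which subcollection $\mathcal H \subseteq \mathcal G$ Bob has, of which there are $2^{|\mathcal G|} = \exp(\exp(m^{1/3}/6))$ many; this gives the doubly-exponential taxation bound via \autoref{def::tax-comp}. Concretely, I would fix Bob to have a valuation $v_{\mathcal H} \in \XOS^*$ and study the menu $M_{\mathcal H} = \Menu_A^{\mathcal A}(v_{\mathcal H})$, using \autoref{prop-menu-mono} to assume it is monotone and normalized. The key is to pin down $M_{\mathcal H}(\overline{G})$ for each $G \in \mathcal G$: I claim that whether or not $G \in \mathcal H$ is essentially determined by whether $M_{\mathcal H}(\overline G)$ lies above or below the threshold value $\alpha$ (the single-minded value we get to tune). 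This would show $M_{\mathcal H} \ne M_{\mathcal H'}$ whenever $\mathcal H \ne \mathcal H'$, yielding $|\{\Menu_A^{\mathcal A}(v_B) : v_B \in \V\}| \ge 2^{|\mathcal G|}$ and hence $\tax(\mathcal A) \ge |\mathcal G| = \exp(m^{1/3}/6)$.

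The heart of the argument is a welfare case analysis, mirroring the structure of the (false) Proposition~\ref{prop::false} and of \autoref{prop::bad-tuple}. Fix $G \in \mathcal G$ and consider Alice with single-minded valuation $v_{\overline G, 0}$ or $v_{\overline G,1}$ (desiring $\overline G$ with value $\alpha$ or $\alpha+1$), perturbing $\alpha$ slightly so that $M_{\mathcal H}(\overline G) \ne \alpha$. If $M_{\mathcal H}(\overline G) < \alpha$, then against Bob's valuation $v_{\mathcal H}$ Alice wins a bundle containing $\overline G$, so Bob is left with a subset of $G$ and gets value at most $|G| = bm$ from his clause indexed by $G$; but wait---here is where the average-intersection property enters: I must argue that Bob's value on $G$ itself, namely $\max_{H \in \mathcal H}|G \cap H|$, is small \emph{unless} $G \in \mathcal H$ (in which case it is $bm$). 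If $G \notin \mathcal H$ then every $H \in \mathcal H$ satisfies $|G \cap H| \le 2b^2 m$, so Bob's welfare contribution is only $\le 2b^2m = 2m^{1/3}$, versus the optimum $\approx bm = m^{2/3}$ achievable by giving $G$ to Bob (via clause $H=G$ if $G\in\mathcal H$)---so this case forces $G \in \mathcal H$. Conversely if $M_{\mathcal H}(\overline G) > \alpha$, Alice with valuation $v_{\overline G,1}$ wins nothing, and then the allocation problem is just ``give items to Bob'' who has value $\le bm$ total, while the optimum is $\alpha + (\text{Bob's value on } \overline G) = \alpha + (bm - |\text{something}|)$; tuning $\alpha = \rho \cdot bm$ and balancing the two resulting ratios $\frac{\alpha + 2b^2 m}{bm}$ and $\frac{bm}{\alpha + bm - o(bm)}$ against each other gives $\rho^2 + \rho - 1 = 0$, i.e. $\rho = \frac{\sqrt 5 - 1}{2}$, explaining the approximation constant. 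So a ratio strictly above $\frac{\sqrt 5 - 1}{2} + O(m^{-1/3})$ is impossible unless the menu correctly signals membership of $G$ in $\mathcal H$.

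I expect the main obstacle to be making the ``menu encodes $\mathcal H$'' step fully rigorous simultaneously across all $G \in \mathcal G$: the welfare argument above constrains $M_{\mathcal H}(\overline G)$ for one $G$ at a time, but I need the \emph{whole} menu (or at least the pattern of which $M_{\mathcal H}(\overline G)$ exceed $\alpha$) to differ between $\mathcal H$ and $\mathcal H'$, and I must be careful that the perturbation of $\alpha$ needed to avoid ties can be done uniformly, or that the argument is robust to $M_{\mathcal H}(\overline G)$ being exactly $\alpha$ for a few $G$. The cleanest route is: if $\mathcal H \ne \mathcal H'$, pick $G \in \mathcal H \triangle \mathcal H'$, run the two-case analysis above to conclude $M_{\mathcal H}(\overline G)$ and $M_{\mathcal H'}(\overline G)$ lie on opposite sides of (a suitable perturbation of) $\alpha$, hence $M_{\mathcal H} \ne M_{\mathcal H'}$. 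A secondary technical point is verifying that $v_{\overline G, \delta} \in \SingleM_m \subseteq \V$ and that combining $v_{\overline G,0}$-type and $v_{\overline G,1}$-type instances is legitimate given that menus depend only on Bob's valuation, not Alice's---which is exactly the content of \autoref{prop-taxation-principle} and is what lets the same menu $M_{\mathcal H}$ govern both Alice-valuations. Once the taxation lower bound $\tax(\mathcal A) \ge \exp(m^{1/3}/6)$ is in hand, \autoref{thm:taxation-lower-bound-xos} immediately gives $\cc(\mathcal A) \ge \frac{1}{m}\exp(m^{1/3}/6) - 1 = \exp(\Omega(m^{1/3}))$, completing the proof of \autoref{thm:truthful-lowerbound-for-xos-cup-sm}.
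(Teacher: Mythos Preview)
Your approach is essentially the paper's: show that distinct $\mathcal H,\mathcal H'\subseteq\mathcal G$ induce distinct menus for Alice by picking $H\in\mathcal H\setminus\mathcal H'$, assuming the menus coincide, and deriving a contradiction with the approximation guarantee via a two-case split on the menu price $M(\overline H)$. One correction: in your second case ($M_{\mathcal H}(\overline G)>\alpha$) you must take Alice's valuation to be $v_{\overline G,0}$ (value exactly $\alpha$), not $v_{\overline G,1}$, since with value $\alpha+1$ Alice could still purchase when $\alpha<M_{\mathcal H}(\overline G)\le\alpha+1$; the $\delta\in\{0,1\}$ device is exactly what replaces the ``perturb $\alpha$'' step you worry about. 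The paper's clean split is $M(\overline H)\le\alpha$ (use $v_{\overline H,1}$ against $v_{\mathcal H'}$, so Alice buys but Bob's residual value on $H$ is at most $2b^2m$) versus $M(\overline H)>\alpha$ (use $v_{\overline H,0}$ against $v_{\mathcal H}$, so Alice gets nothing while the optimum is $\alpha+bm$), and once you phrase it this way your concern about uniformity across all $G$ evaporates, since a single $H$ in the symmetric difference already forces $M_{\mathcal H}\ne M_{\mathcal H'}$.
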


\begin{proof}[Proof of \autoref{prop:taxation-lowerbound-for-xos-cup-sm}]
Suppose that the mechanism $\mathcal{A}$ indeed obtains an approximation to the welfare of at least $(\frac{\sqrt 5-1}{2}+3m^{-\frac13})$.
Consider the case when Alice's valuation is $v_A \in \SingleM^*$ and Bob's valuation $v_B \in \XOS^*$.
We want to show that for every distinct pair of valuations of Bob $v_B, v_B' \in \XOS^*$, the menus that are presented to Alice must be different, i.e., $\Menu_A^\mathcal{A}(v_B) \ne \Menu_A^\mathcal{A}(v_B')$. Then, $|\{\Menu_A^\mathcal{A}(v_B): v_B \in \XOS^*\}|=|\XOS^*|=2^{|\mathcal{G}|}$ and the proposition follows.
For simplicity, from now on we denote $\Menu_A^\mathcal{A}(\cdot)$ with $\Menu(\cdot)$.

For any distinct $v_B,v_B' \in \XOS^*$, denote with $\mathcal H$ and with $\mathcal H'$ respectively the combinations that are associated with $v_B$ and with $v_B'$ respectively. 
Since $v_B\ne v_B'$, there exists at least one subset $H$ such that (without loss of generality) $H\in \mathcal H$, whereas $H\notin \mathcal H'$.
Suppose $\Menu(v_B)=\Menu(v_B')=M$ by contradiction.
There are two possibilities:

\begin{itemize}
    \item If $M(\overline{H}) \le \alpha$, then given the valuation profile $(v_A=v_{\overline{H},1},v_B')$, Alice always gets $\overline{H}$. The actual welfare the mechanism $\mathcal A$ gets is at most $v_A(\overline{H})+v_B'(H) \le (\alpha +1) + 2b^2 m$, while the optimal welfare being at least $v_B'([m])=bm$.
    \item If $M(\overline{H}) > \alpha$, then given the valuation profile $(v_A=v_{\overline{H},0},v_B)$, Alice does not get $\overline{H}$. The actual welfare the mechanism $\mathcal A$ gets is at most $v_B([m])=bm$, while the optimal welfare being $v_A(\overline{H})+v_B(H)=\alpha + bm$.
\end{itemize}
To summarize, whenever $\Menu(v_B)=\Menu(v_B')$ for some distinct $v_B,v_B' \in \XOS^*$, we know that the approximation ratio of the mechanism to the optimal welfare can be at most
\begin{equation}\label{eq-approx-ratio-tax-xos}
  \max\left\{\frac{(\alpha+1)+2b^2m}{bm},\frac{bm}{\alpha+bm} \right\}.  
\end{equation}
Recall $b=m^{-\frac13}$ and let $\alpha= \frac{\sqrt{5}-1}{2}  bm = \frac{\sqrt{5}-1}{2} m^{\frac23}$.
Then, the ratio above is always strictly smaller than $\frac{\sqrt{5}-1}{2} + 3m^{-\frac13}$.
Therefore, to get a $(\frac{\sqrt{5}-1}{2} + 3m^{-\frac13})$-approximation, $\mathcal{A}$ must have different menus for all $v_B \in \XOS^*$. As a result, $\tax(\mathcal{A})\ge \log |\XOS^*| = |\mathcal{G}|=\exp(m^{\frac13}/6)$.
\end{proof}

Having \autoref{prop:taxation-lowerbound-for-xos-cup-sm}, \autoref{thm:truthful-lowerbound-for-xos-cup-sm} follows directly via the Taxation Complexity framework.

\begin{proof}[Proof of \autoref{thm:truthful-lowerbound-for-xos-cup-sm}]
    Since $\XOS_m \cup \SingleM_m \supseteq \XOS_m$, we have $\cc(\mathcal{A}) \ge \frac{\tax(\mathcal{A})}{m}-1$ by \autoref{thm:taxation-lower-bound-xos}.
    Further by \autoref{prop:taxation-lowerbound-for-xos-cup-sm}, we have $\cc(\mathcal{A}) \ge \exp(\Omega(m^{\frac13}))$.
\end{proof}

\begin{remark}\label{rem:alternative-proof-for-xos-cup-sm}
    \autoref{thm:truthful-lowerbound-for-xos-cup-sm} can be also proven via a communication lower bound for simultaneous protocols combining with the two-bidder corollary from Taxation Complexity framework~\cite{D16b}, as used in~\cite{AKSW20}.
    The proof follows a similar structure, and we include it in \autoref{app:simultaneous-proof-for-xos-cup-sm} for completeness.
\end{remark}
\begin{remark}
    Similarly to \autoref{remark:class-of-vals-subadd}, one might think that the construction implies that our impossibility holds for a single-minded bidders and a bidder with XOS valuations, i.e. for the class $\SingleM\times \XOS$. However,  our use of \autoref{thm:taxation-lower-bound-xos} implies that our the lower bound applies solely to the valuation class $(\SingleM\cup \XOS)^2$. In particular, the VCG mechanism can be implemented with $\poly(m)$ communication for $\SingleM\times \XOS$.\footnote{This is because the optimal allocation can be computed with $\poly(m)$ communication: Alice simply sends her desired set $S$ together with its value, and Bob replies whether the optimal allocation is $(S,\overline S)$ or $(\emptyset,M)$.}
\end{remark}

\subsection{Difficulties for a Stronger Three-Bidder Lower Bound} \label{subsec::difficulties-xos}

Naturally, one might wonder whether the class of valuations defined in \autoref{sec:xos-lb-truth-class-of-valuation} can be generalized to a stronger (i.e., better than $\frac{\sqrt{5}-1}{2}$) lower bound beyond two bidders, since the Taxation Complexity framework extends beyond two bidders. 
We now describe an attempt to generalize our construction and briefly discuss why it is not likely to work.

Recall that in \autoref{sec:proof-of-xos-lb-truth}, the subset of valuations $\{(v_A,v_B): v_A \in \SingleM^*, v_B \in \XOS^*\}$ was used as hard instances for the taxation complexity lower bound, where $\XOS^*$ is defined by some collection $\mathcal{G}$ that satisfies the $b$-average intersection property.
To get a stronger three-bidder lower bound,
we add another bidder named Charlie, and
consider the valuations $\{(v_A,v_B,v_C): v_A \in \SingleM^*, v_B,v_C \in \XOS^*\}$. 
Our hope is to get a stronger lower bound by \emph{the extra communication hardness from welfare maximization between these two XOS bidders}. 

Our plan is to use a similar argument to the one used in the proof of \autoref{prop:taxation-lowerbound-for-xos-cup-sm}.
For convenience, let us assume $\mathcal{G}$ consists of exponentially many random sets of size $bm/2$ for some constant $b\in (0,1)$. Similar to the proof of \autoref{lem:xos-avg-intersection}, the collection $\mathcal{G}$ satisfies $(b/2)$-average intersection property with high probability and its size is $\exp(m)$. 
Consider four valuations $v^{(1)},v^{(2)},v^{(3)},v^{(4)}$ sampled uniformly at random from $\XOS^*$ and associated with the sub-collections $\mathcal{H}^{(1)},\mathcal{H}^{(2)},\mathcal{H}^{(3)},\allowbreak\mathcal{H}^{(4)} \subseteq \mathcal{G}$, respectively.

We argue without proof that with high probability, there exist subsets $H_1, H_2,H_3,H_4$ such that for every $i\in [4]$:
\begin{enumerate}
    \item $H_i \in \mathcal{H}^{(i)}$.
    \item For every $j\neq i$, $H_i \notin  \mathcal{H}^{(j)}$. 
    \item $H_1\cap H_2=H_3\cap H_4=\emptyset$.
\end{enumerate}
As a result:
\begin{itemize}
\item Given the valuation profile $(v_B=v^{(1)},v_C=v^{(2)})$, the maximum welfare $bm$ is obtained by allocating $H_1 \cup H_2$ to them. 
\item Given the valuation profile  $(v_B=v^{(3)},v_C=v^{(4)})$, the welfare obtained from allocating $H_1 \cup H_2$ to Bob and Charlie
is at most $2b^2m$, 
due to the $(b/2)$-average intersection property. Maximum welfare of $bm$ can be obtained from allocating $H_3 \cup H_4$ to them.
\end{itemize}

Let $v_A \in \SingleM^*$ be the single-minded valuation that has value $\alpha$ for the set  $[m]\setminus (H_1 \cup H_2)$. 
Assume towards a contradiction that $\Menu(v^{(1)},v^{(2)})=\Menu(v^{(3)},v^{(4)})$ and denote with $P$ the price in both menus for the bundle $[m]\setminus (H_1\cup H_2)$. Assume that $P \ne \alpha$ (which is without loss of generality, since we can always slightly perturb the value of $\alpha$). 
Note that it must be the case that either:
\begin{itemize}
    \item If $P<\alpha$, then Alice with value $v_A$ gets the bundle $[m]\setminus (H_1\cup H_2)$: then for $(v_A,v^{(3)},v^{(4)})$, the mechanism gets welfare of at most $\alpha+2b^2m$, while the optimal welfare is $v_B^{(3)}(H_3)+v^{(4)}(H_4)=bm$.
    \item If $P>\alpha$, then
    Alice with valuation $v_A$ does not get the bundle $[m]\setminus (H_1\cup H_2)$: then for $(v_A,v^{(1)},v^{(2)})$, the mechanism gets at most $(b-b^2/4)m$ \emph{due to the communication hardness of allocating $[m]$ between Bob and Charlie}\footnote{The welfare of $(b-b^2/4)m$ is obtained 
    by giving any set $S \in \mathcal{H}^{(1)}$ to Bob and the rest of the items to Charlie. The $(b/2)$-average intersection property implies that this allocation has welfare of at least $(b-b^2/4)m$.
    The approximation ratio $\frac{(b-b^2/4)m}{bm}$ is smallest when $b=1$, which is the most difficult case for welfare maximization between Bob and Charlie in general.}, while the optimal welfare is $\alpha+bm$.
\end{itemize}
Overall, the approximation ratio to the optimal welfare can be at most 
$$\max\left\{\frac{\alpha+2b^2m}{bm}, \frac{(b-b^2/4)m}{\alpha+bm}\right\},$$
which is quite very similar to \autoref{eq-approx-ratio-tax-xos} in the proof of \autoref{thm:taxation-lower-bound-xos}.
The only difference is the numerator of the second term gets smaller (from $bm$ to $(b-b^2/4)m$) which is a significant difference when $b$ is close to $1$).
However, the minimum is still obtained when $b\to 0$ and $\alpha=\frac{\sqrt{5}-1}{2}bm$, resulting in no improvement at all.

As a result, the additional difficulty of allocating items among two bidders instead of one does not help, at least for this specific class of valuations. 
The reason for it is that  
for our original argument that gets $\frac{\sqrt{5}-1}{2}$ for two bidders, we want $b \to 0$ to ensure the intersection of a random clauses of the XOS bidder with the complement of the single-minded bidder to be as small as possible. However, 
to get a significant gap in communication hardness for two XOS bidders, we want $b\to 1$ since the intersection of a random clauses of them needs to be large, so that one cannot get a good welfare between them easily.
Nevertheless, this idea turned out to be useful for the proof of  \autoref{thm:truthful-lower-bound-for-subadditive-cup-sm}, 
by using subadditive modified set-cover valuations.

\section*{Acknowledgements}
The authors thank Shahar Dobzinski for the helpful discussions throughout the duration of this work.

\bibliographystyle{alpha}
\bibliography{sample}

\appendix

\section{Missing Proofs}
\label{app:missing-proofs}
\subsection{Missing Proofs of \autoref{sec:prelims}}
\subsubsection{Proof of \autoref{lemma::black-box-upper-bound}}
Let $\mathcal P$ be a protocol that achieves an $\alpha$-approximation to the optimal welfare with bidders whose valuations are in class $\mathcal{C}$.
We first describe the protocol $\mathcal P'$ for $\mathcal{C}\cup \SingleM$, and then prove that its approximation guarantee is also $\alpha$. 

\paragraph{Description of Protocol.}
  Briefly, the protocol operates by simply running $\mathcal{P}$ independently once for each possible method of allocating the single-minded bidders their preferred bundle.

  In more detail, first, each bidder $i$ sends a bit that specifies whether $v_i$ is single-minded or belongs in $\mathcal C$ (if $v_i \in \mathcal C \cap \SingleM$, it is considered single-minded). Let $\mathsf{SM}\subseteq [n]$ be the set of bidders such that $v_i$ is single-minded, and with $\mathsf{Other}$ the rest of the bidders.   Then, we ask all the bidders in $\mathsf{SM}$ to send their (minimal) desired set and their value for it. We say that a subset of single-minded bidders $K\subseteq \mathsf{SM}$ is \emph{feasible} if for every $i,j\in K$, $S_i\cap S_j=\emptyset$. Note that at this point in the protocol, we can identify all the feasible subsets of single-minded bidders. 

  Next, for every feasible subset of single-minded bidders $K \subseteq \mathsf{SM}$, we consider the following allocation. First, we allocate to the bidders in $K$ their desired sets. 
The rest of the single-minded bidders are allocated with the empty bundle.
  Then, we allocate the remaining items to the bidders in $\mathsf{Other}$, using the protocol $\mathcal P$. Afterwards, all bidders send their value for the bundle given this allocation, so its social welfare is known.  We conclude by outputting the allocation with the maximum welfare. 
It is easy to see that the protocol $\mathcal{P}'$ requires at most $\poly(\cc(\mathcal{P}),m,2^n)$ bits.

\paragraph{The Approximation Ratio}
Informally, the approximation ratio $\alpha$ still holds for $\mathcal P'$ because we \emph{exactly} optimize welfare for all single-minded bidders, and on the remaining bidders, we can still only be off by a factor of at most $\alpha$. 

In more detail, denote with $A=(A_1,\ldots,A_n)$ the allocation that the protocol outputs and with $O=(O_1,\ldots,O_n)$ an optimal allocation. 
Assume without loss of generality that  $(O_1,\ldots,O_n)$ and $(A_1,\ldots,A_n)$ satisfy that no bidder is allocated any item that does not strictly increase her value.
Denote with $K^\ast$ the subset of single-minded bidders that get a valuable bundle in the optimal allocation, i.e. $K^\ast=\{i : v_i(O_i)>0, i\in \mathsf{SM}\}$.  
Denote with $X=(X_1,\ldots,X_n)$ the allocation at the iteration where the feasible set of single-minded bidders is $K$. Note that by definition for every single-minded bidder $i$, $X_i=O_i$. Therefore,
\begin{equation}\label{eq-xo-same}
\sum_{i\in \mathsf{SM}} v_{i}(X_i)= \sum_{i\in \mathsf{SM}} v_{i}(O_i)  
\end{equation}
Also, note that 
$\bigcup_{i\in \mathsf{SM}} X_i=
\bigcup_{i\in \mathsf{SM}}O_i$, and we denote this subset of items with $J_{\mathsf{SM}}$.  Observe that since $O$ is a welfare-maximizing allocation, it is in particular an optimal allocation of the items in $[m]\setminus J_{\mathsf{SM}}$ to the bidders in $\mathsf{Other}$. We remind that the
allocation of the bidders in $\mathsf{Other}$ is determined by the protocol $\mathcal P$. 
Due to the approximation guarantee, we get that: 
\begin{equation}\label{eq-xos-approx}
    \sum_{i\in \mathsf{Other}}v_i(X_i)\ge \alpha \cdot \big(\sum_{i\in \mathsf{Other}} v_i(O_i) \big) 
\end{equation}
Therefore, we can deduce that:
\begin{align*}
    \sum_{i\in [n]} v_i(A_i)&\ge\sum_{i\in[n]} v_i(X_i)  &\text{(by construction)} \\
    &= \sum_{i\in \mathsf{SM}}v_i(X_i)+ \sum_{i\in \mathsf{Other}}v_i(X_i) \\
    &\ge \sum_{i\in \mathsf{SM}}v_i(O_i)
    + \alpha\cdot \sum_{i\in \mathsf{Other}}v_i(O_i)  &\text{(by (\ref{eq-xo-same}) and (\ref{eq-xos-approx}))} \\
    &\ge  \alpha\cdot\sum_{i \in [n]}v_i(O_i)
\end{align*}
which completes the proof. 

\subsubsection{Proof of \autoref{lemma:menu-reconstruction-cc}}
We begin by redefining $\price(\mathcal A)$ and $\tax(\mathcal A)$. The definitions are identical to the ones in \cite{D16b}, and we write them for the sake of completeness. 
$\price(\mathcal A)$ is the communication complexity of the $(n-1)$-bidder problem of computing the price for a specific bundle $S\subseteq M$ in the menu presented to player $i$ in the mechanism $\mathcal A$ given the valuation profile $v_{-i}$. For the definition of $\tax(\mathcal A)$, see \autoref{def::tax-comp}. 

Now, recall that  by the menu reconstruction theorem  \cite[Theorem 3.1]{D16b}, the communication complexity of computing the menu presented to a player $i$ given the valuations $v_{-i}$ of the other players and the truthful mechanism $\mathcal A$ is $\poly(\tax(\mathcal A),\price(\mathcal A),m,n)$. This is true for every domain.
In addition, by \cite[Proposition F.1]{D16b}, it holds that  $\price(\mathcal A)\le \cc(\mathcal A)$ for every truthful mechanism for a class of valuations that includes additive valuations. Similarly,  \cite[Proposition 2.3]{D16b} implies that 
$\tax(\mathcal A)\le \cc(\mathcal A)$ for every truthful mechanism $\mathcal A$ that is truthful for a class of valuations that contains subadditive valuations.\footnote{In fact, the statement in \cite{D16b} is specifically for the class of subadditive valuations. However, the proof holds as-written for every class of valuations that contains subadditive valuations.}
The proof is obtained by combining these assertions.

\subsection{Missing Proofs of \autoref{sec:real-proof}} \label{app:missing-proof-sec-4}
\begin{proof}[Proof of \autoref{lemma:large-family-existence}]
    We will show that for sufficiently large $m$, there exists an 
    $\ell$-independent $k$-width-family, where $\ell=\frac{1}{4} \log m$ and $k=e^{\frac{2\sqrt{m}}{\log m}}$.

        To obtain a $\ell$-independent $k$-width-family $\mathcal{F}=(\mathcal{G},\{\mathcal{H}_i^{(t)}\}_{t \in \{0,1\}, i \in [k]})$, consider the following two-layered randomized construction:
    \begin{enumerate}
        \item \emph{First Layer}: For every $i \in [k]$, let $G_i$ be the random set that contains each item $e \in [m]$ independently with probability ~$\frac12$.
        \item \emph{Second Layer}: For every $i, j \in [k]$, let $H_{i,j}^{(0)}$ be the random set that contains each $e \in \overline{G_i}$ independently with probability ~$\frac12$. Similarly, let $H_{i,j}^{(1)}$ be the random set that contains each item $e \in G_i$ independently  with probability ~$\frac12$. 
        Note that for every $i$ and for every $t\in\{0,1\}$, the distribution of the items in $H_{i,j}^{(t)}$ is identical for all values of $j$.
    \end{enumerate}
Thus, by construction $\mathcal F$ is a $k$-width-family.      
Thus, it remains to prove that with a non-zero probability, $\mathcal{F}$ is $\ell$-independent.

    Recall that $\mathcal{F}[\vb{b},\vb{C}]=\{S_{i,j}\}_{i,j\in [k]}$ is a collection of $k^2$ subsets.
    We can fix an index set $I \subset [k] \times [k]$ of size $\ell$, which specifies a sub-collection $\mathcal{F}[\vb{b},\vb{C}; I]=\{S_{i,j}\}_{(i,j) \in I}$ of size $\ell$ accordingly.
    Our plan is to show that for every choice of $\vb{b},\vb{C}$ and $I$, it is very unlikely that the sub-collection $\mathcal{F}[\vb{b},\vb{C}; I]$ covers $[m]$, and then the existence of a valid $\mathcal{F}$ follows from union bound.
    
    Observe that for every item $e$ and $S_{i,j}$, we have that $e\in S_{i,j}$ with probability $3/4$. In particular, it holds for all the subsets $S_{i,j}$ in $\mathcal{F}[\vb{b},\vb{C}; I]$. Also, note that the events $\{e \in \cup \mathcal{F}[\vb{b},\vb{C}; I]\}_{e \in [m]}$ are mutually independent.  
    Therefore, we have that:
    $$\Pr\left[\bigcup \mathcal{F}[\vb{b},\vb{C}; I] = [m]\right] = \prod_{e \in [m]} \left(1-\Pr\left[e \notin \bigcup \mathcal{F}[\vb{b},\vb{C}; I]\right]\right) = (1-\frac{1}{4^\ell})^m \le (1-2^{-2\ell})^m.$$
    Finally by union bound: 
    \begin{align*}
      \Pr\left[\exists \vb{b},\vb{C},I \text{ such that } \bigcup \mathcal{F}[\vb{b},\vb{C}; I] = [m]\right] 
      &\le\sum_{I \subset [k]\times [k], |I|=\ell} \Pr\left[\bigcup \mathcal{F}[\vb{b},\vb{C}; I] = [m]\right]\\
      &\le \binom{k^2}{\ell}2^{2\ell}(1-2^{-2\ell})^m\\
      &<  k^{2\ell} \exp(-2^{-2\ell}m) = \exp(2\ell \ln k - 2^{-2\ell} m) = 1.
    \end{align*}
    Therefore, with non-zero probability, $\mathcal{F}[\vb{b},\vb{C}]$ is $\ell$-sparse for all $\vb{b},\vb{C}$.
\end{proof}

\subsection{Missing Proofs of \autoref{section::two-player}}\label{app:missing-proof-sec-5}

\begin{proof}[Proof of \autoref{lem:xos-avg-intersection}]
The proof is based on the probabilistic method. We construct the collection $\mathcal G$ by sampling uniformly at random $\exp(b^2m/6)$ subsets of $[m]$, where the size of each subset is $bm$. 
For any two subsets $G_i,G_j$ of size $bm$ sampled uniformly at random, note that the expected size of their intersection $\E{|G_i \cap G_j|} = b^2m$.
By Chernoff bound, we have $\Pr[ |G_i \cap G_j| > 2b^2m] \le \exp(-b^2m/3)$. By applying the union bound, we conclude that: 
\begin{equation*}
\Pr[\exists i \ne j, |G_i\cap G_j| > 2b^2m] \le \sum_{i \ne j} \Pr[ |G_i \cap G_j| > 2b^2m] < |\mathcal{G}|^2 \cdot \exp(-b^2m/3) \le 1. \qedhere  
\end{equation*}
\end{proof}

\subsection{Proof of \autoref{thm:truthful-lowerbound-for-xos-cup-sm} by Hardness of Simultaneous Protocols}\label{app:simultaneous-proof-for-xos-cup-sm}

Recall the following two-bidder corollary from the Taxation Complexity framework:

\begin{theorem}[\cite{D16b}]\label{thm:truthful-implies-simultanous-for-xos}
    Let $\mathcal{A}=(f,p_1,p_2):\V_1\times \V_2 \to \Alloc \times \R^n$ be any deterministic truthful mechanism that guarantees an $\alpha$-approximation to the optimal welfare, and both domains $\V_1,\V_2$ contain $\XOS_m$. Then, there exists a simultaneous protocol $f':\V_1\times\V_2\to\Alloc$ that also guarantees an $\alpha$-approximation with simultaneous communication complexity $\cc(f')\le \poly(\cc(\mathcal{A}),m)$.
\end{theorem}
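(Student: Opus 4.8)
The plan is to follow the Taxation Complexity framework of~\cite{D16b}: turn the interactive truthful mechanism $\mathcal{A}$ into a simultaneous protocol by having the two bidders exchange (indices of) the menus they impose on each other, and then let the referee re-derive an allocation at least as good as $\mathcal{A}$'s. First I would recall the two structural facts that make this work. By the taxation principle (\autoref{prop-taxation-principle}), in a two-bidder truthful mechanism bidder~$1$'s menu $M_1 := \Menu_1^{\mathcal{A}}(v_2)$ is a function of $v_2$ alone, and bidder~$2$'s menu $M_2 := \Menu_2^{\mathcal{A}}(v_1)$ is a function of $v_1$ alone; moreover, on input $(v_1,v_2)$ the mechanism outputs a valid allocation in which bidder~$i$ receives a utility-maximizing bundle from $M_i$, under the fixed deterministic tie-breaking rule made explicit in~\cite{D16b}. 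In particular, bidder~$1$ can compute $M_2$ from $v_1$ with no communication, and bidder~$2$ can compute $M_1$ from $v_2$ with no communication; by \autoref{prop-menu-mono} we may also take each menu monotone and normalized.

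Second, I would bound the number of distinct menus. Since $\V_1,\V_2 \supseteq \XOS_m$, \autoref{thm:taxation-lower-bound-xos} gives $\tax(\mathcal{A}) \le m\,(\cc(\mathcal{A})+1)$, so each bidder faces at most $2^{m(\cc(\mathcal{A})+1)}$ distinct menus. Fixing an indexing of each bidder's menu family, a menu index occupies $O(m\,\cc(\mathcal{A})) = \poly(\cc(\mathcal{A}),m)$ bits.

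The simultaneous protocol $f'$ is then: bidder~$1$ writes the index of $\Menu_2^{\mathcal{A}}(v_1)$ and, simultaneously, bidder~$2$ writes the index of $\Menu_1^{\mathcal{A}}(v_2)$, each accompanied by the $\poly(\cc(\mathcal{A}),m)$-bit ``response certificate'' of~\cite{D16b} that lets the referee pin down the bundle the bidder selects on the menu they are presented. From these messages the referee reconstructs an allocation whose welfare is at least that of $\mathcal{A}$'s allocation on $(v_1,v_2)$ and outputs it. The total communication is $\poly(\cc(\mathcal{A}),m)$, and since $\mathcal{A}$ is an $\alpha$-approximation, so is $f'$.

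The main obstacle — and the reason this is not a generic ``interactive $\Rightarrow$ simultaneous'' statement — is exactly the reconstruction step: the two menu indices alone do not determine the allocation, since a bidder's chosen bundle can depend on their full valuation even after both menus are fixed, so the protocol must transmit extra information about each bidder's best response. The technical content of~\cite{D16b} is that this extra information compresses to $\poly(\cc(\mathcal{A}),m)$ bits, exploiting that a full transcript of $\mathcal{A}$ (hence the induced outcome) is only $\cc(\mathcal{A})$ bits long — so that for a fixed valuation of one bidder there are at most $2^{\cc(\mathcal{A})}$ possible outcomes as the other's valuation ranges — together with the bound on the number of menus above and the fact that $\mathcal{A}$'s tie-breaking can be taken to depend only on the presented menu and the winning bidder's valuation. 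Since this compression, and the careful handling of tie-breaking it requires, is precisely what the menu-reconstruction and taxation-complexity results of~\cite{D16b} establish (in particular their two-player specialization), I would invoke those results directly, and the stated theorem follows.
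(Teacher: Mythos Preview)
The paper does not give its own proof of this statement; it is stated as a known corollary of the Taxation Complexity framework of~\cite{D16b} and used as a black box (see the sentence ``Recall the following two-bidder corollary from the Taxation Complexity framework'' immediately preceding the theorem). Your proposal is consistent with this treatment: you correctly identify the two structural ingredients the paper itself highlights elsewhere --- that each bidder in a two-player auction can privately compute the menu they present to the other (\autoref{prop-taxation-principle}), and that the number of menus is bounded via \autoref{thm:taxation-lower-bound-xos} so an index fits in $\poly(\cc(\mathcal{A}),m)$ bits --- and you correctly flag that the nontrivial step is compressing each bidder's ``response'' into $\poly(\cc(\mathcal{A}),m)$ bits, then defer that step to~\cite{D16b}. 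Since the paper also defers entirely to~\cite{D16b} here, there is nothing further to compare; your sketch is an accurate summary of the argument behind the cited result.
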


Based on this result, it suffices to give a lower bound of the communication complexity for simultaneous algorithms to prove \autoref{thm:truthful-lowerbound-for-xos-cup-sm}.

\begin{proposition}\label{prop:simultaneous-lowerbound-for-xos-cup-sm}
Let $f: \SingleM \times \XOS \to \Alloc$ be any deterministic simultaneous protocol. 
If $f$ obtains a $(\frac{\sqrt 5-1}{2}+3m^{-\frac13})$-approximation to the optimal welfare, then its simultaneous communication complexity $\cc(f)\ge \exp(m^{\frac13}/6)$.
\end{proposition}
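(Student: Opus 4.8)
The plan is to transcribe the proof of \autoref{prop:taxation-lowerbound-for-xos-cup-sm} almost verbatim, with the role of Alice's menu $\Menu_A^{\mathcal A}(v_B)$ replaced by Bob's simultaneous message. Write the simultaneous protocol $f$ via message functions $m_A(\cdot), m_B(\cdot)$ together with a referee rule, so that the allocation $f(v_A,v_B)$ depends on $(v_A,v_B)$ only through the pair $(m_A(v_A),m_B(v_B))$. In particular, if $m_B(v_B)=m_B(v_B')$ for two valuations of Bob, then $f(v_A,v_B)=f(v_A,v_B')$ for every $v_A$; this ``combinatorial rectangle'' property of simultaneous protocols is the only structural fact we need, and it plays exactly the role that the taxation principle played in the truthful setting.

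I would then instantiate the hard family exactly as in \autoref{sec:xos-lb-truth-class-of-valuation}: set $b=m^{-1/3}$, let $\mathcal G$ be a collection of size $\exp(m^{1/3}/6)$ with the $b$-average intersection property (which exists by \autoref{lem:xos-avg-intersection}), and consider $\XOS^*=\{v_{\mathcal H}: \mathcal H\subseteq\mathcal G\}$ and $\SingleM^*$ with $\alpha=\frac{\sqrt5-1}{2}\,bm=\frac{\sqrt5-1}{2}m^{2/3}$. The key claim is that if $f$ achieves a $(\frac{\sqrt5-1}{2}+3m^{-1/3})$-approximation, then $m_B$ is injective on $\XOS^*$. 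Suppose not: $m_B(v_B)=m_B(v_B')$ for distinct $v_B\leftrightarrow\mathcal H$ and $v_B'\leftrightarrow\mathcal H'$, and pick (without loss of generality) $H\in\mathcal H\setminus\mathcal H'$. Run $f$ with Alice holding $v_{\overline H,\delta}\in\SingleM^*$; since her valuation only depends on whether her received bundle contains $\overline H$, and since the allocation is the same whether Bob holds $v_B$ or $v_B'$, we may case-split on whether Alice's bundle contains $\overline H$. If it does, then on the profile $(v_{\overline H,1},v_B')$ the welfare is at most $(\alpha+1)+v_B'(H)\le(\alpha+1)+2b^2m$ (Bob's leftover bundle is a subset of $H$, and we invoke the $b$-average intersection bound), while the optimal welfare is at least $v_B'([m])=bm$. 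If it does not, then on the profile $(v_{\overline H,0},v_B)$ Alice gets value $0$ and the welfare is at most $v_B([m])=bm$, while the optimal welfare is at least $v_{\overline H,0}(\overline H)+v_B(H)=\alpha+bm$. Either way the approximation ratio is at most $\max\{\tfrac{(\alpha+1)+2b^2m}{bm},\tfrac{bm}{\alpha+bm}\}$, which is the same expression as \autoref{eq-approx-ratio-tax-xos}; plugging in $b=m^{-1/3}$ and $\alpha=\frac{\sqrt5-1}{2}bm$ and using the golden-ratio identity $\frac{1}{1+r}=r$ for $r=\frac{\sqrt5-1}{2}$, this is strictly below $\frac{\sqrt5-1}{2}+3m^{-1/3}$ for all large $m$ --- a contradiction. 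Hence $m_B$ is injective on $\XOS^*$, so Bob uses at least $|\XOS^*|=2^{|\mathcal G|}$ distinct messages, and therefore $\cc(f)\ge\log_2 2^{|\mathcal G|}=|\mathcal G|=\exp(m^{1/3}/6)$.

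I do not expect a genuine obstacle: the argument is a direct adaptation of \autoref{prop:taxation-lowerbound-for-xos-cup-sm}, with the rectangle property of simultaneous protocols substituting for the taxation principle. The only places that need a little care are (i) the bookkeeping in the case analysis --- noting that $v_{\overline H,\delta}$ cares only about whether the received bundle contains $\overline H$, so ``Alice receives exactly $\overline H$'' and ``Alice receives a proper superset of $\overline H$'' are handled uniformly, while Bob's complementary bundle is then forced to be a subset of $H$ --- and (ii) verifying that both candidate ratios fall below $\frac{\sqrt5-1}{2}+3m^{-1/3}$ once $\alpha$ and $b$ are fixed, which is precisely the computation already carried out after \autoref{eq-approx-ratio-tax-xos}. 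Combining \autoref{prop:simultaneous-lowerbound-for-xos-cup-sm} with \autoref{thm:truthful-implies-simultanous-for-xos} then yields \autoref{thm:truthful-lowerbound-for-xos-cup-sm} as an alternative to the taxation-complexity proof.
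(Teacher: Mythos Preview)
Your proposal is correct in spirit and matches the paper's own proof essentially line for line: Bob's simultaneous message plays the role that Alice's menu played in the taxation proof, and the case split on whether Alice's received bundle contains $\overline H$ yields the same welfare-ratio dichotomy and the same final bound $\cc(f)\ge|\mathcal G|=\exp(m^{1/3}/6)$.

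There is one small gap to patch. You carry over the two values $\delta\in\{0,1\}$ from the taxation proof, but in the simultaneous setting they are unnecessary and, as written, make the case split ill-posed. In the taxation proof the split is on the menu price $M(\overline H)$ versus $\alpha$, and the two $\delta$'s are then chosen to force Alice's allocation on each side of that threshold. Here there is no menu price; you split directly on Alice's received bundle, which itself depends on $\delta$. So when you write ``if Alice's bundle contains $\overline H$, consider $(v_{\overline H,1},v_B')$; if not, consider $(v_{\overline H,0},v_B)$'', the two cases are not exhaustive: nothing in a non-truthful simultaneous protocol rules out the perverse situation $A_1\not\supseteq\overline H$ while $A_0\supseteq\overline H$. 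The fix is immediate and is exactly what the paper does: fix a single valuation $v_A$ (say with value $\alpha$ for $\overline H$), let $A$ be Alice's bundle on $(v_A,v_B)=(v_A,v_B')$, and case-split on whether $A\supseteq\overline H$. This also removes the stray $+1$ in the first term of your ratio.
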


\begin{proof}
Let Alice be the single-minded bidder and Bob be the XOS bidder.
Since $f$ is a simultaneous algorithm, if Bob sends an identical message upon different valuations $v_B' \ne v_B$, then the resulting allocation will also be identical, i.e.,  $f(v_A,v_B)=f(v_A,v_B')$ for all $v_A \in \SingleM$.
We will show that it further leads to a contradiction with the approximation ratio.

For any distinct $v_B,v_B' \in \XOS^*$, denote with $\mathcal H$ and with $\mathcal H'$ respectively the combinations that are associated with $v_B$ and with $v_B'$ respectively. 
Since $v_B\ne v_B'$, there exists at least one subset $H$ such that (without loss of generality) $H\in \mathcal H$, whereas $H\notin \mathcal H'$.
Consider the case when Alice (with valuation $v_A$) gives a value of $\alpha<bm$ to the bundle $\overline{H}=[m] \setminus H$ (and zero otherwise).
If by contradiction, $f(v_A,v_B)=f(v_A,v_B')$, then Alice must be given a same bundle $A\subseteq [m]$ for both instances $(v_A,v_B)$ and $(v_A,v_B')$. There are two possibilities:
\begin{itemize}
    \item If $\overline{H} \subseteq A$ (i.e., Alice gets $\overline{H}$): for the instance $(v_A,v_B')$, the actual welfare $f$ gets is at most $v_A(\overline{H})+v_B'(H) \le \alpha + 2b^2 m$, while the optimal welfare being at least $v_B'([m])=bm$.
    \item If $\overline{H} \not\subseteq A$ (i.e., Alice gets nothing valuable): for the instance $(v_A,v_B)$, the actual welfare $f$ gets is at most $v_B([m])=bm$, while the optimal welfare being $v_A(\overline{H})+v_B(H)=\alpha + bm$.
\end{itemize}

Overall, we know that the approximation ratio of $f$ can be at most
$$\max\left\{\frac{\alpha+2b^2m}{bm},\frac{bm}{\alpha+bm} \right\}.$$

Recall $b=m^{-\frac13}$ and let $\alpha= \frac{\sqrt{5}-1}{2}  bm = \frac{\sqrt{5}-1}{2} m^{\frac23}$.
Then, the ratio above is always strictly smaller than $\frac{\sqrt{5}-1}{2} + 3m^{-\frac13}$.
Therefore, to get a $(\frac{\sqrt{5}-1}{2} + 3m^{-\frac13})$-approximation, Bob must send different messages for all $v_B \in \XOS^*$. As a result, $\cc(f)\ge \log |\XOS^*| = |\mathcal{C}|=\exp(m^{\frac13}/6)$.
\end{proof}

\begin{proof}[Proof of \autoref{thm:truthful-lowerbound-for-xos-cup-sm}]
    By \autoref{thm:truthful-implies-simultanous-for-xos}, there is some simultaneous protocol $f':\V \times \V \to \Alloc$ that $(\frac{\sqrt 5-1}{2}+3m^{-\frac13})$-approximates the optimal welfare with simultaneous communication complexity $\cc(f') \le \poly(\cc(\mathcal{A}),m)$.
    However, by \autoref{prop:simultaneous-lowerbound-for-xos-cup-sm}, we know $\cc(f') \ge \exp(m^{\frac13}/6)$. Therefore, we conclude that $\cc(\mathcal{A}) \ge \exp(\Omega(m^{\frac13}))$.
\end{proof}

\begin{remark}
We now show that simultaneous protocols for two bidders can achieve $\frac{\sqrt{5}-1}{2}$ of the welfare for the class $\SingleM \times \XOS$, meaning that 
\autoref{prop:simultaneous-lowerbound-for-xos-cup-sm} is asymptotically tight. 
Denote the bundle that Alice (the single-minded bidder) wants by $S$.
Consider some $f^*: \SingleM \times \XOS \to \Alloc$ that gives the bundle $S$ to Alice and the bundle $[m]\setminus S$ to Bob when $v_1(S)\ge \frac{\sqrt{5}-1}{2} v_2([m])$, and otherwise allocates all items  to Bob. 
$f^*$ can be implemented by simultaneous communication and always gives a $\frac{\sqrt{5}-1}{2}$-approximation to welfare.
\end{remark}

\section{The Structure of Payments of Truthful Mechanisms}
\label{app:structure-of-payments}

In this section, we show a formula for
an upper bound and lower bound on 
the payments of approximately optimal and truthful mechanisms whose class of valuations includes single-minded bidders.  
An asymptotic interpretation of \autoref{prop-tax-approx}, which might be of independent interest is that as the approximation guarantee $\alpha$ of a truthful mechanism  goes to $1$, then the payments of the mechanism are approach VCG payments. Moreover, 
\autoref{prop-tax-approx} in its current form applies to two-bidder mechanisms, but it can readily be extended to settings with an arbitrary number of bidders, as well as to the related setting of multi-unit auctions.

We start by defining a class of $\SingleM$ valuations which allow us to bound the prices that any two-player good-approximation mechanism must charge:

\begin{definition}
\label{def:upper-lower-vals}
Fix a valuation $v$, a nonempty subset of items $S\subseteq [m]$ and two constants $\alpha\in [0,1]$ and $\epsilon>0$. We
define a valuation $\upr_{v,\alpha,\epsilon,S}$ as follows:
$$
\forall X\subseteq [m], \quad
\upr_{v,\alpha,\epsilon,S}(X)=
\begin{cases}
    \frac{1}{\alpha}\cdot v([m])-v([m]\setminus S)+\epsilon \quad &X\supseteq S, \\
0\quad &\text{otherwise.} 
\end{cases}
$$
We define $\lwr_{v,\alpha,\epsilon,S}$ similarly:  
$$
\forall X\subseteq [m], \quad
\lwr_{v,\alpha,\epsilon,S}(X) =\begin{cases}
    \max\{{\alpha}\cdot v([m])-v([m]\setminus S)-\epsilon,0\}
    \quad &X\supseteq S, \\
0\quad &\text{otherwise.} 
\end{cases}
$$
\end{definition}

Now, we give the bound on prices enabled by \autoref{def:upper-lower-vals}:

\begin{proposition}\label{prop-tax-approx}
Let $\mathcal A=(f,p_1,p_2):\V_1\times \V_2\to \Alloc \times \mathbb{R}^2$ be any deterministic truthful mechanism that gives an $\alpha$-approximation to the optimal welfare.  
For some $\epsilon>0$, if both $\upr_{v_2,\alpha,\epsilon,S}$ and $\lwr_{v_2,\alpha,\epsilon,S}$ belong to $\V_1$, then for every valuation $v_2\in \V_2$ and every bundle $S\subseteq [m]$, 
$$
\frac{1}{\alpha}\cdot v_2([m])- v_2([m]\setminus S)+\epsilon 
\ge \Menu_1^{\mathcal{A}}(v_2)(S)-\Menu_1^{\mathcal{A}}(v_2)(\emptyset)\ge \alpha\cdot v_2([m])- v_2([m]\setminus S)-\epsilon 
$$
\end{proposition}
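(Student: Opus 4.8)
The plan is to fix $v_2 \in \V_2$ and a nonempty bundle $S\subseteq[m]$, and to prove the two inequalities separately by instantiating bidder $1$'s valuation with the two single-minded valuations of \autoref{def:upper-lower-vals} and feeding the resulting instances into the $\alpha$-approximation guarantee. First I would normalize the setup: by \autoref{prop-menu-mono} I may assume the menu $M:=\Menu_1^{\mathcal A}(v_2)$ is non-decreasing and normalized, so $M(\emptyset)=0$ and $M(S)\le M(T)$ for every achievable $T\supseteq S$; this alters only the payments, not the allocation, so the approximation guarantee survives, and the claim reduces to
\[\alpha\, v_2([m])-v_2([m]\setminus S)-\epsilon\ \le\ M(S)\ \le\ \tfrac1\alpha\, v_2([m])-v_2([m]\setminus S)+\epsilon.\]
Throughout I would use the taxation principle in the form: bidder $1$ with valuation $v_1$ receives a bundle maximizing $v_1(T)-M(T)$, so her utility is at least $v_1(\emptyset)-M(\emptyset)=0$.

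For the upper bound I would set bidder $1$'s valuation to $v_1=\upr_{v_2,\alpha,\epsilon,S}$, which is single-minded with weight $w:=\tfrac1\alpha v_2([m])-v_2([m]\setminus S)+\epsilon>0$ on bundles containing $S$. A short case analysis on how the items are split shows that the optimal welfare on $(v_1,v_2)$ is $w+v_2([m]\setminus S)=\tfrac1\alpha v_2([m])+\epsilon$: the allocation $(S,[m]\setminus S)$ attains it; any allocation in which bidder $1$ gets a superset of $S$ is no better by monotonicity of $v_2$; and any allocation in which bidder $1$ gets a non-superset of $S$ has welfare at most $v_2([m])\le \tfrac1\alpha v_2([m])+\epsilon$. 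The $\alpha$-approximation then forces $\mathcal A$ to obtain welfare at least $v_2([m])+\alpha\epsilon>v_2([m])$, which excludes every outcome where bidder $1$'s bundle omits part of $S$; hence bidder $1$ wins some $T^*\supseteq S$. Her utility $w-M(T^*)$ is nonnegative, so $M(S)\le M(T^*)\le w$ by monotonicity, which is exactly the upper bound.

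For the lower bound I would take $v_1=\lwr_{v_2,\alpha,\epsilon,S}$, single-minded with weight $w':=\max\{\alpha v_2([m])-v_2([m]\setminus S)-\epsilon,\,0\}$ on bundles containing $S$. If $w'=0$ the claimed bound has a right-hand side that is $\le 0$, so it follows from $M(S)\ge M(\emptyset)=0$; thus I would assume $w'=\alpha v_2([m])-v_2([m]\setminus S)-\epsilon>0$ (which forces $v_2([m])>0$) and argue by contradiction. If $M(S)<w'$, then by monotonicity the bundle maximizing $w'\cdot\mathrm{I}[T\supseteq S]-M(T)$ must contain $S$ (value $w'-M(S)>0$ on the cheapest superset, namely $S$ itself, versus $-M(T)\le 0$ on any non-superset), so bidder $1$ wins some $T^*\supseteq S$, and the welfare of $\mathcal A$ on $(v_1,v_2)$ is $w'+v_2([m]\setminus T^*)\le w'+v_2([m]\setminus S)=\alpha v_2([m])-\epsilon$. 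Since the optimal welfare on $(v_1,v_2)$ is at least $v_2([m])$ (give everything to bidder $2$), $\mathcal A$'s ratio on this instance is at most $\tfrac{\alpha v_2([m])-\epsilon}{v_2([m])}<\alpha$, contradicting the hypothesis; hence $M(S)\ge w'$.

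The main obstacle I anticipate is not the welfare arithmetic, which is elementary once the $\upr$/$\lwr$ valuations are in hand, but the menu bookkeeping in the normalization step: I need \autoref{prop-menu-mono} to justify that $\emptyset$ is genuinely a priced option on the menu (value $0$) and that $M(S)$ lower-bounds the menu price of every achievable superset of $S$, so that the conclusions ``bidder $1$ wins a bundle $\supseteq S$'' translate into bounds on $M(S)$ itself rather than on whatever bundle bidder $1$ happens to receive. Granting that, each direction uses the $\alpha$-approximation guarantee exactly once, on the instance $(v_2,\upr_{v_2,\alpha,\epsilon,S})$ and $(v_2,\lwr_{v_2,\alpha,\epsilon,S})$ respectively.
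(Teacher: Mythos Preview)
Your proposal is correct and follows essentially the same approach as the paper: both proofs instantiate bidder~1 with the single-minded valuations $\upr$ and $\lwr$, use the $\alpha$-approximation guarantee to pin down whether bidder~1 must (or must not) receive a superset of $S$, and then translate that into bounds on $M(S)$ via truthfulness and menu monotonicity. The only cosmetic difference is that for the lower bound you argue by contradiction (assume $M(S)<w'$, show bidder~1 wins a superset of $S$, then derive a sub-$\alpha$ ratio), whereas the paper argues directly (the approximation guarantee forces bidder~1 \emph{not} to win a superset of $S$, and then truthfulness against the option $S$ yields $M(S)\ge w'$); these are contrapositives of one another.
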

\begin{proof}
Denote with $v_1^{\up}$ and $v_1^{\low}$ the valuations $\upr_{v_2,\alpha,\epsilon,S}$ and $\lwr_{v_2,\alpha,\epsilon,S}$ respectively.    
Consider the possible allocations that achieve an $\alpha$ approximation given the valuation profile $(v_1^{\up}, v_2)$.

Consider a bundle $T$ that contains the desired bundle $S$. Using the fact that $v_1^{\up}$ is single-minded, we have that
$$v_1^{\up}(T) + v_2([m]\setminus T) \ge v_1^{\up}(S) + v_2([m]\setminus S) = \frac 1 \alpha v_2([m]) + \epsilon.$$
On the other hand, if a bundle $T$ does not contain $S$, then: $$v_1^{\up}(T) + v_2([m] \setminus T) \le v_2([m]) < \alpha \cdot \left(\frac 1 \alpha v_2([m]) + \epsilon\right).$$
Thus, to get an $\alpha$-approximation to the optimal welfare, we must allocate a bundle containing $S$ to bidder $1$ given the valuation profile $(v_1^\up,v_2)$. 

Analogously, consider the possible allocations that achieve an $\alpha$ approximation  given $(v_1^{\low}, v_2)$. 
We have $v_1^{\low}(S) + v_2([m]\setminus S) = \alpha \cdot v_2([m]) - \epsilon < \alpha\cdot v_2([m])$, which implies that in this case any allocation awarding $S$ to bidder $1$ cannot possibly be an $\alpha$-approximation to the optimal welfare.

We now achieve the desired bounds on payments by using the fact that the mechanism is truthful, so given $v_1^{\up}$, bidder $1$ will not want to misreport the valuation  $v_1^{\low}$ and vice-versa.
Let $X^{\up}$ be the bundle that bidder $1$ wins given $(v_1^\up,v_2)$.
Using truthfulness and the monotonicity of the menu (\autoref{prop-menu-mono}), we have that:
\begin{equation*}
    v_1^{\up}(S)-\Menu_1^{\mathcal{A}}(v_2)(S)
    \ge v_1^{\up}(X^{\up})-\Menu_1^{\mathcal{A}}(v_2)(X^{\up})
    \ge v_1^{\up}(\emptyset)-\Menu_1^{\mathcal{A}}(v_2)(\emptyset)
\end{equation*}
Rearranging this inequality and applying the definition of $v_1^{\up}$:
\begin{equation*}
\Menu_1^{\mathcal{A}}(v_2)(S)-\Menu_1^{\mathcal{A}}(v_2)(\emptyset)
\le v_1^{\up}(S) - v_1^{\up}(\emptyset)
= \frac{1}{\alpha}\cdot v_2([m])-v_2([m]\setminus S)+\epsilon.
\end{equation*}
Thus, we obtain the first part of the lemma. 

For the second part, let  $X^{\low}$ be the bundle that bidder $1$ wins given $(v_1^\low,v_2)$. 
Since $v_1^{\low}(X^{\low})=0 = v_1^\low(\emptyset)$,
applying truthfulness and menu monotonicity gives: 
$$v_1^{\low}(\emptyset)-\Menu_1^{\mathcal{A}}(v_2)(\emptyset) \ge v_1^{\low}(X^{\low})-\Menu_1^{\mathcal{A}}(v_2)(X^{\low})\ge v_1^{\low}(S)-\Menu_1^{\mathcal{A}}(v_2)(S),$$
and again rearranging, we have
$$\Menu_1^{\mathcal{A}}(v_2)(S)-\Menu_1^{\mathcal{A}}(v_2)(\emptyset) \ge v_1^{\low}(S)-v_1^{\low}(\emptyset) = \alpha\cdot v_2([m])-v_2([m]\setminus S)-\epsilon.$$
which finishes the proof.
\end{proof}

\section{An Example of a $2$-Width-Family}
\label{app:example}
In this section, we give an example of a $2$-width-family, to demystify our construction in \autoref{sec:real-proof}. 
Consider the case where there are $6$ items, which we denote with $\{1,2,3,4,5,6\}$. 
Consider the set family defined as follows: 
\begin{align*}
 \mathcal{G}&=\big\{G_1=\{1,2\},G_2=\{3,5,6\}\big\} \\
 \mathcal H_1^{(0)}&= \big\{H_{1,1}^{(0)}=\{3,4\},H_{1,2}^{(0)}=\{5,6\}\big\} \\
  \mathcal H_1^{(1)}&= \big\{H_{1,1}^{(1)}=\{1,2\},H_{1,2}^{(1)}=\{1\}\big\} \\
   \mathcal H_2^{(0)}&= \big\{H_{2,1}^{(0)}=\{2,4\},H_{2,2}^{(0)}=\{1\}\big\} \\
  \mathcal H_2^{(1)}&= \big\{H_{2,1}^{(1)}=\{3,5\},H_{2,2}^{(1)}=\{5\}\big\} 
\end{align*}
Now, we define:
\begin{equation*}
    \vb b=(0,1), \quad  \vb {C} = \begin{pmatrix}
0 & 1 \\
0 & 1
\end{pmatrix}
\end{equation*}
Thus, the collection $\mathcal F[\vb b,\vb C]$
is composed of the subsets $S_{1,1},S_{1,2},S_{2,1},S_{2,2}$ that are defined as follows:
\begin{align*}
    b[1]=0,C[1][1]=0 \implies S_{1,1}&= G_1\cup H_{1,1}^{(0)}= \{1,2,3,4\} \\
        b[1]=0,C[1][2]=1 \implies S_{1,2}&= G_1\cup (\overline G_1\setminus H_{1,2}^{(0)}) = \{1,2,3,4\} \\                b[2]=1,C[2][1]=0 \implies S_{2,1}&= \overline G_2\cup  H_{2,1}^{(1)} = \{1,2,3,4,5\}   \\
        b[2]=1,C[2][2]=1 \implies S_{2,2}&= \overline G_2\cup  (G_2\setminus H_{2,2}^{(1)})  = \{1,2,3,4,6\}
\end{align*}
Observe that by definition $\mathcal F[\vb b,\vb C]$ is $2$-sparse, because there is no one subset among $S_{1,1},S_{1,2},S_{2,1},S_{2,2}$  that covers $\{1,2,3,4,5,6\}$. However, it is not $3$-sparse, because $S_{2,1}\cup S_{2,2}=\{1,2,3,4,5,6\}$, which implies that the family $\mathcal F$ is not $3$-independent.

\section{On The Precision of Mechanisms and Valuations}
\label{app:uniform}
In this section, we dive deeper into an assumption of the taxation framework; in particular, we remark on the precision of the bidders' valuations vs. the mechanism's prices. 
Observe that when discussing the communication complexity of combinatorial auctions, it is necessary to assume that the valuation functions of each bidder has values that are bounded in some range. 
The reason for it is that otherwise, even the communication complexity of a single item auction is infinite, regardless of incentives.\footnote{
    Consider a single-item auction with two bidders with values in the infinite range $R=\{0,\ell,\ell^2,\ell^3,\ldots\}$ for the item, where $\ell$ is arbitrarily large. 
    Applying the fooling set argument for the pairs $(x,x)_{x\in R}$ gives that the communication complexity of every mechanism that gives an approximation better than $\ell$ for the optimal welfare requires at least one transcript for every element of $R$, so it requires infinite communication.
} 
Therefore, when formally considering a class of valuation functions, it is necessary to specify not only their properties, i.e., whether they are subadditive or single-minded, but also their precision. 

Formally handling the precision of valuations has been discussed before, e.g. in \cite[Appendix A.4.1]{D16b}.
As discussed in our \autoref{fn:precision-preliminaries}, the basic approach is to assume all valuations can be represented by $k$-bit numbers for some implicit parameter $k$, and to formally define a protocol for some valuation class as a \emph{family} of protocols $(P_k)_k$ for different precision $k$ (with the added assumption that when $k' > k$ and $P_{k'}$ is run on $k$-bit numbers, the output is the same as the output of $P_k$).
To our knowledge, all prior upper and lower bounds %
in the algorithmic mechanism design literature hold in this formal model while implicitly hiding factors of $\poly(k)$.

We will now address an additional assumption of the taxation framework, and show that it is in fact not necessary. 
Namely, \cite{D16b} assumes that mechanisms that are ``{precision-aligned}''. A mechanism $\mathcal A:\mathcal V \to \Sigma \times \mathbb R^n$ is \emph{precision-aligned} if the precision of the mechanism is equal to the precision of $\mathcal V$.
More concretely, if every valuation in $\mathcal V$ has values in a certain range $R$, then the mechanism outputs prices that are in range $R$.

However, one can easily show that, in fact, assuming precision-aligned mechanisms is without loss of generality.
Thus, the results of \cite{D16b} (and hence, other results which build on these, such as ours and those of \cite{AKSW20}) in fact hold for all mechanisms, not just for precision-aligned ones. 
Formally:
\begin{proposition}[Communicated by \cite{Dob24}]
    Let $\mathcal A:\mathcal V\to \Sigma \times \mathbb R^n$ be a truthful mechanism.   Then, there exists a truthful precision-aligned mechanism $\mathcal A'$ such that:
    \begin{enumerate}
        \item For every valuation profile $v\in \mathcal V$, the mechanisms $\mathcal A$ and $\mathcal A'$ output the same allocation.
        \item The communication complexity of $\mathcal A'$ is the same as the communication complexity of $\mathcal A$.
    \end{enumerate}
\end{proposition}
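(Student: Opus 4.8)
The plan is to leave the allocation rule of $\mathcal A$ untouched and merely \emph{round} its menu prices onto the grid used by the valuations. By the taxation principle (\autoref{prop-taxation-principle}), $\mathcal A$ is fully described by its allocation rule $f$ together with, for each bidder $i$ and each $v_{-i}$, the menu $M := \Menu_i^{\mathcal A}(v_{-i})$, bidder $i$ paying $M\big(f_i(v_i,v_{-i})\big)$. Write $k$ for the implicit precision of $\mathcal V$, so every value $v_i(S)$ is an integer multiple of $2^{-k}$ lying in $[0,2^k]$. I would define $\mathcal A'$ to have the \emph{same} allocation rule $f$ and the menu $M'(S) := 2^{-k}\ceil{2^k M(S)}$, with bidder $i$ paying $M'\big(f_i(v_i,v_{-i})\big)$; since $M'$ takes values on the grid $2^{-k}\Z$, the prices of $\mathcal A'$ are at the same granularity as the valuations. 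To also fit prices into the numerical \emph{range} $R=[0,2^k]$, I would first subtract from $M$ the per-$v_{-i}$ constant $c := \min\{M(S) : S \text{ is on bidder } i\text{'s menu}\}$ (this changes payments by a constant and affects neither truthfulness nor the allocation; one may alternatively invoke \autoref{prop-menu-mono}), then cap: $M'(S) := \min\{\,2^{-k}\ceil{2^k(M(S)-c)},\ 2^k\,\}$, and offer any bundle not on $M$'s menu at price $2^k$. It then remains to check truthfulness, that the allocation is unchanged (immediate by construction), precision-alignment (immediate), and the communication bound.

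The heart of the argument is a one-line grid observation: for any two bundles $S,T$ we have $M'(S)-M'(T) < \big(M(S)-M(T)\big) + 2^{-k}$ (before capping), since $\ceil{x}\in[x,x+1)$. Now fix $v_i$ and let $S = f_i(v_i,v_{-i})$; truthfulness of $\mathcal A$ gives $M(S)-M(T)\le v_i(S)-v_i(T)$ for every $T$, hence $M'(S)-M'(T) < \big(v_i(S)-v_i(T)\big) + 2^{-k}$. But $M'(S)-M'(T)$ and $v_i(S)-v_i(T)$ are both integer multiples of $2^{-k}$, so this strict bound upgrades to $M'(S)-M'(T)\le v_i(S)-v_i(T)$, i.e.\ $S$ is still a utility-maximizing bundle for $v_i$ under $M'$. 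Thus $(f,M')$ is a truthful auction rule with allocation $f$. For the capping/normalization: after subtracting $c$, the bundle $i$ actually receives has price $M(S)-c\le v_i(S)\le 2^k$, so rounding it up keeps it in $[0,2^k]$ and the cap never touches it; and any bundle priced above $2^k$ (or newly offered at $2^k$) has, at valuation $v_i$, utility $\le v_i(\cdot)-2^k\le 0$ while $S$ retains nonnegative utility, so such bundles remain unallocated and $S$ stays in the argmax. This is precisely why precision-alignment is ``for free''.

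For the communication bound, observe that $\mathcal A'$ shares the allocation rule of $\mathcal A$, so a protocol $P$ for $\mathcal A$ of cost $\cc(\mathcal A)$ already outputs $\mathcal A'$'s allocation; moreover its transcript reveals each payment $p_i = M\big(f_i(v)\big)$, and $\mathcal A'$'s payment $p'_i = \min\{\,2^{-k}\ceil{2^k(p_i-c_i)},\ 2^k\,\}$ is then a purely local computation, where $c_i$ is the normalization constant, which itself can be extracted by running $P$ once more per bidder (having bidder $i$ report the all-zero valuation, whose received bundle is priced exactly $c_i$). Hence $\cc(\mathcal A')\le (n+1)\cdot\cc(\mathcal A)$, and $\cc(\mathcal A')=\cc(\mathcal A)$ if one only requires the $2^{-k}$-granularity and not the range. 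The main obstacle I anticipate is exactly this last piece of bookkeeping around the range $R$: the rounding-preserves-allocation step is immediate from the common-grid observation, whereas making the normalization available to the protocol cleanly — and clarifying whether ``precision-aligned'' constrains only the granularity or also the range — is the genuinely fiddly part of the writeup.
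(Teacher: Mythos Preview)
Your proposal is correct and follows the same argument as the paper's: round menu prices to the valuation grid and observe that, since rounded differences differ from the true differences by strictly less than one grid step, the truthfulness inequality survives once both sides lie on that grid. The only cosmetic differences are that the paper rounds \emph{down} rather than up and does not address the range constraint at all---it simply relabels the leaves of the same protocol tree with $\lfloor p\rfloor$, giving $\cc(\mathcal A')=\cc(\mathcal A)$ immediately---so your extra normalization/capping step is more thorough than the paper's own proof and is, as you anticipated, the only place a multiplicative overhead appears.
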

\begin{proof}
    We assume for simplicity that the range of the valuations in 
$\mathcal V$ is integer values in $\{0,\ldots, H\}$, though an analogous proof works for every fixed decimal precision.  

The mechanism of $\mathcal A'$ that we propose is equivalent to the mechanism $\mathcal A$, except that the prices in each leaf are rounded down to the nearest integer. Thus, we get that the mechanisms $\mathcal A'$ and $\mathcal A$ have the same precision and the same communication complexity. It remains to show that $\mathcal A'$ is truthful. We remind that by the taxation principle, it suffices to show that
for every player $i$, and every valuation profile $(v_{i},v_{-i})$, the bundle that $\mathcal A'$ allocates to bidder $i$ remains the most profitable given the new rounded prices. 

Fix such player $i$ and a valuation profile $(v_i,v_{-i})$.
Let $S$ be the most 
profitable bundle, and let $T$ be some other bundle. Denote their prices given the (old) mechanism $\mathcal A$ with $p_S$ and $p_T$, respectively. 
Since $\mathcal A$ is truthful, we have that $v_i(S)-p_S\ge v_i(T)-p_T$, and our goal is to show that:
$v_i(S)-\lfloor p_S\rfloor \ge v_i(T)-\lfloor p_T\rfloor$.

To prove this, we represent $p_S$ as an integer $n_S$ and a fraction $f_S$, where $p_S=n_S+f_S$, where $n_S=\lfloor p_S \rfloor$ and $f_S \in [0,1)$. 
We define  $n_T$ and $f_T$ analogously. Therefore, we have that:
\begin{equation}\label{eq-precision}
v_i(S)-n_S-f_S \ge v_i(T)-n_T-f_T     
\end{equation}
Thus, we now have that the integer part of the left side is $v_i(S)-n_S$ and the integer part of the right side is $v_i(T)-n_T$. Since both  $f_S,f_T\in [0,1)$, then their difference satisfies that $|f_T-f_S|<1$, so for \autoref{eq-precision} to hold, it has to be the case that $v_i(S)-n_S \ge v_i(T)-n_T$. 
We remind that $n_S=\lfloor p_S \rfloor$ and $n_T=\lfloor p_T \rfloor$, so we have that   $v_i(S)-\lfloor p_S \rfloor \ge v_i(T)-\lfloor p_T \rfloor$.
Moreover, $n_S$ and $n_T$ are in fact the menu prices in the new mechanism $\mathcal{A}'$, so this mechanism is in fact truthful, as needed.
\end{proof}

\end{document}